\documentclass[a4paper, 11pt]{article}
\usepackage[left=1.5cm, right=1cm, top=0.5cm, bottom=1cm, includeheadfoot]{geometry}
\usepackage[english]{babel}
\usepackage[utf8]{inputenc}
\usepackage[authoryear,round]{natbib}
\usepackage{framed}
\usepackage{amsmath,amsfonts,amssymb,amsthm,mathtools,bbm}
\usepackage{booktabs, multirow}
\usepackage{adjustbox}
\usepackage{float}
\usepackage{listings}
\usepackage{color}
\usepackage{url}
\usepackage{hyperref}
\usepackage{xpatch}
\usepackage{xcolor}
\usepackage{listings}
\usepackage{realboxes}
\usepackage{marginnote}
\usepackage{changepage}
\usepackage{fancyhdr}
\usepackage{algorithm, algpseudocode}
\usepackage{afterpage}
\usepackage{tikz}
\usetikzlibrary{bayesnet, decorations.pathmorphing} 
\usepackage[many]{tcolorbox}

\newtheorem{theorem}{Theorem}
\newtheorem{proposition}{Proposition}
\newtheorem{lemma}{Lemma}
\newtheorem{remark}{Remark}
\newtheorem{corollary}{Corollary}
\newtheorem{conjecture}[lemma]{Lemma}

\date{}

\newcommand{\orcid}[1]{\href{https://orcid.org/#1}{\includegraphics[height=\fontcharht\font`A]{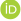}}}

\hypersetup{
	colorlinks=true,
	linkcolor=blue,
	citecolor=magenta,   
	urlcolor=cyan,
	pdfpagemode=FullScreen,
}

\DeclareMathOperator{\EX}{\mathbb{E}}
\DeclareMathAlphabet{\mathpzc}{OT1}{pzc}{m}{it}
\DeclareRobustCommand{\stirling}{\genfrac[]{0pt}{}}
\DeclareRobustCommand{\stirlings}{\genfrac\{\}{0pt}{}}
\definecolor{applegreen}{rgb}{0.55, 0.71, 0.0}
\definecolor{darkolivegreen}{rgb}{0.33, 0.42, 0.18}
\definecolor{codecolors}{RGB}{ 219,225,226}
\definecolor{codecolorsinline}{RGB}{ 230.1000,  234.3000 , 235.0000}

\tcolorboxenvironment{lstlisting}{
	spartan,
	colframe=black,
	left=1mm,right=1mm,top=-2mm,bottom=-2mm,
	colback=codecolors
}

\lstdefinestyle{code}{
	backgroundcolor = \color{codecolors},
	mathescape=true,
	keywordstyle=\color{blue}\bf,
	commentstyle=\color{darkolivegreen},
	stringstyle=\color{darkolivegreen},
	emphstyle=\color{magenta},	
	basicstyle=\footnotesize\ttfamily,
	numbers=left, 
	breaklines=true
}
\lstdefinestyle{codeinput}{
	mathescape=true,
	keywordstyle=\color{blue}\bf,
	commentstyle=\color{darkolivegreen},
	stringstyle=\color{darkolivegreen},
	emphstyle=\color{magenta},	
	basicstyle=\footnotesize\ttfamily,
	breaklines=true
}

\lstdefinestyle{codeinputinline}{
	backgroundcolor = \color{codecolorsinline},
	mathescape=true,
	keywordstyle=\color{blue}\bf,
	commentstyle=\color{green},
	stringstyle=\color{darkolivegreen},
	emphstyle=\color{magenta},	
	basicstyle=\normalsize\ttfamily,
}

\lstdefinestyle{codeoutput}{
	basicstyle=\footnotesize\ttfamily,
	mathescape=true,
}

\makeatletter
\renewcommand{\env@cases}[1][@{}l@{\quad}l@{}]{%
	\let\@ifnextchar\new@ifnextchar
	\left\lbrace
	\def\arraystretch{1.2}%
	\array{#1}%
}
\makeatother

\newcommand{\factoredgeselect}[4][]{ %
	\foreach \f in {#3} { %
		\foreach \x in {#2} { %
			\path (\x) edge[-,#1] (\f) ; 
		} ;
		\foreach \y in {#4} { %
			\path (\f) edge[-|, >={triangle 45}, decorate,decoration={snake, amplitude=1mm, segment length=5mm}, #1] (\y) ; 
		} ;
	} ;
}

\title{Exact computation of posterior distribution of mixture weights in hierarchical Bayesian models}
\author{
	\begin{tabular}{c}
	Georgy Meshcheryakov \\
		\hyperlink{mailto:iam@georgy.top}{iam@georgy.top}
	\end{tabular}
}
\begin{document}

\maketitle

\begin{abstract}
	Hierarchical mixture models are a powerful tool for modeling data generated from heterogeneous sources, particularly when the mixing proportion $\boldsymbol{w}$ itself is treated as a random variable with a Dirichlet or Beta-Liouville prior. Such models are widely employed in scenarios where uncertainty in class membership or data-generating processes must be probabilistically quantified. This paper studies the exact marginalization of the mixture weight \( \boldsymbol{w} \sim \mathpzc{BL}(\boldsymbol{\alpha}, \gamma) \). For the two-component case we give an $O(n^2)$ dynamic program -- and an $O(n \log^2 n)$ FFT variant -- for the marginal likelihood, and show that the exact posterior of the weight is a finite mixture of Beta distributions, delivering closed-form posterior summaries, credible intervals and per-observation local false-discovery rates without any sampling. For $K \ge 3$ components we give an exact joint dynamic program. The closed form also yields structural guarantees that approximations cannot match: the posterior over the latent count of non-null observations is log-concave and unimodal (a consequence of Newton's inequalities for the elementary symmetric polynomials), its moments are exact ratios of the marginal likelihood at shifted hyperparameters, and it is stochastically monotone in the data. The algorithms are exact and deterministic, agreeing with brute-force, arbitrary-precision and Monte-Carlo references to round-off. Their practical value is not accuracy over a sampler -- which targets the same posterior -- but that the closed-form posterior is \emph{calibrated} in the small-sample, rare-signal regime where cheap approximations are not (across our simulations the exact $95\%$ credible interval holds nominal coverage while a logit-Laplace interval drops to $78\%$), and that the empirical-Bayes evidence is a smooth, deterministic objective rather than a noisy Monte-Carlo one. It reproduces MCMC posteriors without tuning or convergence diagnostics, at a fraction of the cost for small-to-moderate $n$. The gain is largest in the small-sample regime the method is built for: on a real multilevel meta-analysis, a pathway-level dysregulation analysis of leukemia gene expression, and a leukemia-derived gene-panel benchmark with known ground truth, the exact interval for the signal proportion is calibrated where EM gives no interval at all (collapsing to a boundary) and Gaussian/Laplace approximations mis-cover, and it is two orders of magnitude faster than the sampler that would match it. On the large prostate-cancer benchmark, where every method has ample data, it agrees with \texttt{locfdr} on the gene ranking while adding a posterior interval for the null proportion.
\end{abstract}

\section{Introduction}

\begin{figure}[t]
	\centering
	\begin{tikzpicture}
		
		\node[const] (a) {\( \boldsymbol{\alpha} \)}; 
		\node[const, right=of a] (b) {\( \gamma \)};

    	\factor[below=of a, xshift=0.5cm] {Beta} {right:$\mathpzc{BL}$} {} {} ; %
		\node[latent, below=of Beta] (w) {$\boldsymbol{w}$}; 
		
		\factor[right=of w] {Cat} {below:$\mathpzc{Cat}$} {} {};
		\node[const, right=of Cat] (FG) {\( [\mathpzc{F}, \mathpzc{G}] \)}; 
		
		\node[obs, right=of FG] (x) {\( x \)}; 
		
		\plate {subpop} {(x) (Cat) (FG)} {\( n = 1 \dots N_m \)}; 
		\plate {pop} {(subpop) (w) (Beta)} {\( m = 1 \dots M \)};		

		\factoredge {a,b} {Beta} {w};
		\factoredgeselect {w} {Cat} {FG};
		\edge {FG}  {x};

	\end{tikzpicture}

	\caption{Model diagram in the plate notation.}
	\label{fig:model}
\end{figure}

Assume that we are dealing with a stratified dataset of $M$ subpopulations/stratas $\mathpzc{D} = \left\{ \mathpzc{X}^{(i)}\right\}_{i=1}^M$, where $i$-th "strata" has a size of $N_i$: $\mathpzc{X}^{(i)} = \{x_k\}_{k=1}^{N_i}$. For each strata, for e.g. for $\mathpzc{X}_i$, it it is known that a data sample $x$ from $i$-th strata orginated either from $\mathpzc{F}$ of $\mathpzc{G}$ distributions with a probability of the former being $w_i $. That's it,  for $i$-th strata, 
$$ \mathpzc{X}^{(i)} \sim \mathpzc{Mixture}(\mathpzc{F}, \mathpzc{G}, w_i), w_i \sim \mathpzc{Beta}$$
where mixture is defined in such a way that a PDF of each sample is simply $$h(x|\theta) = w_i f(x|\theta) + (1 - w_i) g(x|\theta),$$
where $\theta$ is some arbitrary parameter vector, $f$ is the PDF of $\mathpzc{F}$, $g$ is the PDF of $\mathpzc{G}$. Or, in a plate-notation, see \autoref{fig:model}. Such models frequently arise in different fields, for example

\begin{enumerate}
 \item \textbf{Multiple Testing and False Discovery Rate (FDR) Estimation}: Two-component Beta mixtures are extensively used in high-dimensional hypothesis testing, where one component represents the null distribution (often uniform or a known parametric form) and the other captures alternative hypotheses \citep{example_fdr_2012};
 \item \textbf{A/B testing and p-curve mixture models:} closely related to the previous example, these models explicitly separate items into latent subgroups (e.g., responders vs. non-responders) and estimate both prevalence $w$ and subgroup-specific effects \citep{example_pcurve_2025}. There, $\mathpzc{G}$ is null-effect distribution and $\mathpzc{F}$ is some \textit{success} distribution.
\end{enumerate}

There are two most popular methods to deal with mixture weights, Expected Maximization (EM) and MCMC-based approaches, both harboring their own drawbacks. Namely, the former: 

\begin{itemize}
	\item The classical degeneracies of EM for mixtures -- an unbounded log-likelihood and sharp likelihood ridges, mitigated by ad-hoc corrections such as rescaling \([0,1]\) to \([\varepsilon,1-\varepsilon]\) \citep{Schrder2017,Dias2004,example_methylation_2024} -- arise when the component parameters are estimated \emph{jointly} with the weight. They do not affect the fixed-component, weight-only problem studied here, whose profile objective in $w$ is well behaved; we make no claim of superiority on this ground;
	\item Lack of uncertainty quantification: EM returns a point estimate of $w$ and no posterior, so it yields no credible interval for the mixing proportion and cannot propagate that uncertainty into model comparison or into an interval for the global null fraction $\pi_0$. When the true weight sits at a boundary ($0$ or $1$) the estimate is a point mass with no interval at all. (Cheap Bayesian approximations recover an interval but, as we show in \autoref{sec:numerical}, a Gaussian/Laplace approximation is badly miscalibrated in exactly the small-sample, rare-signal regime of interest.)
\end{itemize}

As for MCMC:
\begin{itemize}
	\item The computational complexity of MCMC grows quadratically with the number of stratas and observations, as each strata introduces an extra parameter and each iteration requires reevaluating the likelihood for all data points. In high-throughput applications such as large-scale multiple testing with tens of thousands of hypotheses  or p-curve meta-analyses with thousands of studies, MCMC is rendered prohibitively slow;
	\item Storing posterior samples for	$w$ and other latent variables across thousands of iterations becomes infeasible for large $M$ and $N_i$;
	\item Hierarchical models with strongly correlated parameters (e.g., \( w \) and component parameters \( \theta_F, \theta_G \)) suffer from poor mixing and slow convergence. Diagnostics like effective sample size (ESS) or Gelman-Rubin statistics often reveal inadequate exploration of the posterior, especially when \( w \) is near 0 or 1.
\end{itemize}
 These limitations call for algorithms that tackle both problems at once: quantifying the uncertainty in $w$ and scaling to many strata. Consider a generic two-component classification task: a practitioner must simultaneously infer the posterior distribution of \( w \) to quantify the global proportion of one component and assign item-specific probabilities to guide downstream decisions to assess robustness.

The two-component instance of this model, with a known null component, is the \emph{two-groups model} of large-scale inference \citep{Efron2008}: there the per-observation posterior null probability is the local false discovery rate and the global null fraction $\pi_0$ is the complement of the mixing weight. A substantial literature estimates these quantities -- Efron's empirical-null and \texttt{locfdr} machinery \citep{Efron2004,Efron2007}, Storey's $\pi_0$ estimator \citep{Storey2002}, Strimmer's \texttt{fdrtool} \citep{Strimmer2008}, the semiparametric hierarchical mixture of \citet{Newton2004}, and adaptive-shrinkage empirical Bayes \citep{Stephens2017} -- but each returns point estimates or tail areas. What has been missing, and what we supply, is the \emph{exact} posterior of the mixing weight itself: a closed-form finite mixture of Beta distributions that yields calibrated credible intervals for the proportion and per-observation local FDRs in a single deterministic pass, without sampling.

 We start formulating the problem by writing down the joint likelihood of the dataset $\mathpzc{D}$ and mixture weights vector $\mathbf{w} = \{w_i\}_{i=1}^M$:
$$\widetilde{L}(\mathpzc{D}, \mathbf{w}|\theta) = \prod_{i=1}^M L(\mathpzc{X}^{(i)}, \mathbf{w}|\theta),$$
where $L(\mathpzc{X}^{(i)}, \mathbf{w}|\theta)$ is a sub-likelihood for the $i$-th strata, $\mathbf{\theta}$ is some parameters vector. Note that the if we treat the model in a MAP way, number of parameters grows linearly with a number of stratas, and that becomes a problem when $M$ is large. 

Instead of using joint likelihood as is, we seek to find the marginal likelihood (that, in turn, would allow us to easily compute the posterior distribution of $\mathbf{w}$ due to the Bayes theorem):
$$ \widetilde{L}(\mathpzc{D}|\theta) = \int \widetilde{L}(\mathpzc{D}|\theta, \mathbf{w}) q(\mathbf{w}|\alpha, \beta) d\mathbf{w} =  \prod_{i=1}^M \int_{0}^{1}\widetilde{L}(\mathpzc{X}^{(i)}|\theta, w_i) q(w_i|\alpha, \beta) dw_i,$$
where $q(w_i|\alpha, \beta) = \frac{1}{B(\alpha, \beta)} w^{\alpha - 1} (1 - w)^{\beta - 1}$ is PDF of the $\mathpzc{Beta}(\alpha, \beta)$ distribution.
This paper is dedicated to algorithms to compute integrals $\int_{0}^{1}\widetilde{L}(\mathpzc{X}^{(i)}|\theta, w_i) q(w_i|\alpha, \beta) dw_i$. The roadmap of the paper is as follows:

\begin{enumerate}
	\item We derive a specialized $O(n^2)$ dynamic-programming (DP) algorithm for the 2-component mixture with a $\mathpzc{Beta}$-distributed weight, evaluated stably in the log-domain; a Fast-Fourier-Transform pairing scheme attains a lower $O(n \log^2 n)$ complexity but, operating necessarily in the linear domain, is numerically reliable only for small $n$ and is reported as a complexity result, not a recommended backend;
	\item We show that the exact posterior of the weight is a finite mixture of Beta distributions, yielding closed-form posterior means, variances, credible intervals and per-observation posterior class probabilities (local FDR) without sampling;
	\item We establish three structural properties of this posterior that its fast approximations do not share: the posterior over the latent count of non-null observations is log-concave and unimodal by Newton's inequalities, its moments are exact ratios of the marginal likelihood at shifted hyperparameters, and it is stochastically monotone in the data;
	\item We derive the corresponding algorithm for $K$-component mixtures with a $\mathpzc{BL}$-distributed weight vector, where the multivariate coefficient array does not factor across components, giving an exact joint dynamic program of cost $O(n^{K-1})$ in storage;
	\item We validate all algorithms against brute-force, quadrature, arbitrary-precision and Monte-Carlo references, benchmark them against EM and MCMC, and demonstrate the method on the standard prostate-cancer microarray data and in a hierarchical multi-study setting.
\end{enumerate}

\section{The special case of 2-component mixtures}\label{sec:twocomp}
Without a loss of generality and for brevity purposes, we consider a particular strata $\mathpzc{X}^{(j)}$ denoted as $\mathpzc{X}$ with a number of observations $n$, and the only mixture parameter be denoted as $w$.
Here, we start off by analyzing a special case of $K = 2$, where $w \sim \mathpzc{Beta}(\alpha, \beta)$. Let's write down the integral of interest:

\begin{equation*}
	\begin{split}
	\int_{0}^{1}\widetilde{L}(\mathpzc{X}|\theta, \mathbf{w}) q(\mathbf{w}|\alpha, \beta) dw = \int_{0}^{1} \left(\prod_{i=1}^n (w f(x_i|\theta) + (1 - w)g(x_i|\theta)) \right) \overbrace{\frac{w^{\alpha - 1} (1 -w)^{\beta - 1}}{B(\alpha, \beta)}}^{q(w|\alpha, \beta)} dw = \\ = \underbrace{\frac{1}{B(\alpha, \beta)}\left(\prod_{i=1}^n f(x_i|\theta)  \right)}_{L_0(\mathpzc{X}|\theta)} \int_{0}^{1} w^n \left(\prod_{i=1}^n \left(1 + \frac{(1 - w)}{w}\underbrace{\frac{g(x_i|\theta)}{f(x_i|\theta)}}_{t_i}\right) \right) {w^{\alpha - 1} (1 -w)^{\beta - 1}} dw = \\ = L_0(\mathpzc{X}|\theta) \int_{0}^{1} \left(\prod_{i=1}^n \left(1 + \frac{1-w}{w} t_i\right) \right) w^{n + \alpha - 1} (1 - w)^{\beta - 1} dw.
	\end{split}
\end{equation*}
The product term can be expanded into a $n$-variate sum from $0$ to $1$:
\begin{equation}
\prod_{i=1}^n \left(1 + \frac{1-w}{w}t_i\right) = \sum_{k_1 = 0}^1 \sum_{k_2=0}^1  \hdots \sum_{k_n = 0}^1 \left(\frac{1-w}{w}\right)^{\sum_{i=1}^n k_{i}} \prod_{i=1}^n t_{i}^{k_{i}} = \sum_{\mathclap{\mathbf{0}_n \le \mathbf{k} \le \mathbf{1}_n}} \left(\frac{1-w}{w}\right)^{|\mathbf{k}|} \mathbf{t}^\mathbf{k}, \label{eq:polynomial}
\end{equation}
In the last step, we switched to the \href{https://en.wikipedia.org/wiki/Multi-index_notation}{multi-index notation}. Then, the integral and the sum are exchanged and the famous Euler integral of the first kind is easily recognized:
\begin{equation*}
	\begin{split}
	 L_0(\mathpzc{X}|\theta) \int_{0}^{1} \sum_{\mathclap{\mathbf{0}_n \le \mathbf{k} \le \mathbf{1}_n}} \mathbf{t}^\mathbf{k} \left(\frac{1-w}{w}\right)^{|\mathbf{k}|}  w^{n + \alpha - 1} (1 - w)^{\beta - 1} dw  = 
	  L_0(\mathpzc{X}|\theta) \int_{0}^{1} \sum_{\mathclap{\mathbf{0}_n \le \mathbf{k} \le \mathbf{1}_n}} \mathbf{t}^\mathbf{k}   w^{n - |\mathbf{k}| + \alpha - 1} (1 - w)^{|\mathbf{k}| + \beta - 1} dw = \\ =
	   L_0(\mathpzc{X}|\theta) \sum_{\mathclap{\mathbf{0}_n \le \mathbf{k} \le \mathbf{1}_n}} \mathbf{t}^\mathbf{k}   \int_{0}^{1} w^{n - |\mathbf{k}| + \alpha - 1} (1 - w)^{|\mathbf{k}| + \beta - 1} dw =  L_0(\mathpzc{X}|\theta) \sum_{\mathclap{\mathbf{0}_n \le \mathbf{k} \le \mathbf{1}_n}} \mathbf{t}^\mathbf{k}  B(n - |\mathbf{k}| + \alpha, |\mathbf{k}| + \beta) = \\
	   = L_0(\mathpzc{X}|\theta) \sum_{\mathclap{\mathbf{0}_n \le \mathbf{k} \le \mathbf{1}_n}} \mathbf{t}^\mathbf{k} \frac{\Gamma(n - |\mathbf{k}| + \alpha) \Gamma(|\mathbf{k}| + \beta)}{\Gamma(n + \alpha + \beta)}=  \frac{L_0(\mathpzc{X}|\theta)}{\Gamma(n + \alpha + \beta)} \underbrace{\sum_{\mathclap{\mathbf{0}_n \le \mathbf{k} \le \mathbf{1}_n}} \mathbf{t}^\mathbf{k} \Gamma(n - |\mathbf{k}| + \alpha) \Gamma(|\mathbf{k}| + \beta)}_{S_{\mathbf{t}}(n, \alpha, \beta)} = \\= \frac{L_0(\mathpzc{X}|\theta)}{\Gamma(n + \alpha + \beta)}  S_{\mathbf{t}}(n, \alpha, \beta)
	\end{split}
\end{equation*}

There, the function $S_{\mathbf{t}}(n, \alpha, \beta)$ is defined so that its first argument doubles as the dimension of the summation. For $j\le n$,

\begin{equation}
S_{\mathbf{t}}(j, \alpha, \beta) = \sum_{\mathclap{\mathbf{0}_j \le \mathbf{k} \le \mathbf{1}_j}} \mathbf{t}_{:j}^\mathbf{k} \Gamma(j - |\mathbf{k}| + \alpha) \Gamma(|\mathbf{k}| + \beta), \label{eq:combfun}
\end{equation}

where $\mathbf{t}_{:j}$  is a vector of length $j$ formed from first $j$ elements $\mathbf{t}$ (if $j=n$ then $\mathbf{t}_{:n} = \mathbf{t}$).

The straightforward computation of $S_{\mathbf{t}}(n, \alpha, \beta)$ requires computing $\color{red}{2^n}$ terms. That's infeasible, but we can reduce the complexity of the problem significantly. 
\begin{proposition}
	The following recurrence equation holds:
	\begin{equation}\label{eq:marg_rec1}
	S_{\mathbf{t}}(n, \alpha, \beta) = S_{\mathbf{t}}(n - 1, \alpha + 1, \beta) + t_n S_{\mathbf{t}}(n - 1, \alpha, \beta + 1),~\text{for all }n \ge 1,
	\end{equation}

	with the base case $S_{\mathbf{t}}(0, \alpha, \beta) = \Gamma(\alpha)\Gamma(\beta)$, where $t_n = g(x_n\mid\theta)/f(x_n\mid\theta)$ is the $n$-th likelihood ratio.
\end{proposition}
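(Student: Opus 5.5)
The plan is to split the defining sum \eqref{eq:combfun} with $j=n$ according to the value of the last index $k_n\in\{0,1\}$, and then recognize each of the two resulting sums as an instance of $S_{\mathbf{t}}(n-1,\cdot,\cdot)$ with shifted hyperparameters. Write $\mathbf{k}=(\mathbf{k}',k_n)$ with $\mathbf{k}'\in\{0,1\}^{n-1}$, so that $|\mathbf{k}|=|\mathbf{k}'|+k_n$ and $\mathbf{t}^{\mathbf{k}}=\mathbf{t}_{:n-1}^{\mathbf{k}'}\,t_n^{k_n}$.

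\textbf{Terms with $k_n=0$.} Here $|\mathbf{k}|=|\mathbf{k}'|$ and the summand becomes
\[
\mathbf{t}_{:n-1}^{\mathbf{k}'}\,\Gamma\bigl((n-1)-|\mathbf{k}'|+(\alpha+1)\bigr)\,\Gamma(|\mathbf{k}'|+\beta),
\]
where I have only regrouped $n+\alpha=(n-1)+(\alpha+1)$. Summing over $\mathbf{k}'\in\{0,1\}^{n-1}$ gives exactly $S_{\mathbf{t}}(n-1,\alpha+1,\beta)$.

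\textbf{Terms with $k_n=1$.} Here $|\mathbf{k}|=|\mathbf{k}'|+1$, and the factor $t_n$ pulls out of the sum. The Gamma arguments become $n-|\mathbf{k}'|-1+\alpha=(n-1)-|\mathbf{k}'|+\alpha$ and $|\mathbf{k}'|+1+\beta=|\mathbf{k}'|+(\beta+1)$. The remaining sum is therefore $S_{\mathbf{t}}(n-1,\alpha,\beta+1)$, with prefactor $t_n$.

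Adding the two parts gives \eqref{eq:marg_rec1}. The base case follows from the convention that $\{0,1\}^0$ contains only the empty multi-index, with $|\mathbf{k}|=0$ and empty product $1$. Hence $S_{\mathbf{t}}(0,\alpha,\beta)=\Gamma(\alpha)\Gamma(\beta)$.

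\textbf{Remaining checks.} The argument is purely finite bookkeeping, so I do not expect a real obstacle. Two details need to be stated explicitly:
\begin{itemize}
\item The recursion only ever uses prefixes $\mathbf{t}_{:j}$ of the same vector $\mathbf{t}$. This is why the definition \eqref{eq:combfun} was phrased for general $j$ and arbitrary positive $\alpha,\beta$: the shifted hyperparameters $\alpha+1$ and $\beta+1$ stay in the domain where every Gamma factor is finite.
\item The vector $\mathbf{t}$ itself is unchanged by the shifts, so the dependence on $t_n$ enters only through the single factor pulled out in the $k_n=1$ terms.
\end{itemize}
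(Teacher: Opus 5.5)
Your proof is correct and follows essentially the same route as the paper: split the defining sum on $k_n\in\{0,1\}$, then regroup the Gamma arguments to recognize the two pieces as $S_{\mathbf{t}}(n-1,\alpha+1,\beta)$ and $t_n\,S_{\mathbf{t}}(n-1,\alpha,\beta+1)$. Your explicit check of the base case via the empty multi-index is a small addition that the paper leaves implicit.
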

\begin{proof}
	Let's unfold the sum in $S_{\mathbf{t}}(n, \alpha, \beta)$ with respect to the latest summation operation:
	\begin{equation*}
		\begin{split}
			S_{\mathbf{t}}(n, \alpha, \beta) = \sum_{\mathclap{\mathbf{0}_n \le \mathbf{k} \le \mathbf{1}_n}} \mathbf{t}^\mathbf{k} \Gamma(n - |\mathbf{k}| + \alpha) \Gamma(|\mathbf{k}| + \beta) = \sum_{\mathclap{\mathbf{0}_{n - 1} < \mathbf{k} < \mathbf{1}_{n - 1}}} ~~~~~~\sum_{k_n=0}^1 \mathbf{t}^\mathbf{k} t_n^{k_n} \Gamma(n - |\mathbf{k}| - k_n + \alpha) \Gamma(|\mathbf{k}| + k_n + \beta) = \\
			= \sum_{\mathclap{\mathbf{0}_{n - 1} < \mathbf{k} < \mathbf{1}_{n - 1}}} \mathbf{t}^\mathbf{k}  \Gamma(n - |\mathbf{k}| + \alpha ) \Gamma(|\mathbf{k}| + \beta)   + t_n \mathbf{t}^\mathbf{k}  \Gamma(n - 1 - |\mathbf{k}| + \alpha ) \Gamma(|\mathbf{k}| + \beta + 1) = \\ = \sum_{\mathclap{\mathbf{0}_{n - 1} < \mathbf{k} < \mathbf{1}_{n - 1}}} \mathbf{t}^\mathbf{k}  \Gamma(n - |\mathbf{k}| + \alpha + 1 - 1) \Gamma(|\mathbf{k}| + \beta) +  t_n \sum_{\mathclap{\mathbf{0}_{n - 1} < \mathbf{k} < \mathbf{1}_{n - 1}}} \mathbf{t}^\mathbf{k}  \Gamma(n - 1 - |\mathbf{k}| + \alpha ) \Gamma(|\mathbf{k}| + \beta + 1) = \\ = \sum_{\mathclap{\mathbf{0}_{n - 1} < \mathbf{k} < \mathbf{1}_{n - 1}}} \mathbf{t}^\mathbf{k}  \Gamma((n - 1) - |\mathbf{k}| + (\alpha + 1)) \Gamma(|\mathbf{k}| + \beta) + t_n S_{\mathbf{t}}(n - 1, \alpha, \beta + 1) = \\ =S_{\mathbf{t}}(n-1, \alpha + 1, \beta) + t_n S_{\mathbf{t}}(n - 1, \alpha, \beta + 1).
		\end{split}
	\end{equation*}
\end{proof}
\begin{remark}
	This identity has an interprepable combinatorial meaning in terms of ``\textit{weird balls experiment}`` (I coined the term, but there is a great chance that it exists in some obscure combinatorial literature under a different name). 
	
	There are $n$ \textcolor{blue}{blue balls} and $n + \alpha + \beta + 2$ \textcolor{red}{red balls}. All \textcolor{blue}{blue balls} are distinct, e.g. labeled. An $i$-th \textcolor{blue}{blue ball} has a weight $t_i$, whereas all \textcolor{red}{red balls} have a weight equal to $1$.
	There is also a line of scales with the capability of weighting balls. The experiment then goes as follows:
	
	\begin{enumerate}
		\item We select a random subset of \textcolor{blue}{blue balls}, let's denote the  selected \textcolor{blue}{blue balls} with indicator vector of size $n$ as $\mathbf{k}$;
		\item We take $\beta + 1$ \textcolor{red}{red balls} and label them arbitrary;
		\item We group selected \textcolor{blue}{blue} and \textcolor{red}{red} balls together and then put them on scales consequently (one after another, with no gaps, starting at the first position) in an arbibtrary order;
		\item We sample $n - |\mathbf{k}| + \alpha + 1$ \textcolor{red}{red balls} and label each one arbitrarily;
		\item We put them on scales consequently (starting with the first available position);
		\item We write down scales' measurements and compute their product $\xi = \mathbf{t}^{\mathbf{k}}$ (note that \textcolor{red}{red balls} do not contribute to the product as their weight is $1$).
	\end{enumerate} 
	
	Then, $S_{\mathbf{t}}(n, \alpha, \beta)$ is equal to the sum of all possible $\xi$ values (remember that $\Gamma(x) = (x - 1)!$ for integers, which is a number of permutations). Moreover, the above proposition starts to make a combinatorial sense: indeed, $S_{\mathbf{t}}(n, \alpha, \beta)$ is a sum of all cases where we did not select $n$-th \textcolor{blue}{blue ball} plus those where we did. The former is equal to $S_{\mathbf{t}}(n - 1, \alpha + 1, \beta)$ (remember that we take a \textcolor{red}{red ball} for each non-selected \textcolor{blue}{blue ball}) and the later is $t_n S_{\mathbf{t}}(n - 1, \alpha, \beta + 1)$ (we add an extra mandatory \textcolor{red}{red ball} to the pile of \textcolor{blue}{blue} and \textcolor{red}{red} balls, while correcting the $\xi$ value for $t_n$, as the new \textcolor{red}{red ball} as weight of $1$).
\end{remark}
This allows us to compute $S_{\mathbf{t}}(n, \alpha, \beta)$ with $O(n^{\color{orange}{3}})$ complexity via the dynamic programming. It is notoriously better than $O(\textcolor{red}{2^n})$, but we can do better.

\begin{proposition}
	The following relation holds:
	\begin{equation}\label{eq:marg_ab}
		S_{\mathbf{t}}(n, \alpha + 1, \beta) = (n + \alpha + \beta) S_{\mathbf{t}}(n, \alpha, \beta) - S_{\mathbf{t}}(n, \alpha, \beta + 1)
	\end{equation}

\end{proposition}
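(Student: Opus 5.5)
The plan is to prove \eqref{eq:marg_ab} termwise, directly from the definition \eqref{eq:combfun}, using only the functional equation $\Gamma(z+1)=z\Gamma(z)$. The point is that the three quantities $S_{\mathbf{t}}(n,\alpha+1,\beta)$, $S_{\mathbf{t}}(n,\alpha,\beta)$ and $S_{\mathbf{t}}(n,\alpha,\beta+1)$ are sums over the same index set $\mathbf{0}_n\le\mathbf{k}\le\mathbf{1}_n$. They also carry the same monomial weights $\mathbf{t}^{\mathbf{k}}$. So it suffices to show that the Gamma-coefficients satisfy the identity for each fixed $\mathbf{k}$.

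Fix $\mathbf{k}$ and write $m=|\mathbf{k}|$, $a=n-m+\alpha$, $b=m+\beta$. The three coefficients are then:
\begin{itemize}
\item $\Gamma(a+1)\Gamma(b)$ for $S_{\mathbf{t}}(n,\alpha+1,\beta)$;
\item $\Gamma(a)\Gamma(b)$ for $S_{\mathbf{t}}(n,\alpha,\beta)$;
\item $\Gamma(a)\Gamma(b+1)$ for $S_{\mathbf{t}}(n,\alpha,\beta+1)$.
\end{itemize}
The key observation is that $a+b=n+\alpha+\beta$ does not depend on $\mathbf{k}$. This is what allows the factor $(n+\alpha+\beta)$ to be pulled outside the sum. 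Using the functional equation twice,
$$\Gamma(a+1)\Gamma(b)=a\,\Gamma(a)\Gamma(b)=(a+b)\Gamma(a)\Gamma(b)-b\,\Gamma(a)\Gamma(b)=(n+\alpha+\beta)\Gamma(a)\Gamma(b)-\Gamma(a)\Gamma(b+1).$$

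Multiplying by $\mathbf{t}^{\mathbf{k}}$ and summing over all $\mathbf{k}$ gives exactly \eqref{eq:marg_ab}, by linearity of the finite sum. No appeal to the earlier recurrence \eqref{eq:marg_rec1} is needed.

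An alternative that confirms the result is the integral representation from the derivation preceding \eqref{eq:combfun}:
$$S_{\mathbf{t}}(n,\alpha,\beta)=\Gamma(n+\alpha+\beta)\int_0^1 P(w)\,w^{n+\alpha-1}(1-w)^{\beta-1}\,dw,$$
where $P(w)=\prod_{i=1}^n\bigl(1+\tfrac{1-w}{w}t_i\bigr)$. Substituting $w=1-(1-w)$ into the integrand and using $\Gamma(n+\alpha+\beta+1)=(n+\alpha+\beta)\Gamma(n+\alpha+\beta)$ yields the same identity.

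There is no genuine obstacle here. The only step needing care is recognising that $a+b$ is constant in $\mathbf{k}$. It is also worth noting that all arguments of $\Gamma$ stay positive whenever $\alpha,\beta>0$, so every term is well defined.
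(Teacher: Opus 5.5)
Your proof is correct and is essentially the paper's argument: both expand $\Gamma(n-|\mathbf{k}|+\alpha+1)$ by the functional equation and rewrite the factor $n-|\mathbf{k}|+\alpha$ termwise as $(n+\alpha+\beta)-(|\mathbf{k}|+\beta)$. The paper makes this split in two steps, first $(n+\alpha)-|\mathbf{k}|$ and then $|\mathbf{k}|=(|\mathbf{k}|+\beta)-\beta$, and your integral-representation check is also valid but not needed.
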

\begin{proof}
	\begin{equation*}
		\begin{split}
		S_{\mathbf{t}}(n, \alpha + 1, \beta) = \sum_{\mathclap{\mathbf{0}_n \le \mathbf{k} \le \mathbf{1}_n}} \mathbf{t}^\mathbf{k} \Gamma(n - |\mathbf{k}| + \alpha + 1) \Gamma(|\mathbf{k}| + \beta) = \sum_{\mathclap{\mathbf{0}_n \le \mathbf{k} \le \mathbf{1}_n}} \mathbf{t}^\mathbf{k} \Gamma(n - |\mathbf{k}| + \alpha) (n - |\mathbf{k}| + \alpha) \Gamma(|\mathbf{k}| + \beta) = \\
		= (n + \alpha)\sum_{\mathclap{\mathbf{0}_n \le \mathbf{k} \le \mathbf{1}_n}} \mathbf{t}^\mathbf{k} \Gamma(n - |\mathbf{k}| + \alpha) \Gamma(|\mathbf{k}| + \beta) - \sum_{\mathclap{\mathbf{0}_n \le \mathbf{k} \le \mathbf{1}_n}} \mathbf{t}^\mathbf{k} |\mathbf{k}| \Gamma(n - |\mathbf{k}| + \alpha) \Gamma(|\mathbf{k}| + \beta) = \\ = (n + \alpha) S_{\mathbf{t}}(n, \alpha, \beta) - \sum_{\mathclap{\mathbf{0}_n \le \mathbf{k} \le \mathbf{1}_n}} \mathbf{t}^\mathbf{k} (|\mathbf{k}| + \beta - \beta) \Gamma(n - |\mathbf{k}| + \alpha) \Gamma(|\mathbf{k}| + \beta) = (n + \alpha + \beta) S_{\mathbf{t}}(n, \alpha, \beta) - \\ - \sum_{\mathclap{\mathbf{0}_n \le \mathbf{k} \le \mathbf{1}_n}} \mathbf{t}^\mathbf{k}  \Gamma(n - |\mathbf{k}| + \alpha) \Gamma(|\mathbf{k}| + \beta + 1) = (n + \alpha + \beta) S_{\mathbf{t}}(n, \alpha, \beta) - S_{\mathbf{t}}(n, \alpha, \beta + 1)
		\end{split}
	\end{equation*}
\end{proof}
\begin{remark}
	Again, it makes combinatorial sense under the vein of the ``weird balls experiment``. It suffices to see that $S_{\mathbf{t}}(n, \alpha + 1, \beta) + S_{\mathbf{t}}(n, \alpha, \beta + 1) = (n + \alpha + \beta) S_{\mathbf{t}}(n, \alpha, \beta) $.
\end{remark}

Combining \autoref{eq:marg_rec1} and \autoref{eq:marg_ab}, we obtain the following formula:

\begin{equation}
	S_{\mathbf{t}}(n, \alpha, \beta) = (n + \alpha + \beta - 1)S_{\mathbf{t}}(n - 1, \alpha, \beta) + (t_n - 1)S_{\mathbf{t}}(n - 1, \alpha, \beta + 1),
\end{equation}
which is $O(n^{\color{green}{2}})$. It appears that we can't improve upon this asymptotics, but we can reduce the number of operations two-fold.

\begin{lemma}
	Let $P_{n}(x)$ be a polynomial with roots at integer values $[0, \dots, n - 1]$, known as a \textit{falling factorial}:
	$$P_{n}(x) = x_{(n)} = \prod_{i=0}^{n - 1} (x - i) = \sum_{i=0}^{n} (-1)^{n - i} \stirling{n}{i} x^{i},$$
	where $\stirling{n}{i}$ is the Stirling number of the first kind. Then, the coefficients of a same polynomial with roots translated by $h$, $T_{n}(x) = P_{n}(x - h) = \sum_{i=0}^{n} a_i x^i$, attain the form:
	$$a_m =  (-1)^{n - m}\sum_{k=m}^n  \stirling{n}{k} \binom{k}{m} h^{k-m}.$$
\end{lemma}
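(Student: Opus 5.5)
The plan is to substitute $x-h$ into the Stirling expansion of $P_n$ given in the statement, expand each power with the binomial theorem, and then interchange the two finite sums to read off the coefficient of $x^m$. The only care needed is the sign bookkeeping. Here $\stirling{n}{k}$ denotes the unsigned Stirling numbers of the first kind, so the signs are carried by the explicit factors $(-1)^{n-k}$.

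First, I will write
\[
T_n(x)=P_n(x-h)=\sum_{k=0}^{n}(-1)^{n-k}\stirling{n}{k}(x-h)^k .
\]
Next, I will expand each power by the binomial theorem,
\[
(x-h)^k=\sum_{m=0}^{k}\binom{k}{m}x^m(-h)^{k-m}.
\]
This gives a double sum over the triangular index set $0\le m\le k\le n$. I will then swap the order of summation, letting $m$ run over $0,\dots,n$ on the outside and $k$ over $m,\dots,n$ on the inside. Hence
\[
a_m=\sum_{k=m}^{n}(-1)^{n-k}(-1)^{k-m}\stirling{n}{k}\binom{k}{m}h^{k-m}.
\]

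The remaining step, and the only point where an error could creep in, is combining the signs. Since
\[
(-1)^{n-k}(-1)^{k-m}=(-1)^{n-m},
\]
the factor is independent of $k$ and can be pulled out of the inner sum. This yields exactly
\[
a_m=(-1)^{n-m}\sum_{k=m}^{n}\stirling{n}{k}\binom{k}{m}h^{k-m},
\]
as claimed. Because every sum is finite, the interchange of summations needs no justification beyond reindexing.

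The main obstacle I anticipate is only notational: one must confirm that the expansion $x_{(n)}=\sum_i(-1)^{n-i}\stirling{n}{i}x^i$ uses the unsigned convention. If signed Stirling numbers were intended, the factor $(-1)^{n-i}$ would be absorbed into them and the formula would change accordingly. A quick check at $n=1$, where $P_1(x-h)=x-h$ and so $a_0=-h$, $a_1=1$, confirms that the stated formula is consistent with the unsigned convention.
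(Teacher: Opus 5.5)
Your proposal is correct and follows the paper's proof exactly. You substitute $x-h$ into the Stirling expansion, expand each $(x-h)^k$ by the binomial theorem, collect the coefficient of $x^m$ (which restricts the sum to $k\ge m$), and combine $(-1)^{n-k}(-1)^{k-m}=(-1)^{n-m}$.
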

\begin{proof}
	By the binomial theorem,
	\begin{equation*}
		\begin{split}
			T_{n}(x) = \sum_{i=0}^{n} (-1)^{n - i} \stirling{n}{i}  (x - h)^{i} = \sum_{i=0}^{n} (-1)^{n - i} \stirling{n}{i}  \sum_{k=0}^{i} \binom{i}{k} x^k (-h)^{i - k}
		\end{split}
	\end{equation*}
	Then we collect terms in front of $x^m$:
	$$a_m = \sum_{k=0}^n (-1)^{n - k} \stirling{n}{k} \binom{k}{m} (-h)^{k-m} =  \sum_{k=m}^n (-1)^{n - k} \stirling{n}{k} \binom{k}{m} (-h)^{k-m} = (-1)^{n - m}\sum_{k=m}^n  \stirling{n}{k} \binom{k}{m} h^{k-m}.$$
%
\end{proof}
\begin{remark}
	The coefficient sum is a convolution of the sequence $\stirling{n}{k}\binom{k}{m}$ with the geometric sequence $h^{k}$. Applying the generating-function machinery to this convolution does not yield a form simpler than the one above, which we take to be final.
\end{remark}

\begin{proposition}\label{thr:existence}
	For any fixed $n$, $\alpha$ and $\beta$, $\{S_{\mathbf{t}}(n, \alpha, \beta + d)\}_{d=0}^{n + 1}$ are linearly dependent, i.e. there exist such $\{c_i\}_{i=0}^{n + 1}$ that
	$$\sum_{d=0}^{n + 1}c_d S_{\mathbf{t}}(n, \alpha, \beta + d) = 0.$$
\end{proposition}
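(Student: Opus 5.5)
The approach is to write each $S_{\mathbf{t}}(n, \alpha, \beta + d)$ as a fixed linear combination of a small number of $d$-independent quantities. Group the terms of \autoref{eq:combfun} by $j = |\mathbf{k}|$. Let $e_j(\mathbf{t})$ be the elementary symmetric polynomial of degree $j$. Then
$$S_{\mathbf{t}}(n, \alpha, \beta + d) = \sum_{j=0}^{n} e_j(\mathbf{t})\, \Gamma(n - j + \alpha)\, \Gamma(j + \beta + d).$$
Next, factor out $\Gamma(\beta + d)$:
$$\Gamma(j + \beta + d) = \Gamma(\beta + d)\,(\beta + d)^{\overline{j}},$$
where the rising factorial $(\beta + d)^{\overline{j}} = \prod_{i=0}^{j-1}(\beta + d + i)$ is a polynomial in $d$ of degree $j \le n$. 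This is where the preceding Lemma enters. The rising factorial equals $(-1)^j$ times a falling factorial evaluated at $-(\beta + d)$, i.e. a translated falling factorial. The Lemma therefore gives its coefficients in powers of $d$ explicitly in terms of Stirling numbers of the first kind with shift $h$ related to $\beta$.

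Substituting these expansions, we get
$$\frac{S_{\mathbf{t}}(n, \alpha, \beta + d)}{\Gamma(\beta + d)} = \sum_{m=0}^{n} A_m\, d^m,$$
where the coefficients
$$A_m = \sum_{j=m}^{n} e_j(\mathbf{t})\, \Gamma(n - j + \alpha)\, a^{(j)}_m$$
do not depend on $d$. So $Q(d) := S_{\mathbf{t}}(n, \alpha, \beta + d)/\Gamma(\beta + d)$ is a polynomial in $d$ of degree at most $n$.

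The key step is the standard finite-difference fact: a polynomial of degree at most $n$ is annihilated by $\Delta^{n+1}$. Hence
$$\sum_{d=0}^{n+1} (-1)^{n+1-d} \binom{n+1}{d}\, Q(d) = 0.$$
Setting
$$c_d = (-1)^{n+1-d}\binom{n+1}{d}\big/\Gamma(\beta + d)$$
gives the required relation. The coefficients are not all zero; in fact none vanishes when $\beta > 0$. An equivalent way to see this: the $n+2$ vectors $(1, d, \dots, d^n)$ for $d = 0, \dots, n+1$ live in an $(n+1)$-dimensional space, so they are linearly dependent, and the dependence transfers to the $S$ values. This also shows the $c_d$ can be chosen independently of $\mathbf{t}$ and $\alpha$.

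The main obstacle is bookkeeping rather than conceptual: the sign and shift conventions when matching the rising factorial to the Lemma's $T_n(x) = P_n(x - h)$. These are needed only if one wants the explicit coefficients $A_m$. The existence claim itself needs only that $Q(d)$ is a polynomial of degree at most $n$, which is immediate from the rising-factorial form. I would first present that short argument, then record the explicit $A_m$ via the Lemma as a byproduct for later use.
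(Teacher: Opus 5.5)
Your proof is correct, and it takes a genuinely different and much shorter route than the paper's.

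The paper keeps the $\mathbf{k}$-dependent factor $\Gamma(|\mathbf{k}|+\beta)$ inside the sum and writes $S_{\mathbf{t}}(n,\alpha,\beta+d)=\sum_{\mathbf{k}} z(\mathbf{k})\,(|\mathbf{k}|+\beta)^{(d)}$. It then looks for a combination $\sum_{d=0}^{n+1} c_d\, x^{(d)}$ of rising factorials in the variable $x=|\mathbf{k}|+\beta$ that vanishes at the $n+1$ values $x=\beta,\dots,\beta+n$. It prescribes this combination to be the shifted falling factorial $(x-\beta)_{(n+1)}$, and gets existence from the unit-diagonal triangular Stirling system. That route identifies the $c_d$ as the coordinates of $(x-\beta)_{(n+1)}$ in the rising-factorial basis, which is what ties them to Stirling numbers. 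However, it leaves them only as the solution of a linear system. Their closed form then costs the Stirling inversion lemma, the Corollary, and the coronation identity, whose proof the paper itself regards as incomplete.

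You instead pull out the $\mathbf{k}$-independent factor $\Gamma(\beta+d)$, which makes the dependence on the shift $d$ polynomial of degree at most $n$, and annihilate it with $\Delta^{n+1}$. As you say, the translated-falling-factorial Lemma is not needed. The payoff is explicit coefficients: rescaling yours by $\Gamma(\beta+n+1)$ so that $c_{n+1}=1$ gives
\[
c_d=(-1)^{n+1-d}\binom{n+1}{d}\,(\beta+d)^{(n+1-d)},
\]
which is exactly the closed form of the paper's next proposition. These are the very coefficients the paper's system determines. Applying your difference argument to $(\beta+d)^{(j)}$ for each $j\le n$ shows that $\sum_d c_d\,x^{(d)}$ is monic of degree $n+1$ and vanishes at $x=\beta+j$ for $j=0,\dots,n$, hence equals $(x-\beta)_{(n+1)}$. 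So your argument also proves that proposition algebraically, with no appeal to the coronation lemma.

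The only implicit hypothesis is $\beta>0$, so that $\Gamma(\beta+d)$ is finite; this is already needed for $S_{\mathbf{t}}$ to be defined.
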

\begin{proof}
	We start with some motivation first. Let's see decompositions of $S_{\mathbf{t}}(n, \alpha, \beta + 1), S_{\mathbf{t}}(n, \alpha, \beta + 2)$ and $S_{\mathbf{t}}(n, \alpha, \beta + d)$:
	\begin{equation*}
		\begin{split}
	S_{\mathbf{t}}(n, \alpha, \beta + 1) = \beta S_{\mathbf{t}}(n, \alpha, \beta) +  \sum_{\mathclap{\mathbf{0}_{n} < \mathbf{k} < \mathbf{1}_{n}}} \mathbf{t}^\mathbf{k}  \Gamma(n - |\mathbf{k}| + \alpha ) \Gamma(|\mathbf{k}| + \beta) |\mathbf{k}|\\
	S_{\mathbf{t}}(n, \alpha, \beta + 2) = \beta^2 S_{\mathbf{t}}(n, \alpha, \beta) + S_{\mathbf{t}}(n, \alpha, \beta + 1) + \sum_{\mathclap{\mathbf{0}_{n} < \mathbf{k} < \mathbf{1}_{n}}} \mathbf{t}^\mathbf{k}  \Gamma(n - |\mathbf{k}| + \alpha ) \Gamma(|\mathbf{k}| + \beta) (2 \beta |\mathbf{k}| +  |\mathbf{k}|^2) \\
	S_{\mathbf{t}}(n, \alpha, \beta + d) = \text{something} + \sum_{\mathclap{\mathbf{0}_{n} < \mathbf{k} < \mathbf{1}_{n}}} \mathbf{t}^\mathbf{k}  \Gamma(n - |\mathbf{k}| + \alpha ) \Gamma(|\mathbf{k}| + \beta) |\mathbf{k}|^d.
		\end{split}
	\end{equation*}
	Notice how computation of the $S_{\mathbf{t}}(n, \alpha, \beta + d)$ requires computing a sum with terms multiplied by $|\mathbf{k}|^d$. The value of this sum is unknown without the knowledge of previous terms. However, $|\mathbf{k}|$ attains integer values in the interval $[0, n]$, therefore we can construct the polynomial of degree $n + 1$ that attains zero at those values, possibly shifted by $\beta$. Trivially, this polynomial is the one from the lemma on the shifted \textit{falling factorial} above with $h=\beta$ and coeffients $a_m = (-1)^{n + 1 - m}\sum_{k=m}^{n+1}  \stirling{n + 1}{k} \binom{k}{m} \beta^{k-m}$. The \textit{rising factorial}, as a matter of fact, can be expressed similarily: $\prod_{i=0}^n (x + i) = x^{(n + 1)} = \sum_{m=0}^{n + 1} \stirling{n + 1}{m} x^m$. Those observations will come in handy soon enough. 
	
	We are looking for a set of a coefficients $\{c_i\}_{i=1}^{n + 1}$ for which the following equation holds:
	$$\sum_{d=0}^{n + 1} c_{d} S_{\mathbf{t}}(n, \alpha, \beta + d) = 0.$$
	First, see that $S_{\mathbf{t}}(n, \alpha, \beta + d) = \sum_{{\mathbf{0}_{n} < \mathbf{k} < \mathbf{1}_{n}}} \mathbf{t}^\mathbf{k}  \Gamma(n - |\mathbf{k}| + \alpha ) \Gamma(|\mathbf{k}| + \beta) (|\mathbf{k}| + \beta)^{(d)}$. For brevity, we shall denote $\mathbf{t}^\mathbf{k}  \Gamma(n - |\mathbf{k}| + \alpha ) \Gamma(|\mathbf{k}| + \beta)$ as $z(\mathbf{k})$. Then,
	\begin{equation*}
		\begin{split}
			\sum_{d=0}^{n + 1} c_{d} S_{\mathbf{t}}(n, \alpha, \beta + d) = \sum_{d=0}^{n + 1} c_d \sum_{\mathclap{\mathbf{0}_{n} < \mathbf{k} < \mathbf{1}_{n}}} z(\mathbf{k})  (|\mathbf{k}| + \beta)^{(d)} =  \sum_{\mathclap{\mathbf{0}_{n} < \mathbf{k} < \mathbf{1}_{n}}} z(\mathbf{k}) \sum_{d=0}^{n + 1} c_d (|\mathbf{k}| + \beta)^{(d)} = \\ = \sum_{\mathclap{\mathbf{0}_{n} < \mathbf{k} < \mathbf{1}_{n}}} z(\mathbf{k}) \sum_{d=0}^{n + 1} c_d \sum_{m=0}^{d} \stirling{d}{m} (|\mathbf{k}| + \beta)^m = 0 \leftrightarrow \sum_{d=0}^{n + 1} c_d \sum_{m=0}^{d} \stirling{d}{m} (|\mathbf{k}| + \beta)^m = \sum_{0 \le m  \le d}^{d \le n + 1} c_d \stirling{d}{m} (|\mathbf{k}| + \beta)^m = 0.
		\end{split}
	\end{equation*}
	For this to work, it suffices to find $c_d$ such that coefficients in front of $(|\mathbf{k}| + \beta)$ powers agree with corresponding series coefficients of the falling factorial $(\mathbf{k} + \beta)_{(n + 1)} = \sum_{m=0}^{n + 1} \stirling{n + 1}{m} (-1)^{n - m + 1}(|\mathbf{k}| + \beta)^m$:\\
	\begin{center}
	\begin{tabular}{ll}
		$m = 0$ & $c_0 = (-1)^{n + 1}\beta^{(n + 1)}$ \\
		$m = 1$ & $\sum_{d=1}^{n + 1} c_d\stirling{d}{1} = (-1)^{n }\sum_{k=1}^{n + 1}  \stirling{n + 1}{k} k \beta^{k-1}$\\
		$\vdots$ & \\
		$m = j$ & $\sum_{d=j}^{n + 1} c_d\stirling{d}{j} = (-1)^{n - j + 1}\sum_{k=j}^{n + 1}  \stirling{n + 1}{k} \binom{k}{j} \beta^{k-j}$\\
		$\vdots$& \\
		$m = n + 1$ & $c_{n + 1} = 1$
	\end{tabular}
   \end{center}
   This constitutes a triangular system of $n + 1$ equations $A \mathbf{c} = \mathbf{s}$, where $A_{i,j} = \stirling{j}{i}$ for all $j \ge i$ and $0$ otherwise, $\mathbf{c}_{i} = c_i$, $s_{j} = (-1)^{n - j + 1}\sum_{k=j}^{n + 1}  \stirling{n + 1}{k} \binom{k}{j} \beta^{k-j}$, $i, j \in [1,\dots,n + 1]$. The solution to the system exists as $det(A) = \prod_{i=0}^{n + 1} \stirling{i}{i} = 1$. The values of $\mathbf{c}$ can be obtained with backward substitution.
\end{proof}
\begin{remark}
	There is probably a combinatorial explanation for this nonsense too... I suppose. High IQ individuals are free to express their very noble opinion.
\end{remark}

The result obtained above is nice, but practically of little use as we still have to solve a system of linear equations in $O((n + 1)^2)$ time. However, it turns out that we can find constant coefficients (independent of $\beta$) if we solve a similar, easily transformable to the original, problem. But to proceed further, we first need one well-known to those acquainted to combinatorics and the Stirling numbers lemma:

\begin{lemma}
	Let $A$ and $B$ be matrices of size $n$ such that:
	$$A_{i,j} = \stirling{i}{j},~\forall j \le i,~\text{else } 0,$$
	$$B_{i,j} = (-1)^{i - j}\stirlings{i}{j},~\forall j \le i,~\text{else } 0,$$
	where $\stirlings{i}{j}$ is the Stirling number of the second kind that can be defined as $x^i = \sum_{j=0}^i \stirlings{i}{j}  (-1)^{i - j} x^{(j)}$. Then, 
	$$A B = I,$$
	i.e. $B = A^{-1}$.
\end{lemma}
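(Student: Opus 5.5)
The plan is to read both matrices as change-of-basis matrices between the monomial basis $\{x^j\}$ and the rising-factorial basis $\{x^{(k)}\}$ of the space of polynomials of degree at most $n$, and then to compose the two changes of basis. Two expansions are needed. The first is the one already used in the proof of Proposition~\ref{thr:existence}: $x^{(i)} = \sum_{j=0}^{i} \stirling{i}{j} x^j$. Its coefficient row is exactly row $i$ of $A$. The second is the defining relation in the statement, $x^j = \sum_{k=0}^{j} (-1)^{j-k}\stirlings{j}{k} x^{(k)}$, whose coefficient row is row $j$ of $B$.

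I will first check that this sign convention for the second kind is consistent with the standard identity $x^j = \sum_k \stirlings{j}{k} x_{(k)}$. Substitute $x \mapsto -x$ and use $(-x)_{(k)} = (-1)^k x^{(k)}$. This gives $(-1)^j x^j = \sum_k \stirlings{j}{k}(-1)^k x^{(k)}$. Multiplying by $(-1)^j$ yields the stated form, so the relation in the lemma is a genuine identity and not merely a definition that might clash with the usual numbers.

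The main step substitutes the second expansion into the first:
$$x^{(i)} = \sum_{j=0}^{i} \stirling{i}{j} \sum_{k=0}^{j} (-1)^{j-k}\stirlings{j}{k} x^{(k)} = \sum_{k=0}^{i} \Big(\sum_{j=k}^{i} A_{i,j} B_{j,k}\Big) x^{(k)} = \sum_{k} (AB)_{i,k}\, x^{(k)}.$$
Both matrices are lower triangular, so the restricted summation range $k\le j\le i$ is exactly the matrix product. Truncating to size $n$ (whether indices start at $0$ or $1$) therefore loses no terms.

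The polynomials $x^{(k)}$ are monic of degree $k$, so they are linearly independent. Comparing coefficients on both sides then gives $(AB)_{i,k} = \delta_{ik}$, that is, $AB = I$. Since the matrices are square, $B = A^{-1}$ follows, and $BA = I$ as well.

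The only delicate point is the sign and indexing bookkeeping, in particular confirming the sign convention for $\stirlings{j}{k}$ as done above. The rest is a routine comparison of coefficients.
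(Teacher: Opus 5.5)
Your proof is correct and takes the same approach as the paper. The paper simply observes that $A$ and $B$ are the mutually inverse change-of-basis matrices between monomials and rising factorials; you add the details it leaves implicit: the sign check via $x \mapsto -x$, the explicit composition of the two expansions, and the triangularity argument showing that truncation to size $n$ loses no terms.
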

\begin{proof}
	It suffices to see that $A$ is a change of basis matrix from polynomials $\{x^i\}$ to rising factorials and $B$ is the opposite.
\end{proof}
\begin{corollary}
	Given the system $A \mathbf{c} = \mathbf{s}$, where $A_{i,j} = \stirling{j}{i} ~\forall j \ge i$ and $0$ otherwise (an upper triangular matrix of Stirling values of the first kind), $\mathbf{c}_{i} = c_i$, $s_{j} = (-1)^{n - j + 1}\sum_{k=j}^{n + 1}  \stirling{n + 1}{k} \binom{k}{j} \beta^{k-j}$, $i, j \in [1,\dots,n + 1]$, the solution is $\mathbf{c} = A^{-1} \mathbf{s}$. This $A$ is the transpose of the matrix in the Lemma, so $(A^\top)^{-1} = (A^{-1})^\top$ is the upper triangular matrix of (signed) Stirling numbers of the second kind:
	$$\mathbf{c}_i = \sum_{j=i}^{n + 1} (-1)^{i - j}\stirlings{j}{i} s_j = \sum_{j=i}^{n + 1} (-1)^{i + n + 1}\stirlings{j}{i}  \sum_{k=j}^{n + 1}  \stirling{n + 1}{k} \binom{k}{j} \beta^{k-j}.$$
\end{corollary}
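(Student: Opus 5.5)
The corollary follows from the preceding Lemma plus one transpose, so the plan has three parts: invert the system, then substitute the explicit right-hand side.

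\emph{Identify the inverse.} The system matrix is $A_{i,j}=\stirling{j}{i}$ for $j\ge i$. This is the transpose of the lower-triangular Stirling matrix of the Lemma, taken on the index range $0,\dots,n+1$ used in the system. The Lemma gives $L B = I$, where $L_{i,j}=\stirling{i}{j}$ and $B_{i,j}=(-1)^{i-j}\stirlings{i}{j}$. Transposing, $B^\top L^\top = I$, so $(L^\top)^{-1}=B^\top$, since the inverse of a square matrix is two-sided. Hence $A^{-1}$ has entries $(A^{-1})_{i,j}=B_{j,i}=(-1)^{j-i}\stirlings{j}{i}$ for $j\ge i$ and $0$ otherwise. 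Because $(-1)^{j-i}=(-1)^{i-j}$, this matches the sign convention in the statement.

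\emph{Multiply out.} Writing $\mathbf{c}=A^{-1}\mathbf{s}$ entrywise gives
\[
c_i=\sum_{j=i}^{n+1}(-1)^{i-j}\stirlings{j}{i}s_j .
\]
Substituting $s_j=(-1)^{n-j+1}\sum_{k=j}^{n+1}\stirling{n+1}{k}\binom{k}{j}\beta^{k-j}$ combines the signs as $(-1)^{i-j}(-1)^{n-j+1}=(-1)^{i+n+1-2j}=(-1)^{i+n+1}$. The sign is therefore independent of $j$, and we obtain exactly the displayed double sum.

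\emph{Main obstacle: index bookkeeping.} The derivation of $\mathbf{s}$ in Proposition~\ref{thr:existence} uses rows $m=0,\dots,n+1$, where the $m=0$ row gives $c_0$. The Lemma, however, is stated for $n\times n$ matrices indexed from $1$. I would handle this by restating the Lemma on the index set $\{0,\dots,n+1\}$. This is valid because the change-of-basis argument between $\{x^i\}$ and the rising factorials is index-agnostic, and $\stirling{0}{0}=\stirlings{0}{0}=1$.

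It also remains to verify that the Lemma's definition of the second-kind numbers, $x^i=\sum_j\stirlings{i}{j}(-1)^{i-j}x^{(j)}$, is the exact inverse of the rising-factorial expansion $x^{(d)}=\sum_m\stirling{d}{m}x^m$ used to build $A$. Composing these two expansions shows that the product of the two coefficient matrices is the identity. Once the indices and this inverse relation are aligned, the result is immediate.
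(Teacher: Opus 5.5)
Your proposal is correct and follows the paper's route. The paper justifies the corollary in one line: the upper-triangular system matrix is the transpose of the Lemma's lower-triangular Stirling matrix, so its inverse is the transpose of the signed second-kind matrix, and multiplying out with the sign collapse $(-1)^{i-j}(-1)^{n-j+1}=(-1)^{i+n+1}$ gives the displayed double sum. Your index bookkeeping is a harmless refinement the paper skips: since $\stirling{d}{0}=\stirlings{d}{0}=0$ for $d\ge 1$, the index-$0$ row and column decouple ($c_0=s_0$), so the formula holds on $\{1,\dots,n+1\}$ as stated and on $\{0,\dots,n+1\}$ as actually used in Proposition~\ref{thr:existence}.
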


Also, the following lemma will be needed very soon. I call it a lemma rather than a conjecture as it is kinda proven, but I am still not quiet happy that I didn't provide a strong algebraic proof.
\begin{figure}
	\includegraphics[height=\textheight]{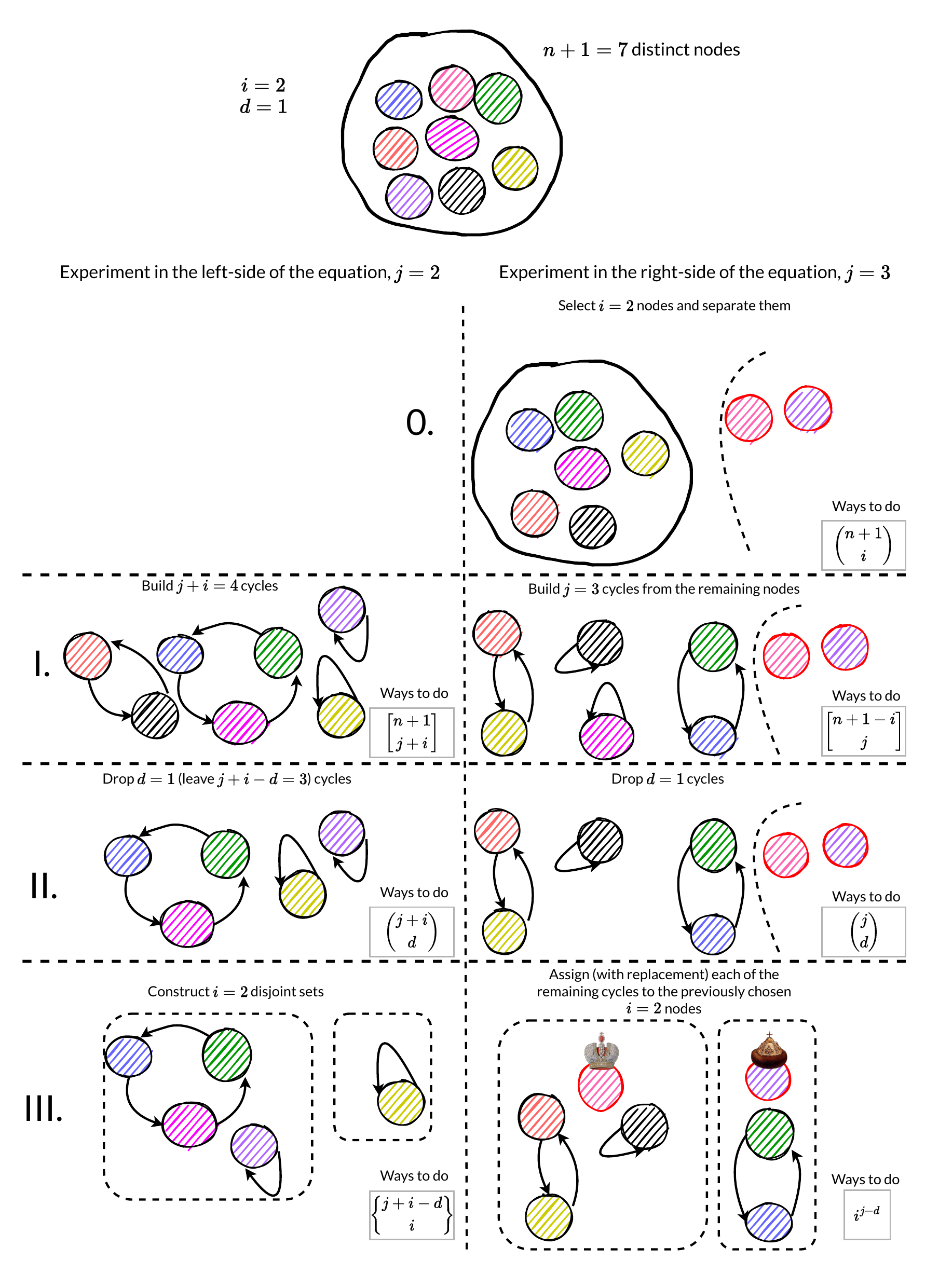}
	\caption{Schematic representation of the experiments ``run`` in the left and right sides of the equation.}
	\label{fig:conjecture}
\end{figure}
\begin{conjecture}\label{lem:coronation}
	$$ \sum_{j=d}^{n + 1 - i}  \stirlings{j + i - d}{i} \stirling{n + 1}{j + i} \binom{j + i}{d} = \binom{n +1}{i} \sum_{j=d}^{n + 1 - i} \stirling{n + 1 - i}{j} \binom{j}{d} i^{j - d}~~\forall n, d, i\in \mathbb{Z}_{+}.$$
\end{conjecture}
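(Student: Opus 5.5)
The plan is to read both sides of Lemma~\ref{lem:coronation} as coefficients of a single polynomial in two variables, and to reduce the identity to one Newton forward-difference computation. Put $N=n+1$, write $x^{(m)}$ for the rising factorial as in the paper, and use the unsigned Stirling numbers of the first kind, so that $x^{(m)}=\sum_j \stirling{m}{j}x^j$. I will also use the standard expansion $y^k=\sum_{i}\stirlings{k}{i}\,y_{(i)}$ of powers into falling factorials $y_{(i)}=y(y-1)\cdots(y-i+1)$. Terms outside the stated summation ranges vanish automatically, because $\stirlings{k}{i}=0$ for $k<i$ and $\stirling{N}{m}=0$ for $m>N$. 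So I may sum freely over all integers.

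\emph{Right-hand side.} Expanding $(x+i)^j$ by the binomial theorem gives
$$\sum_j \stirling{N-i}{j}\binom{j}{d} i^{j-d}=[x^d]\sum_j\stirling{N-i}{j}(x+i)^j=[x^d]\,(x+i)^{(N-i)}.$$
Hence the right-hand side equals $\binom{N}{i}[x^d](x+i)^{(N-i)}$.

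\emph{Left-hand side.} Substitute $m=j+i$; the left-hand side becomes $\sum_m \stirlings{m-d}{i}\stirling{N}{m}\binom{m}{d}$. Consider the bivariate polynomial
$$(x+y)^{(N)}=\sum_m\stirling{N}{m}(x+y)^m=\sum_m\stirling{N}{m}\sum_{d}\binom{m}{d}x^d y^{m-d}.$$
Its coefficient of $x^d$ is the polynomial $\sum_m\stirling{N}{m}\binom{m}{d}y^{m-d}$ in $y$. Apply the linear functional $\Lambda_i$ that sends $y^k\mapsto\stirlings{k}{i}$; by the falling-factorial expansion above, $\Lambda_i$ simply extracts the coefficient of $y_{(i)}$ when a polynomial in $y$ is written in the falling-factorial basis. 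Therefore
$$\text{LHS}=[x^d]\,\big[y_{(i)}\big]\,(x+y)^{(N)}.$$

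\emph{Key step.} It remains to expand $f(y)=(x+y)^{(N)}$ in the basis $y_{(k)}$. By Newton's forward-difference formula, $f(y)=\sum_k \frac{\Delta_y^k f(0)}{k!}\,y_{(k)}$. The rising factorial satisfies
$$\Delta_y (x+y)^{(N)}=N\,(x+y+1)^{(N-1)},$$
since $(z+1)^{(N)}-z^{(N)}=(z+1)^{(N-1)}\big[(z+N)-z\big]$. Iterating this $k$ times gives $\Delta_y^k f(0)=N_{(k)}\,(x+k)^{(N-k)}$, so
$$(x+y)^{(N)}=\sum_k\binom{N}{k}(x+k)^{(N-k)}\,y_{(k)}.$$
(For $N=1$ this reads $x+y=x+y_{(1)}$, as it should.) Taking $k=i$ and then the coefficient of $x^d$ gives exactly $\binom{N}{i}[x^d](x+i)^{(N-i)}$, which is the right-hand side. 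This proves the lemma algebraically rather than only "kinda".

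\emph{Main obstacle.} The step I expect to need the most care is the bookkeeping of conventions, not the mathematics. The paper defines $\stirlings{i}{j}$ through a signed relation with rising factorials. I would first check that this agrees with the standard Stirling numbers of the second kind, i.e.\ with the falling-factorial expansion $y^k=\sum_i\stirlings{k}{i}y_{(i)}$ used above; if it does not, the signs must be adjusted consistently through the definition of $\Lambda_i$. I would also confirm that the index shift $m=j+i$ matches the lemma's lower limit $j=d$. The extra terms with $d\le m-i<d+i$ that the shift might seem to introduce are exactly those with $\stirlings{m-d}{i}=0$, so no terms are lost or gained.
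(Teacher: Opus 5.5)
Your proof is correct, and it takes a genuinely different route from the paper's. The paper's argument is combinatorial. It reads the left side as counting ways to form $j+i$ cycles on $n+1$ elements, discard $d$ of them and group the rest into $i$ blocks. It reads the right side as choosing $i$ distinguished elements (the ``tsarballs''), forming cycles on the rest, discarding $d$ and attaching the remainder to the tsarballs. It then asserts that a ``coronation/deposition'' map, built from a total order and the largest element of each block, is a bijection between the two outcome spaces. It never checks that the maps are well defined or mutually inverse, which is why the author calls the lemma only ``kinda proven''. You instead write both sides as coefficients of one polynomial (with your $N=n+1$). The left side is $[x^d]\,[y_{(i)}]\,(x+y)^{(N)}$, via the functional $y^k\mapsto\stirlings{k}{i}$, and the right side is $\binom{N}{i}[x^d](x+i)^{(N-i)}$. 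You connect them with the Newton expansion $(x+y)^{(N)}=\sum_k\binom{N}{k}(x+k)^{(N-k)}y_{(k)}$, which follows by iterating $\Delta_y(x+y)^{(N)}=N(x+y+1)^{(N-1)}$. Every step checks.

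Your route gives a complete algebraic proof, and it also gives more than the lemma. Put $x=\beta$: the inner sum $\sum_k\stirling{n+1}{k}\binom{k}{j}\beta^{k-j}$ in the paper's next Proposition is $[y^j](\beta+y)^{(n+1)}$. Summing it against $\stirlings{j}{i}$ therefore yields $\binom{n+1}{i}(\beta+i)^{(n+1-i)}$ directly. The closed form for $c_i$ then follows without the lemma or the comparison of $\beta^d$-coefficients. The paper's route offers a counting interpretation, but not a finished proof.

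Your convention worry resolves without any sign changes. Substitute $y=-x$ in $y^k=\sum_i\stirlings{k}{i}y_{(i)}$ and use $(-x)_{(i)}=(-1)^i x^{(i)}$; this gives exactly the paper's defining relation $x^k=\sum_i\stirlings{k}{i}(-1)^{k-i}x^{(i)}$. There is one slip in your last paragraph. After the shift $m=j+i$, the added terms are not those with ``$d\le m-i<d+i$''. They are the terms with $d\le m<d+i$ (zero because $\stirlings{m-d}{i}=0$), with $m<d$ (zero because $\binom{m}{d}=0$), and with $m>N$ (zero because $\stirling{N}{m}=0$).
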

\begin{proof}
	We can think of this relation in a combinatorial way. Remember interpretation of the following combinatorial values:
	\begin{center}
		\renewcommand{\arraystretch}{1.5}
	\begin{tabular}{lll}
		
		Binomial coefficient& $\binom{n}{k}$ & Number of ways to choice $k$ items out of $n$ distinct elements \\
		Stirling number of the 1st kind &$\stirling{n}{k}$ & Number of ways to construct $k$ disjoint cycles from $n$ nodes \\
		Stirling number of the 2st kind & $\stirlings{n}{k}$ & Number of ways to separate $n$ elemenets into $k$ disjoint sets \\
		Integer power & $n^k$ & Number of ways to choose $k$ items from a set of size $n$ with replacement
	\end{tabular}
	\renewcommand{\arraystretch}{1}
	\end{center}
	
	In the left side of the equation, $\sum_{j=d}^{n + 1 - i}  \stirlings{j + i - d}{i} \stirling{n + 1}{j + i} \binom{j + i}{d} = \sum_{j=d}^{n + 1 - i} b_j$, $b_j$ is a numer of ways the following experiment can be conducted:
	\begin{enumerate}
		\item Constructing $j + i$ cycles from the pool of $n + 1$ elements (can be done in $\stirling{n + 1}{j + i}$ number of ways);
		\item Dropping $d$ of the constructed cycles (can be done in $\binom{j + i}{d}$ number of ways);
		\item Grouping the remaining cycles into $i$ disjoint groups. 
	\end{enumerate}
	The $d_k$ in the right side of the equation, $\binom{n +1}{i} \sum_{j=d}^{n + 1 - i} \stirling{n + 1 - i}{j} \binom{j}{d} i^{j - d} = \sum_{j=d}^{n + 1 - i} d_k$, corresponds to a number of ways to perform a similar, but different experiment:
	\begin{enumerate}
		\setcounter{enumi}{-1}
		\item Choosing $i$ elements from the pool of $n + 1$ elements and separating them;
		\item Constructing $j$ cycles from the remaining $n + 1 - i$ elements (can be done in $\stirlings{n + 1 - i}{j}$ number of ways);
		\item Dropping $d$ of the constructed cycles (can be done in $\binom{j}{d}$ number of ways);
		\item Assigning the remaining cycles to some of the items selected at the first step (can be done in $i^{j-d}$ number of ways).
	\end{enumerate}
	The schematic representation of experiments is depicted in \autoref{fig:conjecture}. The most striking difference is that at the last step, in the left-side experiment, cycles are separated into  $i$ disjoint non-empty sets, whereas in the right-side experiment, the last step can be interpreted as allowing some of the $i$ sets to be empty. However, we can take a look at it from a different angle and see that all of the $i$ ``sets`` already have an element in them -- one of the selected $i$ items. Then, experiments look more similar. The difference is, if we start looking at the right-side experiment, it can be interpreted as being conditioned on the assumption that each of $i$ sets has at least $1$ self-cycle. If we can find a bijective transformation of an outcome from the left-side experiment to the right-side experiment, then we can conclude outcome spaces of experiments have the same size, therefore the combinatorial identity in question should be correct.
	
	Next, we shall call the $i$ selected balls as \textit{tsarballs} and other balls will be called \textit{subballs}. We need to find a \textbf{coronation} process to select a tsarball from a set of subballs, and vice-versa, the process of \textbf{deposition} of a tsarball, that converts a tsarball into a subball. Furthermore, the coronotation and deposition processes should be inverses of one another: we should be able to run a deposition after the coronation and get the initial set. To this end, we impose a total ordering on all balls -- a completely arbitrary one (e.g. we can just provide numbers from $1$ to $n$ to each ball) -- it will cause no change to the left-side and the right-side experiment, but will play a role in the coronation and deposition processes.

	\textbf{Coronation:}
	\begin{enumerate}
		\item Find the ''largest`` (in the total ordering sense) ball in a set;
		\item Select a ball it points to as a tsarball (note that if a ball is in a self-loop, it points to itself).
	\end{enumerate}
	
	\textbf{Deposition:}
	\begin{enumerate}
		\item Find the largest ball in a set;
		\item If the largest ball in a set is the current tsarball, make the tsarball a subball in a self-loop, otherwise continue to the next step;
		\item Depose the current tsarball and insert it into the largest ball's loop right after the largest ball (so the largest ball points to the ex-tsarball).
	\end{enumerate}
	
	Those processes are indeed bijective and are an inverse of one another. We can apply the deposition process to an outcome and each subset of the right-side experiment and obtain an outcome of the left-side experiment, and vice-versa, we can perfrom a coronation process on the left-side experiment to obtain the right-side experiment.
\end{proof}

\begin{proposition}
	$$c_i = \sum_{j=i}^{n + 1} (-1)^{i + n + 1}\stirlings{j}{i}  \sum_{k=j}^{n + 1}  \stirling{n + 1}{k} \binom{k}{j} \beta^{k-j} = (-1)^{n + 1 - i}(\beta + i)^{(n + 1 - i)} \binom{n + 1}{i},$$
	as long as \autoref{lem:coronation} holds.
\end{proposition}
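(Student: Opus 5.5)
We plan to obtain the closed form by expanding the double sum as a polynomial in $\beta$ and comparing coefficients on both sides, using \autoref{lem:coronation} to reorganise the Stirling-number coefficients.

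\textbf{Step 1: collect powers of $\beta$ on the left.} Write the left-hand side as
$$c_i=(-1)^{i+n+1}\sum_{j=i}^{n+1}\sum_{k=j}^{n+1}\stirlings{j}{i}\stirling{n+1}{k}\binom{k}{j}\beta^{k-j}.$$
Group by the exponent $e=k-j$. Setting $k=j+e$ gives the coefficient of $\beta^{e}$:
$$\sum_{j}\stirlings{j}{i}\stirling{n+1}{j+e}\binom{j+e}{j}.$$
Write $\binom{j+e}{j}=\binom{j+e}{e}$ and reindex $j\mapsto j'=j-i+e$, so that $j+e=j'+i$ and $j=j'+i-e$. The coefficient becomes
$$\sum_{j'=e}^{n+1-i}\stirlings{j'+i-e}{i}\stirling{n+1}{j'+i}\binom{j'+i}{e}.$$
This is exactly the left side of \autoref{lem:coronation} with $d=e$. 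Applying the lemma replaces it by
$$\binom{n+1}{i}\sum_{j'=e}^{n+1-i}\stirling{n+1-i}{j'}\binom{j'}{e}\,i^{j'-e}.$$

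\textbf{Step 2: resum over $e$.} Reassembling,
$$c_i=(-1)^{n+1-i}\binom{n+1}{i}\sum_{j'}\stirling{n+1-i}{j'}\sum_{e}\binom{j'}{e}\beta^{e}i^{j'-e}.$$
By the binomial theorem the inner sum is $(\beta+i)^{j'}$. The remaining sum $\sum_{j'}\stirling{N}{j'}x^{j'}$ with $N=n+1-i$ is the rising factorial $x^{(N)}$, by the identity recalled in the proof of \autoref{thr:existence}. Taking $x=\beta+i$ yields $(-1)^{n+1-i}\binom{n+1}{i}(\beta+i)^{(n+1-i)}$. The sign $(-1)^{i+n+1}$ equals $(-1)^{n+1-i}$ because the two exponents differ by $2i$.

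\textbf{Checks and main obstacle.} We should check the summation limits in the reindexing, including boundary cases such as $\stirlings{j}{i}=0$ for $j<i$ and empty sums when $e>n+1-i$, so that the ranges match the lemma's exactly. We should also verify the edge cases $i=0$ (where $0^{0}=1$) and $i=n+1$ (which must give $c_{n+1}=1$). The real obstacle is \autoref{lem:coronation} itself, whose bijective argument is only sketched; the proposition is conditional on it, as stated.

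As an independent sanity check, and a possible alternative proof, one can verify directly that these $c_d$ satisfy $\sum_d c_d\,(x)^{(d)}=(x-\beta)_{(n+1)}$ with $x=|\mathbf{k}|+\beta$. This is a binomial-type convolution identity for rising factorials, $\sum_d\binom{n+1}{d}x^{(d)}(-\beta-d)^{(n+1-d)}$, akin to Vandermonde's identity. Proving it this way would bypass the lemma entirely.
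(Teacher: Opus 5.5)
Your proposal is correct and is essentially the paper's proof. Both expand in powers of $\beta$, reindex the coefficient of $\beta^{d}$ on the left into the left side of \autoref{lem:coronation}, and recognise the lemma's right side, via the binomial theorem and $x^{(N)}=\sum_j \stirling{N}{j}x^j$, as the coefficient of $\beta^d$ in $\binom{n+1}{i}(\beta+i)^{(n+1-i)}$; you simply run the computation from left to right and state explicitly that $(-1)^{i+n+1}=(-1)^{n+1-i}$, which the paper leaves implicit. Your closing remark is also sound and would remove the dependence on the lemma (Chu--Vandermonde for falling factorials plus uniqueness of the unitriangular system $A\mathbf{c}=\mathbf{s}$), provided the second factor is read as the falling factorial $(-\beta-d)_{(n+1-d)}$ rather than a rising one, since $(-1)^{m}(\beta+d)^{(m)}=(-\beta-d)_{(m)}$.
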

\begin{proof}
	Firstly, let's provide a power series expansion for the $(\beta + i)^{(n + 1 - i)} \binom{n + 1}{i}$ with resepct to $\beta$:
	
	\begin{equation*}
	\begin{split}
		(\beta + i)^{(n + 1 - i)} \binom{n + 1}{i} = \binom{n +1}{i} \sum_{j=0}^{n + 1 - i} \stirling{n + 1 - i}{j} (\beta + i)^j =\binom{n +1}{i} \sum_{j=0}^{n + 1 - i} \stirling{n + 1 - i}{j} \sum_{k=0}^{j} \binom{j}{k} \beta^k i^{j - k} = \\ = \binom{n +1}{i}\sum_{d=0}^{n + 1 - i} \beta^d \left( \sum_{j=d}^{n + 1 - i} \stirling{n + 1 - i}{j} \binom{j}{d} i^{j - d}\right)
	\end{split}
	\end{equation*}
	
	Secondly, we collect terms in front of $\beta$ for the $\sum_{j=i}^{n + 1} \stirlings{j}{i}  \sum_{k=j}^{n + 1}  \stirling{n + 1}{k} \binom{k}{j} \beta^{k-j}$:
	
	\begin{equation*}
		\begin{split}
			\sum_{j=i}^{n + 1} (-1)^{i + n + 1}\stirlings{j}{i}  \sum_{k=j}^{n + 1}  \stirling{n + 1}{k} \binom{k}{j} \beta^{k-j} = \sum_{d=0}^{n + 1 - i}\beta^d \left(\sum_{j=i}^{n + 1 - d}  \stirlings{j}{i} \stirling{n + 1}{d + j} \binom{d + j}{j} \right) = \\ =
			\sum_{d=0}^{n + 1 - i}\beta^d \left(\sum_{j=d}^{n + 1 - i}  \stirlings{j + i - d}{i} \stirling{n + 1}{j + i} \binom{j + i}{d} \right).
		\end{split}
	\end{equation*}
	
Therefore, statement in the proposition is true if the folowing equality holds:
$$ \sum_{j=d}^{n + 1 - i}  \stirlings{j + i - d}{i} \stirling{n + 1}{j + i} \binom{j + i}{d} = \binom{n +1}{i} \sum_{j=d}^{n + 1 - i} \stirling{n + 1 - i}{j} \binom{j}{d} i^{j - d}.$$
It appears so, as stated by \autoref{lem:coronation}.
\end{proof}

\begin{figure}
	\centering
	\includegraphics[width=0.8\textwidth]{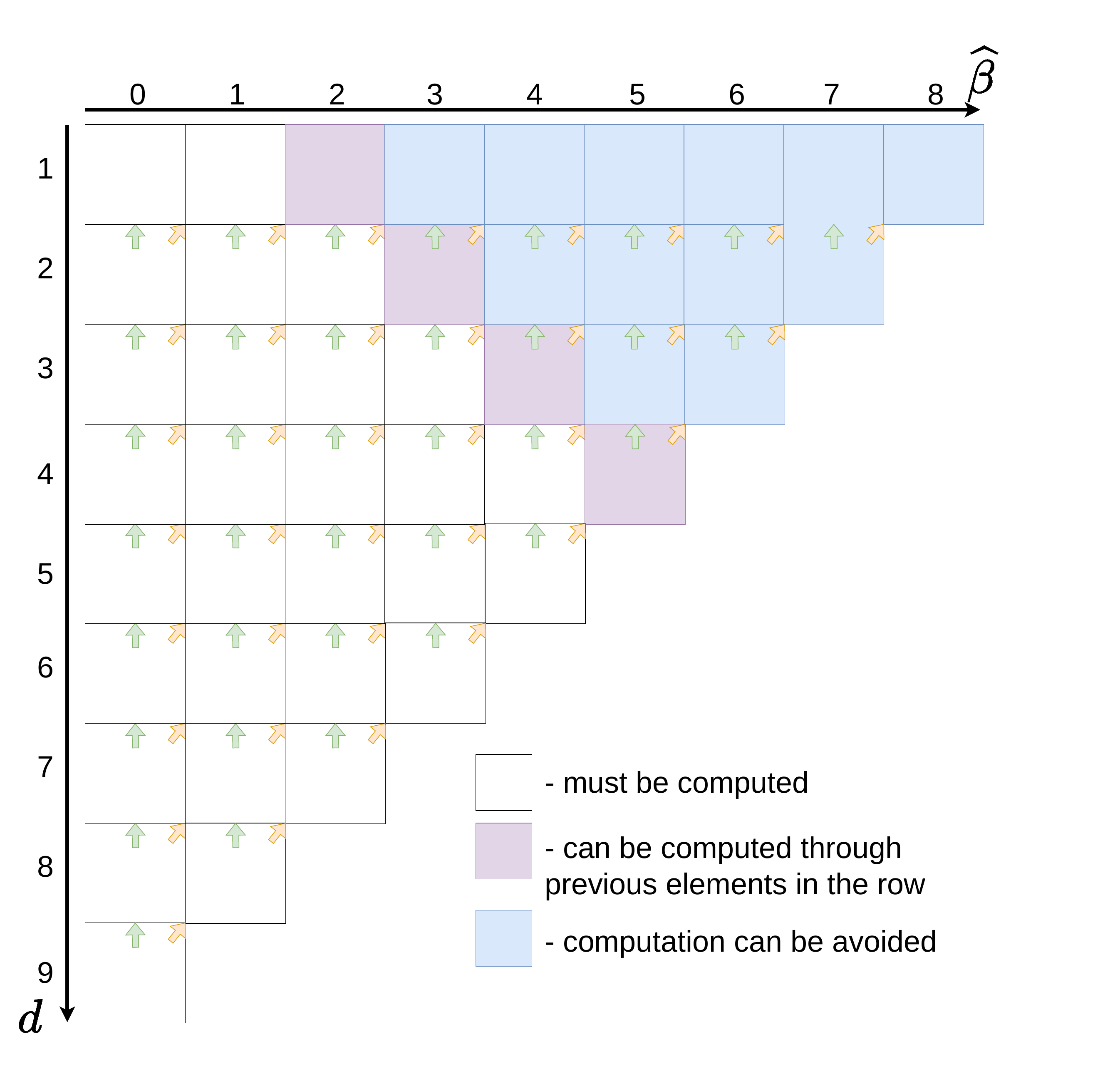}
	\caption{The dynamic programming matrix. Excessive computations are avoided with the help of the Proposition~\autoref{thr:existence}. $\widehat{\beta}$ is a value of positive integer shift to the $\beta$ variable.}
	\label{fig:dp}
\end{figure}

The implications of the just-proven Proposition is that we only need a quarter of the DP matrix (see \autoref{fig:dp}). Also, in scenarios where $\alpha$ or $\beta$ are integers (or only integer shifts are of interest), it provides as efficient way to compute $S$ with respect to integer shifts. That can greatly fasciliate optimization when done in $\mathbb{Z}$-space.

\subsection{Exact posterior of the mixture weight}\label{sec:posterior}

The marginal likelihood is not only the evidence used for empirical-Bayes fitting and model comparison; it also yields the full posterior of $w$ in closed form. Writing $t_i = f(x_i\mid\theta)/g(x_i\mid\theta)$ (the reciprocal of the ratio used in \autoref{eq:polynomial}, chosen here so that the weighted component is $f$) and letting $e_m(\mathbf{t})$ denote the elementary symmetric polynomials, the marginal admits the equivalent form
\begin{equation}\label{eq:marg_esp}
	\int_0^1 \widetilde{L}(\mathpzc{X}\mid\theta, w)\, q(w\mid\alpha,\beta)\, dw = \left(\prod_{i=1}^n g(x_i\mid\theta)\right) \sum_{m=0}^n e_m(\mathbf{t})\, \frac{B(m+\alpha,\, n-m+\beta)}{B(\alpha,\beta)},
\end{equation}
obtained by writing $\prod_i\big(w f(x_i) + (1-w) g(x_i)\big) = \big(\prod_i g(x_i)\big) \prod_i\big(w\, t_i + (1-w)\big) = \big(\prod_i g(x_i)\big) \sum_{m=0}^n e_m(\mathbf{t})\, w^m (1-w)^{n-m}$ and integrating each term against the Beta prior (this is \autoref{eq:polynomial} rearranged, with all summands positive, so it is evaluated stably in the log-domain). By Bayes' theorem the posterior is therefore a \emph{finite mixture of Beta distributions},
\begin{equation}\label{eq:post_mixture}
	p(w \mid \mathpzc{X}) = \sum_{m=0}^n \pi_m\, \mathpzc{Beta}(w \mid m+\alpha,\; n-m+\beta), \qquad \pi_m = \frac{e_m(\mathbf{t})\, B(m+\alpha,\, n-m+\beta)}{\sum_{m'=0}^n e_{m'}(\mathbf{t})\, B(m'+\alpha,\, n-m'+\beta)}.
\end{equation}
Every posterior summary is then available in closed form: the mean $\EX[w\mid\mathpzc{X}] = \sum_m \pi_m \frac{m+\alpha}{n+\alpha+\beta}$, the variance by the law of total variance, and the CDF (hence quantiles and equal-tailed credible intervals) as the $\pi_m$-weighted sum of regularised incomplete Beta functions.

The posterior class probability of an individual observation follows from the same object. Since, conditional on $w$, the latent label $z_j$ is independent of the remaining data,
\begin{equation}\label{eq:responsibility}
	P(z_j = f \mid \mathpzc{X}) = \EX_{w\mid\mathpzc{X}}\!\left[\frac{w\, f(x_j)}{w\, f(x_j) + (1-w)\, g(x_j)}\right] = t_j \frac{\sum_{m=0}^{n-1} e_m(\mathbf{t}_{-j})\, B(m+\alpha+1,\, n-1-m+\beta)}{\sum_{m=0}^{n} e_m(\mathbf{t})\, B(m+\alpha,\, n-m+\beta)},
\end{equation}
where $\mathbf{t}_{-j}$ omits observation $j$; the leave-one-out elementary symmetric polynomials follow from the full ones by deflation, so the identity gives all $n$ probabilities in $O(n^2)$ (for a wide likelihood-ratio range our reference implementation instead recomputes them in the log domain, trading this bound for numerical stability). In the two-groups model $g$ is the null density and $1 - P(z_j = f\mid\mathpzc{X})$ is exactly the \emph{local false discovery rate} of observation $j$. Thus a single $O(n^2)$ computation returns the marginal likelihood, the entire posterior of the mixing proportion with calibrated uncertainty, and the local FDR of every observation -- with no sampling and no EM iterations.

\section{Structural properties of the exact posterior}\label{sec:properties}

The mixture representation \eqref{eq:post_mixture} is not merely a means of computation; it exposes structure that the exact posterior is \emph{guaranteed} to possess and that no finite-sample approximation is guaranteed to share. Three such properties -- log-concavity, an evidence-ratio identity for the moments, and monotonicity in the data -- are recorded here. All three descend from a single classical fact about the elementary symmetric polynomials, which makes the two-groups posterior an unusually well-behaved object among finite mixtures.

Throughout, $t_i = f(x_i\mid\theta)/g(x_i\mid\theta) > 0$ are the per-observation likelihood ratios, $e_m(\mathbf{t})$ their elementary symmetric polynomials, and
\begin{equation}\label{eq:count_weights}
	\pi_m = \frac{e_m(\mathbf{t})\, B(m+\alpha,\, n-m+\beta)}{\sum_{m'=0}^n e_{m'}(\mathbf{t})\, B(m'+\alpha,\, n-m'+\beta)}, \qquad m = 0,\dots,n,
\end{equation}
the mixing weights of \eqref{eq:post_mixture}. Conditional on $w$, the number of observations drawn from the signal component $f$ is $N_f := \#\{i : z_i = f\}$; a direct computation shows that $\{\pi_m\}$ is exactly the posterior law of this latent count, $\pi_m = P(N_f = m \mid \mathpzc{X})$. Every statement below therefore concerns a quantity of direct scientific interest -- how many of the $n$ cases are non-null -- and not only the abstract mixing weight.

The generating polynomial of the likelihood ratios,
\begin{equation}\label{eq:genpoly}
	\prod_{i=1}^n (1 + t_i z) = \sum_{m=0}^n e_m(\mathbf{t})\, z^m,
\end{equation}
has all its roots at $z = -1/t_i$, which are real and, since $t_i > 0$, negative. Real-rootedness with nonnegative coefficients is precisely the condition under which a coefficient sequence is a \emph{P\'olya frequency sequence}; it forces Newton's inequalities.

\begin{lemma}[Newton's inequalities]\label{lem:newton}
	For every $\mathbf{t} \in \mathbb{R}_{>0}^n$ the normalised elementary symmetric polynomials $E_m := e_m(\mathbf{t})/\binom{n}{m}$ satisfy
	\begin{equation}\label{eq:newton}
		E_m^2 \ge E_{m-1}\, E_{m+1}, \qquad m = 1,\dots,n-1,
	\end{equation}
	with equality throughout iff all $t_i$ coincide. Consequently $\{e_m(\mathbf{t})\}_{m=0}^n$ is strictly log-concave and has no internal zeros.
\end{lemma}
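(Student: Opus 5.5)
The plan is the classical route: prove Newton's inequalities by repeated differentiation and reduction to a quadratic, then read off log-concavity and the absence of internal zeros.

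\textbf{Homogenise and differentiate.} Consider the bivariate homogeneous form
\[
F(x,y)=\prod_{i=1}^n (x+t_i y)=\sum_{m=0}^n \binom{n}{m}E_m\, x^{n-m}y^m .
\]
As a polynomial in $x$ for fixed $y$ it has only real roots. By Rolle's theorem, every partial derivative $\partial_x$ or $\partial_y$ of a real-rooted homogeneous form is again real-rooted and homogeneous. The dehomogenised one-variable polynomial is real-rooted, Rolle applies to it, and the "root at infinity" is handled by homogeneity.

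Fix $1\le m\le n-1$ and apply $\partial_x^{\,n-m-1}\partial_y^{\,m-1}$ to $F$. A direct computation leaves, up to the positive factor $n!/2$, the quadratic form
\[
E_{m-1}x^2+2E_m xy+E_{m+1}y^2 .
\]
This form is real-rooted, so its discriminant is nonnegative, which gives $E_m^2\ge E_{m-1}E_{m+1}$.

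\textbf{Obstacles.} There are two points that need care.

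\begin{itemize}
\item \emph{Rolle step.} One must check that real-rootedness survives differentiation in both variables. For $\partial_y$, I would pass to the reversed polynomial $y\leftrightarrow x$, which is again real-rooted since $t_i>0$, so no root sits at zero.
\item \emph{Equality case.} If all $t_i=c$, then $E_m=c^m$ and equality holds. Conversely, I would use the stronger form of Newton's inequalities, which follows from the classical argument via Maclaurin, or by induction on $n$ as in Hardy--Littlewood--P\'olya. For positive $t_i$ that are not all equal, it gives strict inequality for every $m$. Concretely, the quadratic above has a double root only if the derivative chain collapses, and tracing this back by induction forces all roots $-1/t_i$ of $F$ to coincide. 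I expect this strictness argument to be the main technical obstacle. If it resists, I would cite the standard statement in Hardy, Littlewood and P\'olya, \emph{Inequalities}, Theorem~51.
\end{itemize}

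\textbf{Consequences.} Since $t_i>0$, every $e_m(\mathbf t)>0$ for $0\le m\le n$, because each is a sum of positive monomials; hence there are no internal zeros. Log-concavity of $e_m$ itself then follows from
\[
\frac{e_m^2}{e_{m-1}e_{m+1}}=\frac{E_m^2}{E_{m-1}E_{m+1}}\cdot\frac{\binom{n}{m}^2}{\binom{n}{m-1}\binom{n}{m+1}} .
\]
The first factor is at least $1$ by Newton. The second factor equals $\frac{(m+1)(n-m+1)}{m(n-m)}>1$, since the binomial coefficients are themselves strictly log-concave. This yields strict log-concavity of $\{e_m\}$ even when all $t_i$ coincide.
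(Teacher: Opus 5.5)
Your proposal is correct and is essentially the paper's argument: applying $\partial_x^{\,n-m-1}\partial_y^{\,m-1}$ to the homogenised form is the bivariate version of the paper's chain of $m-1$ differentiations, one coefficient reversal and $n-m-1$ further differentiations down to the real-rooted quadratic $E_{m-1}+2E_m z+E_{m+1}z^2$, and the paper likewise defers the equality case to Hardy--Littlewood--P\'olya. Your binomial-ratio step $\frac{(m+1)(n-m+1)}{m(n-m)}>1$ makes explicit the strict log-concavity of $\{e_m\}$ that the paper's ``consequently'' leaves implicit, and it holds even when all $t_i$ coincide; for the lemma's ``equality throughout'' clause you do not need strictness at every $m$, since already $E_1^2-E_2=\frac{1}{n^2(n-1)}\sum_{i<j}(t_i-t_j)^2$.
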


Newton's inequalities \eqref{eq:newton} are the classical strengthening of the arithmetic--geometric-mean inequality \citep{hardy1952inequalities}; the short proof is given in \autoref{app:properties_proofs}. Their statistical consequence for the marginalised two-groups model is immediate.

\begin{theorem}[Log-concavity and unimodality of the prevalence-count posterior]\label{thm:logconcave}
	Let $w \sim \mathpzc{Beta}(\alpha,\beta)$ with $\alpha \ge 1$ and $\beta \ge 1$, and let $t_i > 0$ for all $i$. Then the posterior law $\{\pi_m\}_{m=0}^n$ of the signal count $N_f$ is log-concave,
	\begin{equation}
		\pi_m^2 \ge \pi_{m-1}\,\pi_{m+1}, \qquad m = 1,\dots,n-1,
	\end{equation}
	and is therefore unimodal with a contiguous set of modes. Under the uniform prior $\alpha=\beta=1$ the weights reduce to the normalised symmetric functions, $\pi_m \propto E_m$, and the statement is exactly Newton's inequality \eqref{eq:newton}.
\end{theorem}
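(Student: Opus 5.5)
Since the normalising constant cancels in the ratio, I will work with the unnormalised weights $\tilde\pi_m = e_m(\mathbf{t})\,B(m+\alpha,n-m+\beta)$. Writing $e_m = \binom{n}{m}E_m$, each weight splits as $\tilde\pi_m = E_m\cdot c_m$ with
\[
c_m := \binom{n}{m}B(m+\alpha,n-m+\beta) = \frac{n!}{\Gamma(n+\alpha+\beta)}\cdot\frac{\Gamma(m+\alpha)}{\Gamma(m+1)}\cdot\frac{\Gamma(n-m+\beta)}{\Gamma(n-m+1)}.
\]
A product of two positive log-concave sequences is log-concave. Lemma~\ref{lem:newton} already gives $E_m^2\ge E_{m-1}E_{m+1}$, with every $E_m>0$ because $t_i>0$. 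So the theorem reduces to showing that $c_m$ is log-concave in $m$ when $\alpha,\beta\ge1$.

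The constant prefactor in $c_m$ is irrelevant. The remaining two factors are $a_m=\Gamma(m+\alpha)/\Gamma(m+1)$ and $b_m=\Gamma(n-m+\beta)/\Gamma(n-m+1)$; $b_m$ is $a$-type with $\beta$ in place of $\alpha$ and index reflected $m\mapsto n-m$. Reflection preserves log-concavity, so it suffices to show $a_m$ is log-concave for $\alpha\ge1$. The ratio is
\[
\frac{a_{m+1}}{a_m} = \frac{m+\alpha}{m+1} = 1+\frac{\alpha-1}{m+1}.
\]
This is nonincreasing in $m$ exactly when $\alpha\ge1$, which is equivalent to $a_m^2\ge a_{m-1}a_{m+1}$. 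This is precisely where the hypothesis $\alpha,\beta\ge1$ enters. It also shows the hypothesis cannot simply be dropped: for $\alpha<1$ the factor $a_m$ is strictly log-convex.

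Multiplying the three inequalities gives $\pi_m^2\ge\pi_{m-1}\pi_{m+1}$. For unimodality, I will use that all $\pi_m>0$, so there are no internal zeros. Then log-concavity means $\pi_{m+1}/\pi_m$ is nonincreasing. Hence the sequence increases while this ratio exceeds $1$, possibly stays constant on a block where the ratio equals $1$, and decreases afterwards. This gives a contiguous set of modes.

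In the special case $\alpha=\beta=1$, $c_m=\binom{n}{m}\,m!\,(n-m)!/(n+1)! = 1/(n+1)$ is constant. Therefore $\pi_m\propto E_m$, and the statement is literally Newton's inequality~\eqref{eq:newton}.

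The only real obstacle is Lemma~\ref{lem:newton} itself, which the paper proves in the appendix. The remaining step needing care is to split $e_m$ as $\binom{n}{m}E_m$, rather than applying log-concavity of $e_m$ directly. The binomial factor combines with the Beta function to give the clean Gamma ratios above; without this split one would have to handle $B(m+\alpha,n-m+\beta)$ alone, which is log-convex in $m$ and would not work.
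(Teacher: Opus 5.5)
Your proof is correct and follows essentially the same route as the paper. It factors $\pi_m \propto E_m\cdot\binom{n}{m}B(m+\alpha,n-m+\beta)$, applies Newton's inequalities to $E_m$, and shows the Beta--Binomial factor is log-concave because its successive ratio is non-increasing. Your separate handling of $\Gamma(m+\alpha)/\Gamma(m+1)$ and its reflected $\beta$-counterpart matches the paper's two-bracket argument in the appendix, where each bracket uses one of the hypotheses $\alpha\ge 1$ and $\beta\ge 1$.
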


\begin{proof}
	Write $\pi_m \propto E_m\, b_m$ with $b_m := \binom{n}{m} B(m+\alpha,\, n-m+\beta)$. The factor $E_m$ is log-concave by \autoref{lem:newton}. The factor $b_m$ is proportional to the Beta--Binomial$(n,\alpha,\beta)$ probability mass function, which is log-concave whenever $\alpha \ge 1$ and $\beta \ge 1$: its ratio
	\begin{equation}\label{eq:bb_ratio}
		\frac{b_{m+1}}{b_m} = \frac{(n-m)\,(m+\alpha)}{(m+1)\,(n-m-1+\beta)}
	\end{equation}
	is non-increasing in $m$ under these conditions (\autoref{app:properties_proofs}). A product of two log-concave sequences is log-concave, and a log-concave sequence with no internal zeros is unimodal with contiguous modes. For the uniform prior, $B(m+1,\,n-m+1) = \big[(n+1)\binom{n}{m}\big]^{-1}$, so $\pi_m \propto e_m(\mathbf{t})/\binom{n}{m} = E_m$.
\end{proof}

Both hypotheses are sharp. For a spiky prior with $\alpha < 1$ or $\beta < 1$ the Beta--Binomial factor is itself U-shaped and log-concavity of $\{\pi_m\}$ can fail; the ratio \eqref{eq:bb_ratio} ceases to be monotone. The requirement $\alpha,\beta \ge 1$ asks only that the prior on $w$ not concentrate at an endpoint, and is met by the uniform, the informative $\mathpzc{Beta}(1,4)$, and every weakly-informative prior used in \autoref{sec:numerical}.

\begin{remark}[A Poisson--Binomial view]\label{rem:poibin}
	Conditional on $w$ the labels $z_i$ are independent, so the signal count is a Poisson--Binomial variable, $N_f \mid w, \mathpzc{X} \sim \mathrm{PoiBin}\big(r_1(w),\dots,r_n(w)\big)$ with $r_i(w) = w t_i/(w t_i + 1 - w)$, whose generating polynomial $\prod_i\big(1 - r_i(w) + r_i(w)\, z\big)$ is real-rooted. The exact posterior $\{\pi_m\}$ is therefore a $\mathpzc{Beta}(\alpha,\beta)$-mixture of Poisson--Binomial laws. Log-concavity is \emph{not} automatically inherited under mixing -- averaging log-concave sequences can yield a multimodal one -- so \autoref{thm:logconcave} is a genuine statement about these particular Beta--Binomial mixing weights, and it situates the two-groups posterior within the theory of real-rooted (P\'olya frequency) sequences and their preservation properties.
\end{remark}

Log-concavity is exactly the property that makes the closed-form summaries of \autoref{sec:posterior} well-behaved as \emph{decisions}, not just as numbers.

\begin{corollary}[Single mode, coherent intervals]\label{cor:concentration}
	Under the hypotheses of \autoref{thm:logconcave}: (i) the posterior mode of $N_f$ is the largest $m$ with $e_m(\mathbf{t})/e_{m-1}(\mathbf{t}) \ge (n-m+\beta)/(m-1+\alpha)$, and once this fails at some $m$ it fails for all larger $m$, so the mode is located by a single $O(n)$ scan; (ii) $\{\pi_m\}$ has geometrically decaying tails away from the mode, so every posterior credible set for $N_f$ is a contiguous interval $\{m : a \le m \le b\}$, and the equal-tailed interval of \autoref{sec:posterior} is genuinely an interval rather than a union of disconnected pieces.
\end{corollary}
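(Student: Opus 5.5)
The plan is to work directly with the successive ratios $\rho_m := \pi_m/\pi_{m-1}$, $m=1,\dots,n$. By \eqref{eq:count_weights} and the Beta recurrence $B(m+\alpha,n-m+\beta)/B(m-1+\alpha,n-m+1+\beta) = (m-1+\alpha)/(n-m+\beta)$,
\begin{equation*}
	\rho_m = \frac{e_m(\mathbf{t})}{e_{m-1}(\mathbf{t})}\cdot\frac{m-1+\alpha}{n-m+\beta}.
\end{equation*}
Then $\pi_m \ge \pi_{m-1}$ is equivalent to $\rho_m \ge 1$, i.e. to $e_m/e_{m-1} \ge (n-m+\beta)/(m-1+\alpha)$. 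This is exactly the criterion in (i). \autoref{lem:newton} guarantees $e_m>0$ for all $0\le m\le n$, so every ratio is well defined and positive.

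For (i), I would note that \autoref{thm:logconcave} gives $\pi_m^2\ge\pi_{m-1}\pi_{m+1}$. With all $\pi_m>0$ this is equivalent to $\rho_{m+1}\le\rho_m$, so $\{\rho_m\}$ is non-increasing. Hence the set $\{m:\rho_m\ge1\}$ is an initial segment $\{1,\dots,m^\ast\}$: once $\rho_m<1$ at some $m$, it stays below $1$ for all larger $m$. It follows that $\pi$ increases weakly up to $m^\ast$ and decreases strictly afterwards. So $m^\ast$ (or $0$ if the set is empty) is a mode, in fact the largest mode. The ratios $e_m/e_{m-1}$ come from the $e_m$ already computed, so a single left-to-right scan that stops at the first failure runs in $O(n)$.

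For (ii), fix the mode $m^\ast$. For $k>m^\ast$ monotonicity gives $\pi_k \le \pi_{m^\ast+1}\,\rho_{m^\ast+1}^{\,k-m^\ast-1}$ with $\rho_{m^\ast+1}<1$, which is a geometric upper-tail bound. The lower tail is handled symmetrically with $1/\rho_m$ for $m$ below the lowest mode, using the strict inequality just before the plateau of modes. Contiguity of credible sets then follows from unimodality:
\begin{itemize}
	\item Any superlevel set $\{m:\pi_m\ge c\}$ of a unimodal sequence is an integer interval, so HPD sets are contiguous.
	\item The equal-tailed set $\{a\le m\le b\}$, with $a,b$ the lower and upper quantiles of the CDF, is an interval by construction.
	\item Under log-concavity this interval also contains the mode whenever it carries more than half the mass, so it is not degenerate.
\end{itemize}

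The main obstacle is wording rather than mathematics. The definition "largest $m$ satisfying the inequality" must be reconciled with possible ties, i.e. a plateau of modes where $\rho_m=1$; I would state that it picks the largest mode. I would also treat "geometric decay" as a decay rate set by the ratio at the edge of the mode, not as a uniform constant.
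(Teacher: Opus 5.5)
Your proof is correct and takes the route the paper intends. The paper gives no separate proof: it reads the corollary off \autoref{thm:logconcave}, and the criterion in (i) is exactly $\pi_m/\pi_{m-1} = \frac{e_m}{e_{m-1}}\cdot\frac{m-1+\alpha}{n-m+\beta} \ge 1$, with this ratio non-increasing. Reading ``every credible set'' as HPD sets and noting that the equal-tailed set is an interval by construction are the right interpretations.

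One correction: drop your third bullet, which is unnecessary and false. Take a flat prior and all $t_i = r<1$ with $r$ close to $1$; then $\pi_m \propto r^m$ is log-concave with unique mode $0$. For $r=0.999$, $n=100$ this gives $\pi_0 \approx 0.010$ and $\pi_0+\pi_1 \approx 0.021 < 0.025$, so the equal-tailed $95\%$ set starts at $m=2$ and excludes the mode.
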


\autoref{fig:structural}(a) shows the shape these results guarantee on a rare-signal example: the count posterior is a single log-concave hump with a contiguous $95\%$ interval, where the EM estimate is a bare point.

\begin{figure}[t]
	\centering
	\includegraphics[width=\textwidth]{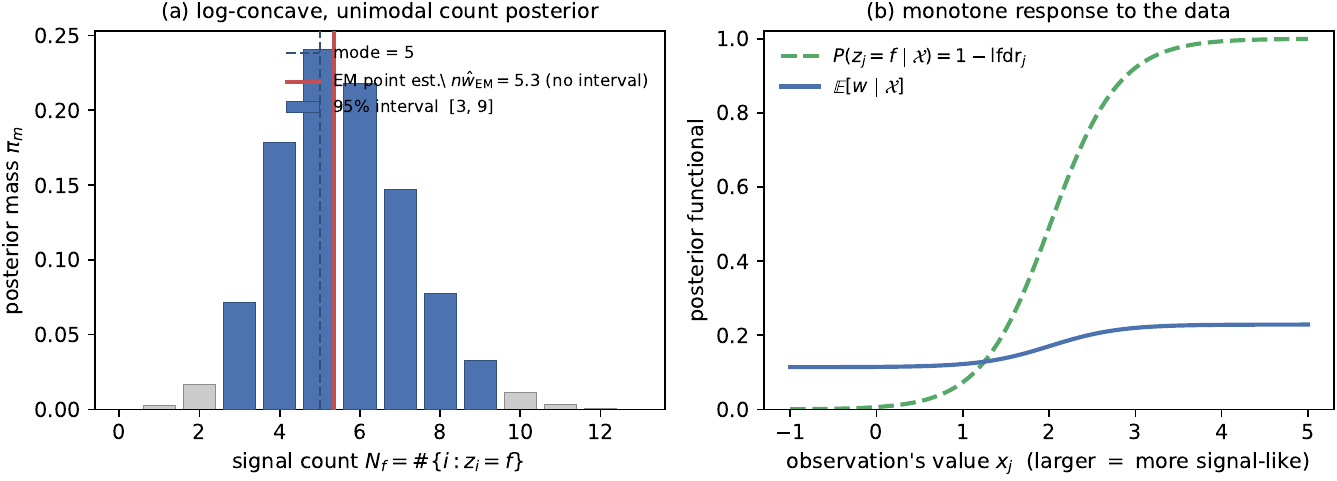}
	\caption{Structural properties of the exact two-groups posterior (Gaussian model $x_i \sim w\,\mathcal N(2.5,1) + (1-w)\,\mathcal N(0,1)$, uniform prior). \textbf{(a)} The posterior over the latent signal count $N_f$ ($n=24$) is log-concave and unimodal (\autoref{thm:logconcave}), with a single mode and a contiguous equal-tailed $95\%$ credible interval (\autoref{cor:concentration}); the EM point estimate supplies a number but no interval. \textbf{(b)} Monotone response to the data (\autoref{prop:monotone}, \autoref{cor:decision}): sweeping the value $x_j$ of one observation, both the posterior mean prevalence $\EX[w\mid\mathpzc X]$ and that observation's signal probability $P(z_j=f\mid\mathpzc X) = 1-\mathrm{lfdr}_j$ increase monotonically -- more signal-like data can only raise the inferred prevalence and lower the local FDR.}
	\label{fig:structural}
\end{figure}

The second structural fact is an exact identity that delivers all posterior moments of $w$ as by-products of the evidence computation, at shifted hyperparameters.

\begin{proposition}[Evidence-ratio moment identity]\label{prop:moments}
	Let $T(a,b) := \sum_{m=0}^n e_m(\mathbf{t})\, B(m+a,\, n-m+b)$, so that the marginal likelihood is $\big(\prod_i g(x_i)\big)\, T(\alpha,\beta)/B(\alpha,\beta)$. Then for every real $r > -\alpha$ the posterior raw moments of the mixing weight are ratios of the marginal at an $\alpha$-shifted hyperparameter,
	\begin{equation}\label{eq:moment_identity}
		\EX[w^r \mid \mathpzc{X}] = \frac{T(\alpha+r,\, \beta)}{T(\alpha,\, \beta)}.
	\end{equation}
	In particular the posterior mean and variance,
	\begin{equation}
		\EX[w \mid \mathpzc{X}] = \frac{T(\alpha+1,\beta)}{T(\alpha,\beta)}, \qquad
		\mathrm{Var}[w \mid \mathpzc{X}] = \frac{T(\alpha+2,\beta)}{T(\alpha,\beta)} - \left(\frac{T(\alpha+1,\beta)}{T(\alpha,\beta)}\right)^{\!2},
	\end{equation}
	require no summation over the mixture: each is a fresh evaluation of the same $O(n^2)$ marginal-likelihood routine with $\alpha$ incremented by an integer.
\end{proposition}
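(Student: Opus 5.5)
The plan is to work directly from the unnormalised posterior density and recognise each moment as a Beta integral. By Bayes' theorem and the expansion behind \eqref{eq:marg_esp},
\[
p(w\mid\mathpzc{X}) = \frac{\sum_{m=0}^n e_m(\mathbf{t})\, w^{m+\alpha-1}(1-w)^{n-m+\beta-1}}{T(\alpha,\beta)}, \qquad w\in(0,1).
\]
The factor $\prod_i g(x_i)$ and the normaliser $1/B(\alpha,\beta)$ cancel between numerator and evidence. The denominator is exactly $T(\alpha,\beta)$ because term-by-term integration gives $\int_0^1 w^{m+\alpha-1}(1-w)^{n-m+\beta-1}\,dw = B(m+\alpha,n-m+\beta)$.

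The first step is to multiply by $w^r$ and integrate. The sum is finite (only $n+1$ terms) and every summand is nonnegative, so exchanging sum and integral needs no justification beyond linearity. Each term becomes $e_m(\mathbf{t})\int_0^1 w^{m+\alpha+r-1}(1-w)^{n-m+\beta-1}\,dw = e_m(\mathbf{t})\,B(m+\alpha+r,\,n-m+\beta)$. Summing over $m$ gives exactly $T(\alpha+r,\beta)$, which yields \eqref{eq:moment_identity}.

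The only point requiring care, and the one place the hypothesis enters, is convergence of these Euler integrals. Near $w=0$ the $m=0$ term has exponent $\alpha+r-1$, which is integrable iff $\alpha+r>0$, that is $r>-\alpha$. Near $w=1$ the exponent $n-m+\beta-1\ge\beta-1>-1$ is fine for all $m$. So for $r>-\alpha$ every term is finite and $T(\alpha+r,\beta)$ is well defined. Conversely, for $r\le-\alpha$ the $m=0$ term (with $e_0=1>0$) diverges, which shows the range is sharp. Positivity of all $t_i$ is not even needed for finiteness, since each $e_m\ge0$ and $e_0=1$.

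The specialisations are then immediate. Taking $r=1$ gives the mean. Taking $r=2$ and using $\mathrm{Var}[w\mid\mathpzc{X}]=\EX[w^2\mid\mathpzc{X}]-\EX[w\mid\mathpzc{X}]^2$ gives the variance. The computational remark follows because $T(\alpha+r,\beta)$ is, up to the known factors $\prod_i g(x_i)$ and $B(\alpha+r,\beta)$, the marginal likelihood at hyperparameters $(\alpha+r,\beta)$. It is therefore produced by the same $O(n^2)$ routine, with Proposition \eqref{eq:marg_rec1}-style recursions run at the shifted $\alpha$. There is no real obstacle here; the substantive content is just recognising the moment integral as a shifted evidence.
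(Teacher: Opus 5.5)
Your proof is correct and is essentially the paper's argument. You write the posterior as the finite Beta mixture $\sum_m e_m(\mathbf{t})\, w^{m+\alpha-1}(1-w)^{n-m+\beta-1}/T(\alpha,\beta)$, absorb $w^r$ into the exponent, and integrate term by term so that $B(m+\alpha,n-m+\beta)$ becomes $B(m+\alpha+r,n-m+\beta)$, while $\prod_i g(x_i)$ and $B(\alpha,\beta)$ cancel. Your explicit check that the $m=0$ Euler integral converges exactly when $r>-\alpha$ (and diverges otherwise, since $e_0=1$) is a worthwhile addition that the paper's proof leaves implicit.
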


\begin{proof}
	With $g$ absorbed into $T$, the unnormalised posterior density of $w$ is proportional to $\prod_i\big(w f(x_i) + (1-w) g(x_i)\big)\, w^{\alpha-1}(1-w)^{\beta-1}$. Multiplying by $w^r$ raises the Beta exponent $\alpha \mapsto \alpha + r$; integrating term by term against \eqref{eq:marg_esp} replaces $B(m+\alpha, n-m+\beta)$ by $B(m+\alpha+r, n-m+\beta)$ in both numerator and normaliser, i.e.\ $T(\alpha,\beta) \mapsto T(\alpha+r,\beta)$. The factor $\prod_i g(x_i)$ cancels.
\end{proof}

The third fact certifies that the estimator responds to the data in the only direction that is scientifically coherent: strengthening the evidence for signal in any one observation can only raise the inferred prevalence.

\begin{proposition}[Monotone response to the data]\label{prop:monotone}
	Fix $\alpha, \beta$ and all likelihood ratios except the $i$-th. As a function of $t_i > 0$, the count posterior $\{\pi_m\}$ is monotone in the likelihood-ratio order: for $t_i < t_i'$ the ratio $\pi_m(t_i')/\pi_m(t_i)$ is non-decreasing in $m$. Consequently the posterior signal count $N_f$ is stochastically increasing in every $t_i$, and so are the posterior mean prevalence $\EX[w\mid\mathpzc{X}]$ and every upper-tail probability $P(N_f \ge k \mid \mathpzc{X})$.
\end{proposition}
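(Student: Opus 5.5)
The plan is to isolate the dependence on $t_i$ inside the elementary symmetric polynomials and reduce the likelihood-ratio order to log-concavity of the leave-one-out sequence. Using the generating polynomial \eqref{eq:genpoly}, $\prod_k(1+t_k z) = (1+t_i z)\prod_{k\ne i}(1+t_k z)$, so
\[
e_m(\mathbf{t}) = a_m + t_i\, a_{m-1}, \qquad a_m := e_m(\mathbf{t}_{-i}),
\]
with $a_{-1}=a_n=0$. The Beta factor $B(m+\alpha,n-m+\beta)$ and the normaliser do not affect the ratio's monotonicity in $m$: the normalisers contribute only a constant in $m$, and the Beta factors cancel. Hence
\[
\frac{\pi_m(t_i')}{\pi_m(t_i)} \propto \frac{a_m + t_i' a_{m-1}}{a_m + t_i a_{m-1}} .
\]
No hypothesis $\alpha,\beta\ge1$ is needed, which fits the statement.

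Write $\rho_m := a_{m-1}/a_m$ for $1\le m\le n-1$, with $\rho_0=0$ and $\rho_n=+\infty$. The ratio then equals $(1+t_i'\rho_m)/(1+t_i\rho_m)$ in the interior, and $t_i'/t_i$ at $m=n$. The map $\rho\mapsto(1+t_i'\rho)/(1+t_i\rho)$ is increasing in $\rho\ge0$ when $t_i'>t_i$, because its derivative is $(t_i'-t_i)/(1+t_i\rho)^2>0$. It runs from $1$ at $\rho=0$ to the limit $t_i'/t_i$ as $\rho\to\infty$, which covers both endpoints. So the claim reduces to $\rho_m$ being non-decreasing in $m$, i.e. $a_{m}^2\ge a_{m-1}a_{m+1}$.

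That inequality is the main step. Apply \autoref{lem:newton} to the $n-1$ positive numbers $\mathbf{t}_{-i}$. Newton's inequality for the normalised sequence gives $a_m^2 \ge a_{m-1}a_{m+1}\binom{n-1}{m}^2/\big(\binom{n-1}{m-1}\binom{n-1}{m+1}\big)$, and the binomial factor is $\ge1$, so the unnormalised sequence is log-concave. Positivity of all $t_k$ means $a_m>0$ for $0\le m\le n-1$, so there are no internal zeros and the ratios $\rho_m$ are well defined. The one point needing care is the boundary indices where $a_{-1}=0$ or $a_n=0$, which the conventions $\rho_0=0$ and $\rho_n=\infty$ handle. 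This gives the likelihood-ratio ordering.

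For the consequences, likelihood-ratio order implies usual stochastic order by the standard argument. Monotone ratio means the two pmfs cross once, so $P(N_f\ge k)$ is larger under $t_i'$ for every $k$. For the prevalence, by \eqref{eq:post_mixture} we have $\EX[w\mid\mathpzc X]=\sum_m\pi_m (m+\alpha)/(n+\alpha+\beta)$. This is the expectation of an increasing function of $N_f$, hence non-decreasing in $t_i$. Similarly, $P(w\le c\mid\mathpzc X)=\sum_m\pi_m I_c(m+\alpha,n-m+\beta)$ is non-increasing, since the Beta family is stochastically increasing in $m$, so $w$ itself is stochastically increasing as well.
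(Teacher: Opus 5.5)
Your proof is correct and follows essentially the same route as the paper. Both arguments write $e_m(\mathbf{t}) = a_m + t_i a_{m-1}$ by multi-affinity, cancel the Beta factors and normalisers, and reduce the likelihood-ratio order to log-concavity of the leave-one-out sequence $\{e_m(\mathbf{t}_{-i})\}$ via Newton's inequalities; the paper states the key step as the $\mathrm{TP}_2$ cross-difference $(t_i'-t_i)(a_m a_{m'-1} - a_{m-1}a_{m'}) \ge 0$ rather than through your increasing map $\rho \mapsto (1+t_i'\rho)/(1+t_i\rho)$, which is only a cosmetic difference, and your explicit boundary conventions and the added stochastic ordering of $w$ itself are sound extras.
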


\begin{proof}
	By multi-affinity of the elementary symmetric polynomials, $e_m(\mathbf{t}) = a_m + t_i\, a_{m-1}$, where $a_m := e_m(\mathbf{t}_{-i}) \ge 0$ are the leave-one-out symmetric functions, themselves log-concave by \autoref{lem:newton}. For $t_i < t_i'$ and $m < m'$, the Beta--Binomial factor and the normalisers cancel from the cross-difference, leaving
	\begin{equation*}
		(a_{m'} + t_i' a_{m'-1})(a_m + t_i a_{m-1}) - (a_m + t_i' a_{m-1})(a_{m'} + t_i a_{m'-1}) = (t_i' - t_i)\big(a_m a_{m'-1} - a_{m-1} a_{m'}\big) \ge 0,
	\end{equation*}
	the final inequality being the log-concavity of $\{a_m\}$ (Karlin's criterion: $a_m a_{m'-1} \ge a_{m-1} a_{m'}$ for $m \le m'-1$). This is the total-positivity ($\mathrm{TP}_2$) condition defining the monotone-likelihood-ratio order; stochastic monotonicity of $N_f$, and hence of every non-decreasing functional of it, follows.
\end{proof}

Monotonicity has a direct consequence for the multiple-testing use of the model (\autoref{sec:app_meta}--\ref{sec:app_genomics}), where the per-observation posterior \eqref{eq:responsibility} is thresholded to make discoveries.

\begin{corollary}[Coherent testing rule]\label{cor:decision}
	The local false discovery rate $\mathrm{lfdr}_j := 1 - P(z_j = f \mid \mathpzc{X})$ is strictly decreasing in the likelihood ratio $t_j$, and $P(z_j = f \mid \mathpzc{X})$ is non-decreasing in every other $t_i$. Hence the Bayes rule that rejects $\{\,j : \mathrm{lfdr}_j \le c\,\}$ is a threshold rule on the likelihood ratios, $\{\,j : t_j \ge \tau\,\}$: the exact local-FDR ranking coincides with the likelihood-ratio ranking, and strengthening the evidence for any one observation never ejects another from the discovery set.
\end{corollary}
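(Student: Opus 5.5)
We work from the closed form \eqref{eq:responsibility} and write it through the leave-one-out quantities of \autoref{prop:monotone}. Let $a_m := e_m(\mathbf{t}_{-j})$ and $\beta_m := B(m+\alpha,\, n-m+\beta)$. Multi-affinity gives $e_m(\mathbf{t}) = a_m + t_j a_{m-1}$. Hence the denominator of \eqref{eq:responsibility} splits as $A_0 + t_j A_1$, where
\[
A_0 := \sum_{m=0}^{n-1} a_m \beta_m, \qquad A_1 := \sum_{m=0}^{n-1} a_m B(m+\alpha+1,\, n-1-m+\beta).
\]
The numerator is exactly $t_j A_1$. Therefore
\[
P(z_j=f\mid\mathpzc{X}) = \frac{t_j A_1}{A_0 + t_j A_1},
\]
where $A_0, A_1 > 0$ do not depend on $t_j$. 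This is a strictly increasing function of $t_j$ (a Möbius map $x\mapsto x/(c+x)$ with $c=A_0/A_1>0$). So $\mathrm{lfdr}_j$ is strictly decreasing in $t_j$, which is the first claim.

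For monotonicity in another ratio $t_i$, $i\neq j$, we use the expectation form
\[
P(z_j=f\mid\mathpzc{X}) = \EX_{w\mid\mathpzc{X}}\!\left[\phi_j(w)\right], \qquad \phi_j(w) = \frac{w t_j}{w t_j + 1 - w}.
\]
The function $\phi_j$ is non-decreasing in $w$. The plan is to show that the posterior of $w$ increases in the monotone-likelihood-ratio order as $t_i$ grows. By \eqref{eq:post_mixture}, the posterior of $w$ is $\sum_m \pi_m\,\mathpzc{Beta}(m+\alpha, n-m+\beta)$. By \autoref{prop:monotone}, $\{\pi_m\}$ increases in the likelihood-ratio order in $t_i$. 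The Beta kernels $\mathpzc{Beta}(w\mid m+\alpha, n-m+\beta)$ are $\mathrm{TP}_2$ in $(m,w)$, since their ratio in $m$ is $(w/(1-w))^{m'-m}$ times a constant. Karlin's composition theorem then transfers the monotone-likelihood-ratio ordering from $m$ to $w$. A cleaner alternative argument gives the same conclusion: the unnormalised posterior density in $w$ contains the single factor $(w t_i + 1 - w)$, and the ratio $(w t_i' + 1-w)/(w t_i + 1-w)$ is non-decreasing in $w$ for $t_i<t_i'$. In either case the posterior of $w$ is stochastically increasing in $t_i$. Since $\phi_j$ does not involve $t_i$ and is non-decreasing, its expectation is non-decreasing. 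This cleaner route is the one I would write; it even bypasses \autoref{prop:monotone}.

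The threshold-rule statement follows directly, and the ordering of the argument is the step needing care. For a fixed data set, the derivation above gives $P(z_j=f\mid\mathpzc{X}) = \EX[\phi(w;t_j)]$ with the same posterior of $w$ for every $j$. Here $\phi(w;t)=wt/(wt+1-w)$ is strictly increasing in $t$ for $w\in(0,1)$, and the posterior puts mass on $(0,1)$. So $\mathrm{lfdr}_j$ is a strictly decreasing function of $t_j$ through one common map. The set $\{j:\mathrm{lfdr}_j\le c\}$ is therefore $\{j: t_j\ge\tau\}$, where $\tau$ is the generalised inverse of that map at $1-c$. This identification of a single common map is what makes the ranking claim valid across observations: comparing $\mathrm{lfdr}_j$ with $\mathrm{lfdr}_k$ must be done within the same posterior, not through the leave-one-out form, whose $A_0$ and $A_1$ change with $j$.

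Finally, suppose $t_i$ is increased. Then $i$'s own lfdr falls, and every other $\mathrm{lfdr}_k$ does not rise by the second claim. Hence no member of the discovery set $\{k:\mathrm{lfdr}_k\le c\}$ is ejected.
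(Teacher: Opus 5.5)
Your proof is correct. For the first claim it coincides with the paper's: the paper also rewrites \eqref{eq:responsibility} as $P(z_j=f\mid\mathpzc{X}) = t_j N_j/(A_j + t_j N_j)$, with your $A_0, A_1$ playing the roles of its $A_j, N_j$.

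For monotonicity in the other ratios you take a different route. The paper cites the total positivity of \autoref{prop:monotone}, i.e.\ the likelihood-ratio ordering of the count posterior $\{\pi_m\}$ in $t_i$, which rests on log-concavity of the leave-one-out symmetric functions (Newton's inequalities). It leaves implicit how that ordering of $N_f$ transfers to the nonlinear functional $\EX[\phi_j(w)\mid\mathpzc{X}]$. That transfer is precisely your route (a) through the $\mathrm{TP}_2$ Beta kernel. Your route (b) observes that $t_i$ enters the unnormalised posterior of $w$ only through the factor $w t_i + 1 - w$, whose ratio at $t_i' > t_i$ is increasing in $w$. This gives the likelihood-ratio order of $w \mid \mathpzc{X}$ in one line and needs no Newton inequalities at all. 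The paper's route is heavier, but it ties the corollary to the P\'olya-frequency structure of that section and additionally orders the count $N_f$.

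Your common-map argument via $\Psi(t) = \EX_{w\mid\mathpzc{X}}[\phi(w;t)]$ also supplies a step that the paper's ``Hence'' skips. Monotonicity of $\mathrm{lfdr}_j$ in its own $t_j$, plus monotonicity in the other ratios, does not by itself order $\mathrm{lfdr}_j$ against $\mathrm{lfdr}_k$, because the leave-one-out constants differ between $j$ and $k$. It is the first equality in \eqref{eq:responsibility}, with one posterior of $w$ shared by all observations, that makes the threshold rule and the ranking claim valid. You are right to rest the argument there.
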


\begin{proof}
	The leave-one-out identity \eqref{eq:responsibility} rearranges to $P(z_j = f \mid \mathpzc{X}) = t_j N_j / (A_j + t_j N_j)$, where $A_j = \sum_m e_m(\mathbf{t}_{-j}) B(m+\alpha, n-m+\beta)$ and $N_j = \sum_m e_m(\mathbf{t}_{-j}) B(m+\alpha+1, n-1-m+\beta)$ are positive and independent of $t_j$. This is strictly increasing in $t_j$, so $\mathrm{lfdr}_j$ is strictly decreasing in $t_j$; monotonicity in the remaining $t_i$ is the total positivity of \autoref{prop:monotone}.
\end{proof}

Finally, the exact posterior obeys the standard large-sample guarantees, approached through the log-concave family of \autoref{thm:logconcave} rather than from an approximation.

\begin{proposition}[Posterior contraction]\label{prop:bvm}
	Suppose the observations are i.i.d.\ from the two-groups density $p_{w_0} = w_0 f + (1-w_0) g$ with known components $f \ne g$ and $w_0 \in (0,1)$, and write $I(w_0) = \int \frac{\big(f(x) - g(x)\big)^2}{w_0 f(x) + (1-w_0) g(x)}\, dx \in (0,\infty)$ for the Fisher information. Then for any prior $\mathpzc{Beta}(\alpha,\beta)$ with $\alpha,\beta > 0$, the marginal posterior of $w$ satisfies a Bernstein--von Mises theorem: with $\hat w_n$ the maximum-likelihood estimator,
	\begin{equation}
		\sup_{A \subseteq \mathbb{R}}\left| P\big(\sqrt{n}\,(w - \hat w_n) \in A \mid \mathpzc{X}\big) - \mathcal{N}\!\big(0,\, I(w_0)^{-1}\big)(A) \right| \xrightarrow{P} 0,
	\end{equation}
	so that $\EX[w \mid \mathpzc{X}] \to w_0$ and $n\,\mathrm{Var}[w \mid \mathpzc{X}] \to I(w_0)^{-1}$.
\end{proposition}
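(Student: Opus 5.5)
The plan is to recognise the statement as an instance of the classical Bernstein--von Mises theorem for a one-dimensional, regular, parametric model, and to check its hypotheses (e.g.\ van der Vaart, \emph{Asymptotic Statistics}, Thm.~10.1). The parameter is $w \in (0,1)$ with the family $p_w = w f + (1-w) g$. We need four things: a prior density that is continuous and positive at $w_0$; differentiability in quadratic mean at $w_0$ with nonsingular information; existence of uniformly consistent tests of $w = w_0$ against $\{|w-w_0|\ge\varepsilon\}$; and asymptotic efficiency of $\hat w_n$. The Beta$(\alpha,\beta)$ density is continuous and strictly positive on the open interval for any $\alpha,\beta>0$. The exact mixture representation \eqref{eq:post_mixture} is not needed for the limit; it only identifies \emph{which} posterior the theorem applies to.

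\textbf{Regularity.} The model is linear in $w$, so the score is $\dot\ell_w(x) = (f(x)-g(x))/p_w(x)$. The bound $|f-g|/p_w \le 1/w + 1/(1-w)$ holds pointwise, since $f \le p_w/w$ and $g \le p_w/(1-w)$. Hence the score is uniformly bounded on a neighbourhood of $w_0$, and $w\mapsto\sqrt{p_w}$ is continuously differentiable there with $\int \dot\ell_w^2 p_w = I(w)$ continuous. This yields differentiability in quadratic mean, and $I(w_0) = \int (f-g)^2/p_{w_0}$ is finite by the same bound. It is positive because $f\ne g$ on a set of positive measure. Identifiability follows immediately: $p_w = p_{w'}$ forces $(w-w')(f-g)=0$.

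\textbf{Consistent tests and the MLE.} The log-likelihood $\sum_i \log(w t_i + 1 - w) + \text{const}$ is strictly concave in $w$, being a sum of logs of affine functions that are not all constant. So $\hat w_n$ is unique, and by concavity arguments (Hjort--Pollard convexity lemma) it is consistent and asymptotically linear, with $\sqrt n(\hat w_n - w_0) = I(w_0)^{-1} n^{-1/2}\sum_i \dot\ell_{w_0}(x_i) + o_P(1)$. Uniformly consistent tests exist because the parameter space is compact and the model is identifiable and continuous in total variation: $\|p_w - p_{w'}\|_{TV} = |w-w'|\,\|f-g\|_{TV}$. Specifically, reject when $|\bar\phi_n - \EX_{w_0}\phi| > \delta$, where $\phi = \mathrm{sign}(f-g)$ is bounded and $\EX_w\phi$ is affine and strictly increasing in $w$; Hoeffding's inequality then gives exponentially small error probabilities. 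With these in hand, the BvM theorem gives total-variation convergence of the posterior of $\sqrt n(w-\hat w_n)$ to $\mathcal N(0, I(w_0)^{-1})$, which is the displayed statement.

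\textbf{Moments — the main obstacle.} Convergence in total variation does not by itself give convergence of the moments $\EX[w\mid\mathpzc X]$ and $n\mathrm{Var}[w\mid\mathpzc X]$. Boundedness of $w\in[0,1]$ gives $\EX[w\mid\mathpzc X]\to w_0$ directly, via $|\EX[w\mid\mathpzc X]-\hat w_n|\le n^{-1/2}\cdot$(a uniformly integrable quantity). For the scaled variance, the plan is to prove uniform integrability of $n(w-\hat w_n)^2$ under the posterior. This means bounding the posterior mass of $\{|w-\hat w_n| > M/\sqrt n\}$ by a Gaussian-type tail. Strict concavity helps here: the second derivative $-\sum_i (t_i-1)^2/(w t_i + 1-w)^2$ is bounded away from zero at rate $n$ near $w_0$ with high probability by the LLN. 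Because the log-posterior is concave (apart from the Beta factor, which is bounded near $w_0$), it decays at least linearly beyond the local quadratic region, so tails are sub-exponential in $\sqrt n|w-\hat w_n|$. Near the endpoints, the Beta prior with $\alpha,\beta>0$ contributes an integrable singularity carrying exponentially small posterior mass. Together these give uniform integrability, and combined with the BvM limit they yield $n\mathrm{Var}[w\mid\mathpzc X]\to I(w_0)^{-1}$ in probability.
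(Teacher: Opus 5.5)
Your proposal is correct and follows the same route as the paper, whose proof simply invokes the parametric Bernstein--von Mises theorem (van der Vaart, Thm.~10.1) for this smooth, identifiable one-parameter model under a prior that is positive and continuous at $w_0$. Your version is more complete than the paper's in four places: the paper infers regularity from concavity of $w\mapsto\log p_w$, which does not by itself give differentiability in quadratic mean; it never checks the testing condition; it never establishes the efficiency of $\hat w_n$ needed to recentre the limit at the MLE; and it omits the posterior uniform integrability without which total-variation convergence does not yield $\EX[w\mid\mathpzc{X}]\to w_0$ and $n\,\mathrm{Var}[w\mid\mathpzc{X}]\to I(w_0)^{-1}$ --- gaps that your bounded-score, Hoeffding-test, convexity-lemma and concavity-tail arguments respectively close.
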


\begin{proof}
	The single-observation log-likelihood $w \mapsto \log\big(w f(x) + (1-w) g(x)\big)$ is concave in $w$, being the logarithm of an affine function, so the one-parameter model is regular with Fisher information $I(w_0)$, and $w \mapsto p_w$ is identifiable since $f \ne g$. The claim is then the parametric Bernstein--von Mises theorem for a smooth, identifiable model under a prior that is positive and continuous at $w_0$ \citep[Thm.~10.1]{vandervaart1998}; the exact marginalisation evaluates this posterior without approximation.
\end{proof}

At the boundary $w_0 \in \{0,1\}$ -- the rare-signal regime of \autoref{sec:calibration} -- the interior argument does not apply and the limit is one-sided (a half-normal), which is precisely where symmetric Gaussian and Laplace intervals mis-cover while the exact posterior, being the true one, does not.

The contrast with the fast approximations of \autoref{sec:calibration} is instructive: an EM point estimate that has collapsed to a boundary, or a Laplace approximation built at a single mode, carries none of these guarantees -- it may be non-monotone in the data and assigns a degenerate interval. The exact posterior is log-concave, moment-complete, monotone, and asymptotically efficient by construction, and it is these properties, rather than agreement with a sampler, that distinguish it.

\section{Generalization to $K$-mixtures}\label{sec:kmix}

An algorithm with a similar complexity can be derived for a general case of $K > 2$ components, where $\bold{w} \sim \mathpzc{BL}(\bold{\alpha}, \gamma)$. To proceed further, we need the PDF of $\mathpzc{BL}$ (see \autoref{app:bl_pdf}), $q(\bold{w}|\bold{\alpha}, \gamma)$, and its mixed moments (see \autoref{app:bl_moments}):

\begin{equation}
q(\bold{w}|\bold{\alpha}, \gamma) = \frac{1}{B(\boldsymbol{{\hat{\alpha}}}) B(\alpha_{K}, \gamma)}  \left(\sum_{i=1}^{K-1} w_i \right) ^{\gamma - \sum_{i=1}^{K-1} \alpha_i} \prod_{i=1}^{K} w_i^{\alpha_i-1},
\end{equation}
\begin{equation}
		\EX\left[\prod_{i=1}^K w_i^{c_i}\right] = \frac{B(\boldsymbol{\hat{\alpha}} + \mathbf{\hat{c}})}{B(\boldsymbol{\hat{\alpha}})} \frac{B(\gamma + \sum_{i=1}^{K-1} c_i, \alpha_K + c_K)}{B(\gamma, \alpha_K)},
\end{equation}
where $B(\boldsymbol{\hat{\alpha}})$ is the multivariate Beta function, $B(\gamma, \alpha_K)$ is a bivariate Beta function, $\boldsymbol{\hat{\alpha}}$ is a subvector of $\boldsymbol{\alpha}$ without the last element $\alpha_K$, $\boldsymbol{c} \in \Re^K$ and $\boldsymbol{\hat{c}}$ is $\boldsymbol{c}$ without its last element $c_K$.

The integral of interest is:

\begin{align*}
	\int_{\Delta_{K-1}}\widetilde{L}(\mathpzc{X}|\theta, \mathbf{w}) q(\mathbf{w}|\boldsymbol\alpha, \gamma) d\bold{w} = \int_{\Delta_{K-1}} \prod_{i=1}^n \left( \sum_{j=1}^{K} w_j f_j(x_i) \right) q(\mathbf{w}|\boldsymbol\alpha, \gamma) d\bold{w},
\end{align*}
where $\int_{\Delta_{K-1}}$ is an integral over $K-1$ dimensional simplex.
Let's first expand $\widetilde{L}(\mathpzc{X}|\theta, \mathbf{w})$:

\begin{align*}
	\widetilde{L}(\mathpzc{X}|\theta, \mathbf{w}) = \prod_{i=1}^n \left( \sum_{j=1}^{K} w_j f_j(x_i) \right) = \prod_{i=1}^n \left( w_K f_K+ \sum_{j=1}^{K-1} w_j f_j(x_i) \right) = \prod_{i=1}^n \left( (1 - \sum_{j=1}^{K-1} w_j) f_K(x_i)+ \sum_{j=1}^{K-1} w_j f_j(x_i) \right) = \\ =
	\left(\underbrace{\prod_{i=1}^n f_K(x_i)}_{L_0(\mathpzc{X})} \right) \prod_{i=1}^n \left(  1+\sum_{j=1}^{K-1} w_j \underbrace{\left( \frac{f_j(x_i)}{f_K(x_i)} - 1\right)}_{t_j(x_i)} \right) = L_0(\mathpzc{X}) \prod_{i=1}^n \left(1 + \sum_{j=1}^{K-1} w_j t_j(x_i) \right).
\end{align*}
At this point, we have ruled out contributions related to the last $w_K$ component: since $w_K$ has been eliminated, every monomial in the expansion has $c_K = 0$. Its total degree $m = \sum_{i=1}^{K-1} c_i$ is not fixed but ranges over $0, \dots, n$ (each of the $n$ factors contributes either the constant $1$ or a $w_j t_j$ term), so the corresponding moment is proportional to $\frac{B\left(\gamma + m, \alpha_K\right)}{B(\gamma, \alpha_K)}$. This allows us to work with regular $\mathpzc{Dir}$ moments rather than convoluted $\mathpzc{BL}$ moments. Furthermore, we can notice that the moment function now takes a peculiar form:
\begin{align*}
\EX\left[\prod_{i=1}^{K-1} w_i^{c_i}\right] = \frac{B(\gamma + m, \alpha_K )}{B(\gamma, \alpha_K)} \frac{B(\boldsymbol{\hat{\alpha}} + \mathbf{\hat{c}})}{B(\boldsymbol{\hat{\alpha}})} = \frac{B(\gamma + m, \alpha_K )}{B(\gamma, \alpha_K) B(\boldsymbol{\hat{\alpha}}) \Gamma(m + \sum_{i=1}^{K-1} \alpha_i)} \prod_{i=1}^{K-1}\Gamma(\alpha_i + c_i) = \\ = c_m(\boldsymbol \alpha, \gamma) \prod_{i=1}^{K-1}\Gamma(\alpha_i + c_i).
\end{align*}
This structure hints to us that the contribution of expectation of each monomial can be decomposed multiplicatively as a product of independent contributions $\Gamma(\alpha_i + c_i)$.

Then, we reformulate the product as 

$$
\prod_{i=1}^n \left( 1+ \sum_{j=1}^{K-1} w_j t_j(x_i) \right) = \sum_{S \subseteq \left\{1, \dots, n\right\}} \prod_{i \in S} \left( \sum_{j=1}^{K-1} w_j t_j(x_i) \right) = \sum_{m=0}^n \sum_{\substack{S \subseteq \{1, \dots, n\} \\ |S|=m}} \prod_{i \in S} \left( \sum_{j=1}^{K-1} w_j t_j(x_i) \right),
$$
where $S$ is a subset of set $\{1, \dots, n\}$, i.e. each subset $S$ corresponds to a pattern of assignments of data samples to the first $K-1$ components:

\begin{itemize}
	\item  $ |S| = m$  $m$ data points are assigned to components $ \{1, \dots, K-1\} $;
	\item For complement $|\bar{S}| = n - m $: $ n - m $ data points are assigned to the $K$-th component. This case is absorbed by the $1$ term.
\end{itemize}

\begin{table}[th]
\begin{tcolorbox}[
 colback=white, 
colframe=gray, 
boxrule=0.5pt, 
arc=0mm, 
boxsep=5pt, 
left=5pt, 
right=5pt, 
top=5pt, 
bottom=5pt, 
title=$S$ expansion example, 
coltitle=black, 
fonttitle=\normalfont, 
colbacktitle=gray!20, 
enhanced, 
attach boxed title to top center={yshift=-2mm} 
	]
Although the notation is clear, a simple example should clear up any possible confusion.
For simplicity, let $K = 2$ (two components) and $n = 2$ data points. The expansion becomes:
$$
\prod_{i=1}^2 \left(1 + w_1 t_1(x_i)\right) = \sum_{S \subseteq \{1, 2\}} \prod_{i \in S} w_1 t_1(x_i).
$$
Explicitly, the subsets $S$ are:
\begin{itemize}
	\item $S = \emptyset$: No data points assigned to component 1. Contribution: $1$.
	\item $S = \{1\}$: Assign $x_1$ to component 1. Contribution: $w_1 t_1(x_1)$.
	\item $S = \{2\}$: Assign $x_2$ to component 1. Contribution: $w_1 t_1(x_2)$.
	\item $S = \{1, 2\}$: Assign both $x_1$ and $x_2$ to component 1. Contribution: $w_1^2 t_1(x_1) t_1(x_2)$.
\end{itemize}

Thus, the expanded formula is:
$$
\prod_{i=1}^2 \left(1 + w_1 t_1(x_i)\right) = 1 + w_1 t_1(x_1) + w_1 t_1(x_2) + w_1^2 t_1(x_1) t_1(x_2).
$$
\end{tcolorbox}
\end{table}

Expanding each inner factor $\left(\sum_{j=1}^{K-1} w_j t_j(x_i)\right)$ amounts to choosing, for every $i \in S$, a single component $\phi(i) \in \{1, \dots, K-1\}$ to which observation $i$ is assigned; the contribution is $\prod_{i \in S} w_{\phi(i)} t_{\phi(i)}(x_i)$. Grouping by the count vector $\boldsymbol{c} = (c_1, \dots, c_{K-1})$, $c_j = |\{i \in S : \phi(i) = j\}|$, gives
\begin{equation}\label{eq:kmix_multivar}
	\prod_{i=1}^n \left( 1 + \sum_{j=1}^{K-1} w_j t_j(x_i) \right) = \sum_{\boldsymbol{c} \ge \boldsymbol{0}} E_{\boldsymbol{c}}(\mathbf{t})\, \prod_{j=1}^{K-1} w_j^{c_j},
\end{equation}
where the coefficient is the \textit{multivariate elementary symmetric polynomial}
\begin{equation}\label{eq:multiv_esp}
	E_{\boldsymbol{c}}(\mathbf{t}) = \sum_{\substack{S_1, \dots, S_{K-1}~\text{pairwise disjoint} \\ S_j \subseteq \{1,\dots,n\},~|S_j| = c_j}} \prod_{j=1}^{K-1} \prod_{i \in S_j} t_j(x_i).
\end{equation}

\begin{remark}[Non-factorisation across components]\label{rem:correction}
	The coefficient $E_{\boldsymbol{c}}(\mathbf{t})$ does \emph{not} separate into a product of univariate elementary symmetric polynomials for $K \ge 3$. Writing $e_j(m) = \sum_{|S|=m} \prod_{i\in S} t_j(x_i)$, the product $\prod_{j=1}^{K-1} e_j(c_j)$ ranges over subsets $S_1,\dots,S_{K-1}$ chosen \emph{independently}, whereas \eqref{eq:multiv_esp} constrains the $S_j$ to be \emph{pairwise disjoint} -- each observation contributes exactly one factor $\left(\sum_j w_j t_j(x_i)\right)$ and is thus assigned to at most one component. The distinction is already visible at $K=3$, $n=2$: writing $t_j(x_i) = t_{ji}$, the coefficient of $w_1 w_2$ is $t_{11}t_{22} + t_{12}t_{21}$, whereas $e_1(1)\,e_2(1) = (t_{11}+t_{12})(t_{21}+t_{22})$ carries in addition the terms $t_{11}t_{21}$ and $t_{12}t_{22}$, in which a single observation is assigned to two components at once. Because $E_{\boldsymbol{c}}$ does not factor over components, no per-component convolution can assemble the marginal for $K \ge 3$; the joint dynamic program of \eqref{eq:joint_dp} does. For $K = 2$ there is a single non-reference component, disjointness is automatic, $E_{(c_1)} = e_1(c_1)$, and the two-component algorithm of \autoref{sec:twocomp} is recovered.
\end{remark}

Substituting \eqref{eq:kmix_multivar} and integrating monomial by monomial with the moment identity above yields
\begin{align*}
	\int_{\Delta_{K-1}}\widetilde{L}(\mathpzc{X}|\theta, \mathbf{w}) q(\mathbf{w}|\boldsymbol\alpha, \gamma) d\bold{w}
	&= L_0(\mathpzc{X}) \sum_{m=0}^n \sum_{\sum_j c_j = m} E_{\boldsymbol{c}}(\mathbf{t})\, \EX\!\left[\prod_{j=1}^{K-1} w_j^{c_j}\right] \\
	&= L_0(\mathpzc{X}) \sum_{m=0}^n c_m(\boldsymbol{\alpha}, \gamma) \underbrace{\sum_{\sum_j c_j = m} E_{\boldsymbol{c}}(\mathbf{t}) \prod_{j=1}^{K-1} \Gamma(\alpha_j + c_j)}_{G(m)}.
\end{align*}

\paragraph{A numerically robust equivalent.} Factoring out the reference component makes the $t_j(x_i) = f_j(x_i)/f_K(x_i) - 1$ (and hence the $E_{\boldsymbol{c}}$) signed, which invites catastrophic cancellation when $f_K$ is not the dominant density. It is cleaner to keep all $K$ densities and expand the likelihood directly,
\begin{equation}\label{eq:kmix_positive}
	\prod_{i=1}^n \left( \sum_{j=1}^{K} w_j f_j(x_i) \right) = \sum_{\substack{\boldsymbol{c} \in \mathbb{Z}_{\ge0}^K \\ |\boldsymbol{c}| = n}} M(\boldsymbol{c})\, \prod_{j=1}^{K} w_j^{c_j}, \qquad M(\boldsymbol{c}) = \sum_{\substack{a : \{1,\dots,n\} \to \{1,\dots,K\} \\ \mathrm{counts}(a) = \boldsymbol{c}}} \prod_{i=1}^n f_{a_i}(x_i),
\end{equation}
so that all $M(\boldsymbol{c}) \ge 0$ and, by \eqref{eq:bl_mixed_moments},
\begin{equation}\label{eq:kmix_final}
	\int_{\Delta_{K-1}}\widetilde{L}(\mathpzc{X}|\theta, \mathbf{w}) q(\mathbf{w}|\boldsymbol\alpha, \gamma)\, d\bold{w} = \sum_{\substack{|\boldsymbol{c}| = n}} M(\boldsymbol{c})\, \EX\!\left[\prod_{j=1}^{K} w_j^{c_j}\right].
\end{equation}
Every summand is positive, so the whole computation may be carried out in the log-domain with $\log$-$\mathrm{sum}$-$\exp$ and never cancels.

\subsection{Computing the coefficient array}\label{sect:zeta}

The array $\{M(\boldsymbol{c})\}$ is the coefficient tensor of the product of $n$ affine forms $\sum_{j=1}^K w_j f_j(x_i)$ in \eqref{eq:kmix_positive}. It is built by processing observations one at a time, extending the count vector by one unit:
\begin{equation}\label{eq:joint_dp}
	M^{(i)}(\boldsymbol{c}) = \sum_{j=1}^{K} f_j(x_i)\, M^{(i-1)}(\boldsymbol{c} - \boldsymbol{e}_j), \qquad M^{(0)}(\boldsymbol{0}) = 1,
\end{equation}
where $\boldsymbol{e}_j$ is the $j$-th unit vector and terms with a negative index are $0$. After $n$ steps, $\{M(\boldsymbol{c})\}_{|\boldsymbol{c}|=n}$ is complete, and \eqref{eq:kmix_final} follows in one weighted sum. Equation \eqref{eq:joint_dp} is the multivariate analogue of the two-component recurrence $e(m; n) = e(m; n-1) + t_n\, e(m-1; n-1)$: at each observation one either assigns it to component $K$ (index unchanged) or to some $j < K$ (index incremented). It is spelled out in \autoref{sect:H}; in the log-domain each update is a $\log$-$\mathrm{sum}$-$\exp$ over the $K$ predecessors, which is unconditionally stable.

\begin{remark}[Complexity]\label{rem:kcomplexity}
	The number of count vectors with $|\boldsymbol{c}| \le n$ over $K$ components is $\binom{n+K-1}{K-1} = O(n^{K-1})$; each of the $n$ observations touches every state with $O(K)$ work, giving $O(K\,n^{K})$ time and $O(n^{K-1})$ storage. For $K = 2$ this collapses to the $O(n^2)$ two-component dynamic program (and the $O(n\log^2 n)$ greedy-FFT variant), whereas for $K = 3$ it is $O(n^3)$ time and $O(n^2)$ storage. The dependence on the count-vector support is intrinsic -- $E_{\boldsymbol c}$ does not factor across components (Remark~\ref{rem:correction}) -- so no per-component convolution can reduce it; the scheme is nonetheless very practical for the small $K$ ($2$ or $3$) that arise in the applications of the introduction, especially within a hierarchy of moderate-sized strata.
\end{remark}

The two elementary-symmetric-polynomial routines below (dynamic programming and greedy FFT) are the $K=2$ specialisation of \eqref{eq:joint_dp}: they remain the workhorse for the two-groups model and are the innermost building block for larger $K$. The underlying univariate recurrence is
$e(m; n) = e(m; n-1) + t_n \cdot e(m-1; n-1)$, obtained by partitioning subsets $S$ into those excluding $t_n$ (contributing $e(m; n-1)$) and those including it (contributing $t_n\, e(m-1; n-1)$):

\begin{algorithm}
	\caption{Compute elementary symmetric polynomials via DP}
	\begin{algorithmic}[1]
		\Function{CalculateESP}{$m$, $n$, $t_1, \ldots, t_n$}
		\State Initialize 2D array $dp$ of size $(m+1) \times (n+1)$ with $0$s
		\For{$k = 0$ to $n$}
		\State $dp[0][k] \gets 1$ \Comment{Base case: $e(0; k) = 1$}
		\EndFor
		\For{$j = 1$ to $n$}
		\For{$i = 1$ to $\min(m, j)$}
		\State $dp[i][j] \gets dp[i][j-1] + t_j \cdot dp[i-1][j-1]$ \Comment{Recurrence}
		\EndFor
		\EndFor
		\State \Return $dp[m][n]$
		\EndFunction
	\end{algorithmic}
\end{algorithm}
\begin{algorithm}
	\caption{Compute elementary symmetric polynomials via greedy FFT}
	\begin{algorithmic}[1]
		\Procedure{CalculateESP}{$\boldsymbol{t}$}
		\State \textbf{Input:} Real vector $\boldsymbol{t}$ of size $n$
		\State \textbf{Output:} Coefficients of $P(z) = \prod_{i=1}^{n} (1 + t_i z)$, i.e. the elementary symmetric polynomials
		\State \textbf{Initialize:} Create an empty min-heap $Q$ (priority based on polynomial degree)
		\For{$i \gets 1$ to $n$}
		\State $P_i(z) \gets 1 + t_i z$
		\State \textbf{Insert} $P_i(z)$ into $Q$
		\EndFor
		\While{$|Q| > 1$}
		\State $A(z) \gets$ \textbf{Extract-Min}$(Q)$
		\State $B(z) \gets$ \textbf{Extract-Min}$(Q)$
		\State $C(z) \gets$ \Call{PolyMult}{$A(z), B(z)$} \Comment{Multiply polynomials, see \autoref{app:sect_fft_asstmptotic}}
		\State \textbf{Insert} $C(z)$ into $Q$
		\EndWhile
		\State \textbf{return} \textbf{Extract-Min}$(Q)$ \Comment{Final polynomial $P(z)$}
		\EndProcedure
	\end{algorithmic}
\end{algorithm}

\subsection{The general $K$-component algorithm}\label{sect:H}

For $K \ge 3$ the marginal is assembled by the joint dynamic program \eqref{eq:joint_dp}, since the coefficient array does not factor across components (Remark~\ref{rem:correction}). The pseudocode below carries the counts of the first $K-1$ components explicitly; the count of the reference component $K$ is implied by the number of observations processed. Working in the log-domain keeps every partial sum positive and the computation unconditionally stable.

\begin{algorithm}[H]
	\caption{Exact $K$-component marginal via the joint count DP (log-domain)}
	\begin{algorithmic}[1]
		\Function{KMixLogMarginal}{$F \in \mathbb{R}^{n\times K}$, $\boldsymbol{\alpha}$, $\gamma$}
		\State $A[\boldsymbol{c}] \gets -\infty$ for all $\boldsymbol{c} \in \{0,\dots,n\}^{K-1}$; \quad $A[\boldsymbol{0}] \gets 0$ \Comment{$A[\boldsymbol{c}] = \log M(\boldsymbol{c})$}
		\For{$i = 1$ to $n$}
		\State $A'[\boldsymbol{c}] \gets A[\boldsymbol{c}] + \log F_{i,K}$ \Comment{assign obs.\ $i$ to reference comp.\ $K$}
		\For{$j = 1$ to $K-1$}
		\State $A'[\boldsymbol{c} + \boldsymbol{e}_j] \gets \operatorname{logaddexp}\!\big(A'[\boldsymbol{c} + \boldsymbol{e}_j],\; A[\boldsymbol{c}] + \log F_{i,j}\big)$ \Comment{assign to comp.\ $j$}
		\EndFor
		\State $A \gets A'$
		\EndFor
		\State \Return $\operatorname{logsumexp}_{|\boldsymbol{c}| \le n}\big( A[\boldsymbol{c}] + \log \EX[\textstyle\prod_j w_j^{c_j}] \big)$ \Comment{$c_K = n - |\boldsymbol{c}|$, moment via \eqref{eq:bl_mixed_moments}}
		\EndFunction
	\end{algorithmic}
\end{algorithm}

When the $t_j(x_i)$ are known to be well behaved (e.g.\ a dominant reference component, or the two-groups case), the signed form $Z = L_0 \sum_m c_m(\boldsymbol{\alpha},\gamma) G(m)$ with $G(m)$ computed from \eqref{eq:multiv_esp} is an equivalent, sometimes faster, route; the log-domain program above is the default because it never cancels.

\section{Numerical simulation}\label{sec:numerical}

We validate the exact algorithms and quantify their behaviour on both synthetic and real data. A reference Python implementation reproducing every figure and table of this section is provided (\autoref{sec:reproducibility}); throughout, ``exact'' refers to the log-domain algorithms of Sections \ref{sec:twocomp}--\ref{sec:kmix}.

\subsection{Correctness}

Table~\ref{tab:correctness} certifies the exact marginal against independent ground truth. For small $n$ we compare with the naive $2^n$ enumeration and with adaptive and $60$-digit quadrature of the defining integral; the paper's own $S_{\mathbf{t}}$ recurrences (\autoref{eq:marg_rec1}, \autoref{eq:marg_ab}) are cross-checked against their brute-force definition. For the $K$-component program we compare with a $K^n$ brute-force enumeration and with Monte-Carlo integration over the simplex. Agreement is at the level of floating-point round-off wherever an independent exact reference exists, and within Monte-Carlo error otherwise.

\begin{table}[H]
	\centering
	\small
	\begin{tabular}{@{}llll@{}}
		\toprule
		Quantity & Reference & Regime & Max.\ relative error \\
		\midrule
		2-comp.\ marginal $Z$ & $2^n$ enumeration & $n \le 20$ & $1.7\times10^{-12}$ \\
		2-comp.\ marginal $Z$ & $60$-digit ESP sum (mpmath) & $n \le 3000$ & $7.3\times10^{-12}$ (log $Z$) \\
		$S_{\mathbf{t}}$ recurrences & brute-force definition & $n \le 9$ & $<10^{-15}$ \\
		posterior mean/var of $w$ & fine-grid quadrature & $n = 40$ & $<10^{-4}$ \\
		local FDR (per obs.) & leave-one-out quadrature & $n = 12$ & $<10^{-5}$ \\
		$K$-comp.\ marginal $Z$ & $K^n$ enumeration & $K{=}3, n\le 8$ & $<10^{-12}$ (log $Z$) \\
		$K$-comp.\ marginal $Z$ & Monte-Carlo (simplex) & $K{=}3, n = 18$ & $<2\%$ (MC error) \\
		\bottomrule
	\end{tabular}
	\caption{Validation of the exact algorithms against independent ground truth, as reproduced by the accompanying validation suite (\autoref{sec:reproducibility}). All checks pass at the accuracy the reference permits.}
	\label{tab:correctness}
\end{table}

\subsection{Scaling and numerical stability}

\autoref{fig:scaling} contrasts the exact marginal with the intractable $2^n$ enumeration and reports the scaling of the three backends. The naive integral is already infeasible by $n=20$ (over $30$~s), whereas the exact log-domain evaluation takes well under a millisecond there and scales as $O(n^2)$ thereafter (about $8$~ms at $n=10^3$ and $0.2$~s at $n=6033$), remaining accurate for every $n$ tested. The plain-domain DP and the greedy FFT are faster per operation but are limited by the plain-domain elementary symmetric polynomials, whose magnitudes grow combinatorially ($e_m \sim \binom{n}{m}$, overflowing double precision near $n \approx 10^{3}$) and whose dynamic range explodes for well-separated components; the log-domain DP is therefore the only backend we recommend for evaluating the marginal, and at very large $n$ its $O(n^2)$ cost -- not any accuracy deficit -- is the binding constraint. Its accuracy is validated to $\sim\!10^{-12}$ relative error against a $60$-digit reference for every $n$ and every separation tested (\autoref{tab:correctness}); the plain-domain FFT backend, by contrast, returns finite but grossly wrong values well before it overflows, and we document that failure -- the reason the FFT is retained only as a complexity result -- in \autoref{app:sect_fft_asstmptotic}.

\begin{figure}[H]
	\centering
	\includegraphics[width=\textwidth]{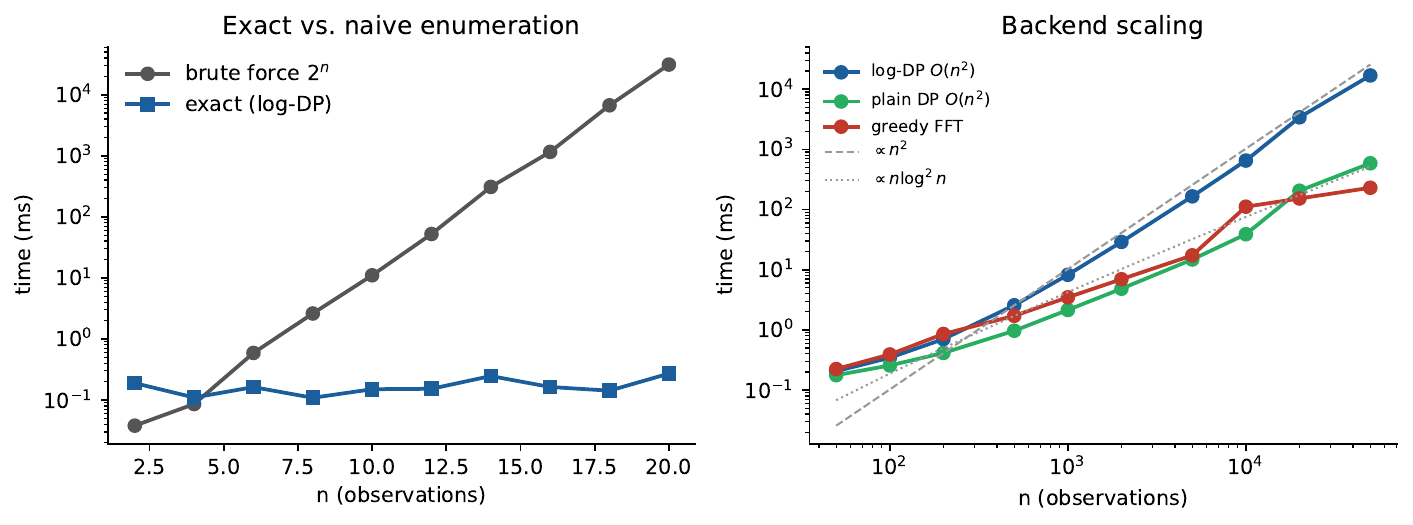}
	\caption{Left: the exact marginal (log-domain DP) versus naive $2^n$ enumeration; the latter is already $>30$~s at $n=20$. Right: backend scaling; the log-domain DP is $O(n^2)$, the greedy FFT approaches $O(n\log^2 n)$.}
	\label{fig:scaling}
\end{figure}

\subsection{Comparison with EM and MCMC}

The exact posterior of $w$ is a finite mixture of Beta distributions (\autoref{sec:posterior}), returned deterministically in one pass together with the marginal likelihood. \autoref{fig:mcmc} compares it, on a two-groups problem, with a data-augmentation Gibbs sampler (whose stationary law is exactly $p(w\mid\mathpzc{X})$), with PyMC's NUTS, and with MAP--EM. The exact posterior mean, standard deviation, and $95\%$ credible interval agree with both samplers to three or four significant figures (e.g.\ at $n=400$: exact mean $0.2839$, sd $0.0270$, versus Gibbs $0.2838$/$0.0272$ and NUTS $0.2837$/$0.0272$), and the per-observation posterior probabilities match the Gibbs estimates to $1.5\times10^{-4}$. This agreement is a self-consistency check, not an accuracy advantage: a converged sampler targets the same posterior. The operational gains are that the exact computation is deterministic, needs no burn-in, effective-sample-size or Gelman--Rubin diagnostics, and -- unlike EM, which returns only a point estimate -- delivers full uncertainty quantification. Its speed advantage is largest at small and moderate $n$ ($\sim\!170\times$ versus Gibbs at $n=100$, $\sim\!40\times$ versus NUTS at $n=400$) and narrows as the $O(n^2)$ cost grows, falling to $\sim\!3\times$ versus Gibbs by $n=6000$; for a single very large stratum a sampler is eventually the faster choice.

\begin{figure}[H]
	\centering
	\includegraphics[width=\textwidth]{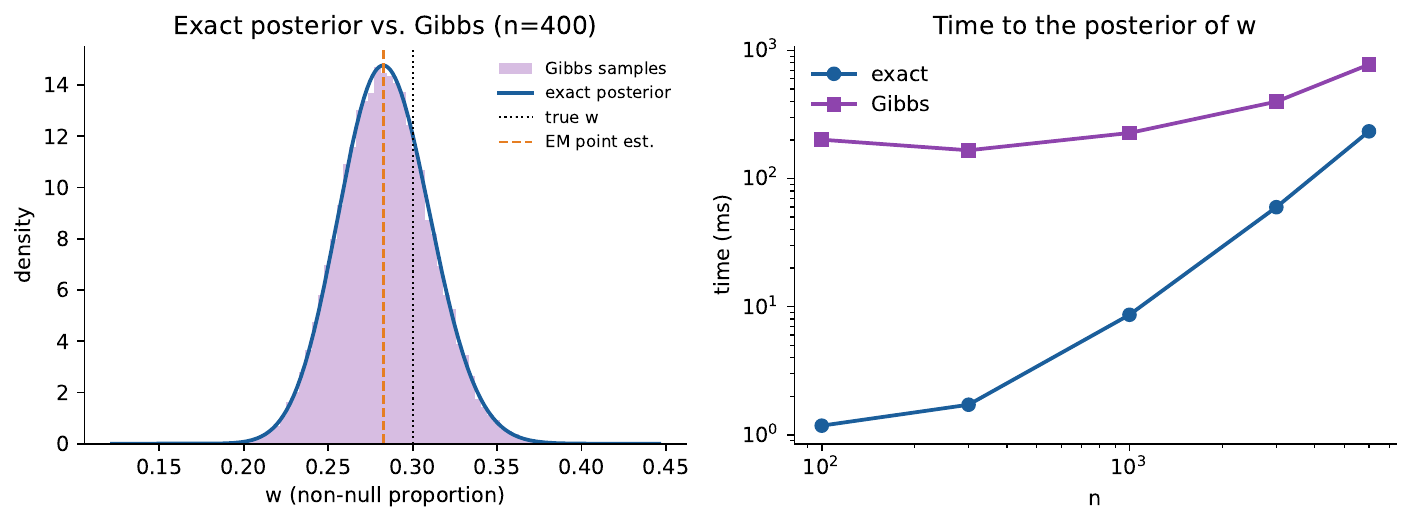}
	\caption{Left: exact Beta-mixture posterior of $w$ (curve) against $50{,}000$ post-burn-in Gibbs draws (histogram); the EM point estimate carries no uncertainty. Right: time to obtain the posterior of $w$, exact versus Gibbs.}
	\label{fig:mcmc}
\end{figure}

\subsection{Calibrated uncertainty where fast approximations fail}\label{sec:calibration}

Because the exact method and a sampler both target $p(w\mid\mathpzc{X})$, the exact method cannot be ``more accurate'' than a converged MCMC run. Its advantage is over the \emph{cheap} approximations that practitioners actually reach for -- a Gaussian/Laplace posterior, an EM point estimate with a Wald interval, or a moment-matched Beta -- whose intervals are not trustworthy in the small-sample, rare-signal regime. To measure this we ran a frequentist study: for each sample size $n$ and true weight $w$ we draw $R=300$ two-groups datasets, form the nominal $95\%$ interval four ways, and record both its coverage and its total-variation distance to the exact posterior (\autoref{fig:calibration}). The exact equal-tailed credible interval is calibrated across the whole grid (mean coverage $0.952$, range $[0.927,0.977]$) and, being the truth, has zero shape error. The approximations fail in different ways. The logit-Laplace interval \emph{under-covers} outright -- $0.78$ at $n=10$, $w=0.05$, about $12$ standard errors (of the exact-method coverage estimate) below. The EM+Wald and moment-matched-Beta intervals instead keep their nominal \emph{containment count}, but only by being the wrong shape: EM+Wald clips a symmetric Wald interval at $0$, inflating its standard deviation roughly two-fold, while the moment-matched Beta is outright infeasible -- its local variance exceeds $w(1-w)$, admitting no unimodal Beta -- in $59\%$ of the hardest cells, and where it is defined it sits \emph{further} from the exact posterior than the Laplace approximation ($\mathrm{TV}=0.33$ versus $0.13$ at $n=10$, $w=0.05$). So coverage alone is a weak diagnostic: only the exact posterior is simultaneously calibrated and shape-correct. Where the posterior is near-Gaussian ($w=0.5$, or large $n$) all four methods agree, as asymptotics require; the exact method's edge is concentrated in the skewed, boundary-bounded regime -- small strata with rare signals -- that dominates large-scale multiple testing and multi-study pooling.

\begin{figure}[H]
	\centering
	\includegraphics[width=\textwidth]{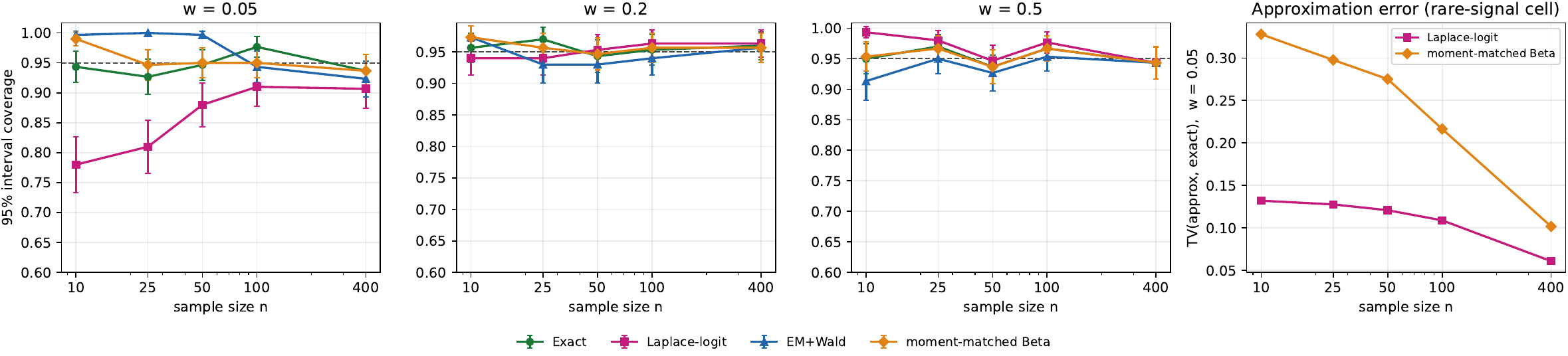}
	\caption{Frequentist calibration of the $95\%$ interval for the mixing weight ($R=300$ datasets per cell, two-groups Gaussian, flat prior). First three panels: coverage versus $n$ at $w=0.05,0.2,0.5$. The exact interval holds nominal coverage throughout; the logit-Laplace interval drops to $0.78$ at $n=10$, $w=0.05$; EM+Wald and the moment-matched Beta reach nominal coverage only through over-wide or (for the Beta, $59\%$ of the time) infeasible intervals. Right: total-variation distance of the two Beta-support approximations to the exact posterior in the rare-signal cell -- the moment-matched Beta is \emph{further} from the truth than the Laplace approximation despite comparable coverage, confirming that coverage alone does not certify an interval.}
	\label{fig:calibration}
\end{figure}

Because the interval is Bayesian, its frequentist coverage in principle depends on the prior, so we check that the flat-prior result is not a convenient one. \autoref{tab:prior_sensitivity} recomputes the exact interval's coverage across the grid under a Jeffreys $\mathpzc{Beta}(\tfrac12,\tfrac12)$ and a deliberately informative rare-signal $\mathpzc{Beta}(1,4)$ (prior mean $0.2$). The mean coverage stays at $0.95$ under all three priors; the flat and Jeffreys priors hold near-nominal coverage in every cell, and the informative prior even sharpens the rare-signal corner ($n{=}10$, $w{=}0.05$: $0.94\!\to\!0.98$), deforming coverage only at $n{=}10$, $w{=}0.5$ ($0.89$), the one cell where its mean is far from the truth. The calibration of \autoref{fig:calibration} is thus a property of the exact posterior, not an artefact of the prior.

\begin{table}[H]
	\centering
	\begin{tabular}{lccc}
		\toprule
		prior on $w$ & mean coverage & worst cell & rare-signal cell \\
		 & (whole grid) & (min over grid) & ($n{=}10$, $w{=}0.05$) \\
		\midrule
		flat $\mathpzc{Beta}(1,1)$ & $0.953$ & $0.927$ & $0.943$ \\
		Jeffreys $\mathpzc{Beta}(\tfrac{1}{2},\tfrac{1}{2})$ & $0.955$ & $0.930$ & $0.980$ \\
		informative $\mathpzc{Beta}(1,4)$ & $0.957$ & $0.890$ & $0.983$ \\
		\bottomrule
	\end{tabular}
	\caption{Prior sensitivity of the exact $95\%$ credible interval's frequentist coverage, over a $4\times3$ grid of sample sizes $n\in\{10,25,50,100\}$ and weights $w\in\{0.05,0.2,0.5\}$ ($R=300$ datasets per cell). Coverage is near nominal for the flat and Jeffreys priors throughout; only the deliberately informative $\mathpzc{Beta}(1,4)$ under-covers, and only at its worst cell $n{=}10$, $w{=}0.5$, where its prior mean $0.2$ is farthest from the truth.}
	\label{tab:prior_sensitivity}
\end{table}

Two caveats bound this advantage. First, at an actual boundary ($w=0$ or $1$) with fixed components the maximum-likelihood/EM estimate correctly returns the boundary, and it is the \emph{Bayesian} interval that then under-covers, because the prior pulls it into the interior; the boundary is a caution, not a place where the exact method wins. Second, the gain concerns the \emph{posterior of the proportion} and does not automatically transfer to every downstream summary. The per-observation local FDR depends on $w$ only through the mild ratio $wf/(wf+(1-w)g)$, so integrating over the posterior of $w$ and substituting its point estimate give almost identical answers: in a matched simulation ($f=\mathcal N(3,1)$, $g=\mathcal N(0,1)$, $w=0.15$, $R=300$) the exact and Efron-style plug-in local FDR realise the same false-discovery rate (to three decimals) and nearly the same power (agreeing to two decimals, and differing only in the third) for every $n\ge 50$. The exact machinery is worth the effort for the global proportion and its uncertainty, for model evidence, and for empirical-Bayes fitting (below), but not for per-item thresholding, where a point estimate already suffices.

\subsection{Application I: signal prevalence in a small-study meta-analysis}\label{sec:app_meta}

The regime in which the calibration advantage of \autoref{sec:calibration} pays off is \emph{small strata}, and a multilevel meta-analysis is a clean real-world instance of the model of \autoref{fig:model}. We take the education compendium of \citet{Konstantopoulos2011}, distributed as \texttt{metadat::dat.konstantopoulos2011}: $56$ standardised school-level effects of an intervention on academic achievement, nested in $11$ districts. Within a district the schools are a null/non-null mixture; the target is $w_m$, the proportion of that district's schools with a genuine effect. Each district has only $3$--$11$ schools -- a regime in which a point estimate is nearly useless.

\autoref{fig:app_meta} shows the result. With a flat prior (so that EM is maximum likelihood), \emph{EM collapses onto a boundary} -- reporting $w_m$ at, or within a per-cent of, $0$ or $1$, i.e.\ ``no school'' or ``essentially every school'' has an effect, with no usable interval -- for $8$ of the $11$ districts, precisely because a handful of all-null or all-significant effects maximise the likelihood at the edge. The exact posterior instead returns a proper, appropriately wide $95\%$ credible interval for every district (all $11$ in $9$~ms total), correctly conveying that with $n=3$--$11$ the prevalence is barely identified rather than certainly $0$ or $1$. The interval is not itself an approximation: a data-augmentation Gibbs sampler reproduces the exact posterior to three figures (largest district: exact mean $0.312$ vs Gibbs $0.310$) but takes $270$~ms against the exact $0.9$~ms -- a $300\times$ gap \emph{per district} that becomes decisive across a compendium of thousands. Fitting a shared prior to pool across districts is only weakly identified at $11$ strata; the controlled partial-pooling demonstration is the synthetic study of \autoref{fig:hierarchical}.

\begin{figure}[H]
	\centering
	\includegraphics[width=\textwidth]{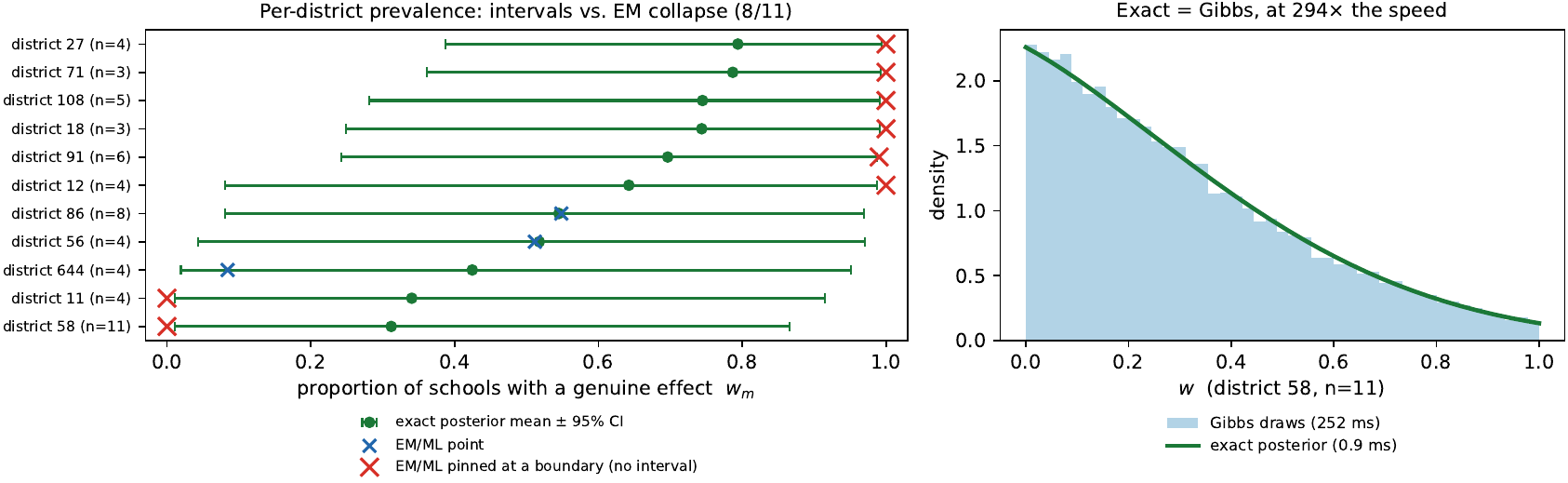}
	\caption{Real hierarchical meta-analysis (Konstantopoulos 2011, $11$ school districts). Left: per-district proportion $w_m$ of schools with a genuine effect -- exact posterior mean with $95\%$ credible interval (green) versus the EM/ML point, which is pinned at a boundary and yields no interval (red) for $8$ of $11$ districts. Right: for the largest district the exact posterior (curve) matches $32{,}000$ post-burn-in Gibbs draws (histogram) but is computed in $0.9$~ms against the sampler's $270$~ms ($300\times$).}
	\label{fig:app_meta}
\end{figure}

\subsection{Application II: pathway-level dysregulation prevalence}\label{sec:app_pathway}

The most direct biological use is to read off, for a curated gene set, what fraction of its genes are dysregulated between two conditions -- an effect-size-like quantity that the usual over-representation and enrichment tests, which report only an enrichment $p$-value against a hard significance cut-off, never estimate. We map the $7129$ probes of the Golub leukemia Hu6800 array \citep{Golub1999} to gene symbols through the GEO platform annotation (GPL80), collapse to $5171$ genes, and group them by the $50$ MSigDB Hallmark pathways \citep{Liberzon2015}. Because ALL versus AML is a very strong global contrast, the bulk of the $z$-values is over-dispersed; we therefore adopt Efron's empirical null \citep{Efron2004} (robust centre $-0.25$, scale $1.73$) so that ``dysregulated'' means beyond the typical gene, and fit the non-null scale once by evidence ($\hat\sigma_1=1.62$). The $50$ pathways are exchangeable strata, so rather than assume a prior on their prevalences we \emph{fit} a shared $\mathpzc{Beta}(a,b)$ by the same evidence maximisation across all pathways at once (empirical Bayes), obtaining $\mathpzc{Beta}(1.80, 2.03)$ (mean $0.47$); each pathway's prevalence is then its exact posterior under this fitted prior.

Each pathway's partially pooled posterior then gives its dysregulation prevalence with a closed-form $95\%$ interval (\autoref{fig:app_pathway}). The ranking is biologically coherent: the most dysregulated Hallmark pathways are \textsc{e2f\_targets} ($0.86$, CI $[0.66,0.98]$), \textsc{myc\_targets} ($0.82$), \textsc{tnfa\_signaling\_via\_nfkb} ($0.79$), \textsc{mtorc1\_signaling} and \textsc{il6\_jak\_stat3\_signaling} ($0.71$) -- the proliferation and inflammatory-signalling programs that distinguish the two leukemias -- while tissue-irrelevant sets (\textsc{spermatogenesis}, \textsc{uv\_response\_dn}) sit near $0.17$. Two advantages over the alternatives are visible. First, EM saturates: for the strongest pathways it reports a bare $w=1$ (``every gene dysregulated'') -- for $5$ of the $50$ sets -- whereas the pooled posterior reads, e.g., $0.86\,[0.66,0.98]$. Second, the prior is fitted, not assumed: the pathways shrink toward the evidence-fitted $\mathpzc{Beta}(1.80, 2.03)$ (mean $0.47$), and the smallest sets shrink most, so a prevalence read off a handful of genes -- \textsc{angiogenesis} at $n=26$, say -- is regularised toward the corpus rather than taken at face value, while still carrying the widest interval ($[0.16, 0.88]$). The whole $50$-pathway fit is deterministic and runs in milliseconds; that its intervals are trustworthy is certified by the known-truth benchmark of the next section.

\begin{figure}[H]
	\centering
	\includegraphics[width=\textwidth]{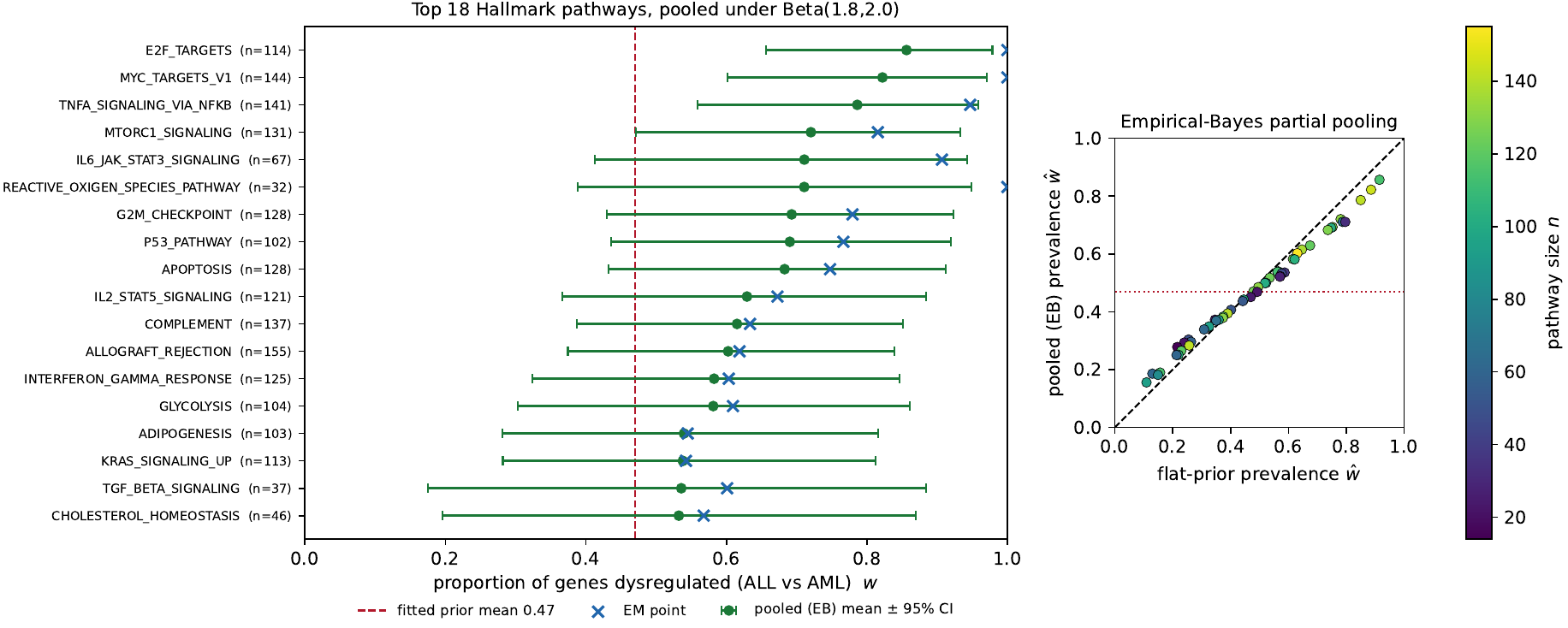}
	\caption{Per-pathway dysregulation prevalence on the Golub leukemia data (ALL vs AML), $50$ MSigDB Hallmark pathways mapped to gene symbols, with an empirical null and a shared prior fitted by empirical Bayes. Left: the $18$ most dysregulated pathways -- partially pooled posterior mean with $95\%$ interval (green) under the evidence-fitted $\mathpzc{Beta}(1.8,2.0)$ prior (mean marked by the dashed line), and the EM point (blue), which saturates at $1$ for the strongest pathways. Right: partial pooling -- each pathway's flat-prior prevalence against its pooled estimate, coloured by pathway size; estimates shrink toward the fitted prior mean, the smallest sets most.}
	\label{fig:app_pathway}
\end{figure}

\subsection{Application III: gene-panel prevalence with known ground truth}\label{sec:app_genomics}

The prevalence intervals of \autoref{sec:app_pathway} are only useful if they are calibrated. Real data rarely comes with a known non-null fraction, so to \emph{measure} calibration on realistic data we build a plasmode from the Golub leukemia CASI expression matrix \citep{Golub1999} ($7128$ genes, $47$ ALL vs $25$ AML). Null genes are drawn as permutation nulls -- a random $12$-vs-$12$ split of the arrays, whose two-sample $z$ (mapped through the $t$ CDF) is exactly $N(0,1)$ -- and signal genes are the $412$ genuinely differential genes ($|z|>4$ on the full cohort) re-evaluated on a $12{+}12$ subsample, so their $z$ is real but realistically attenuated (median $|z|\approx 3$). A ``gene panel'' of size $n$ mixes a known number of signal and null genes; the true prevalence $w$ and every gene's label are known, while the $z$-values are real.

\autoref{fig:app_genomics} reports, over $R=300$ panels per cell, how well each method recovers $w$. The exact $95\%$ interval is calibrated across the grid ($0.95$--$0.98$), whereas the logit-Laplace interval under-covers in the rare-signal corner (down to $0.77$ at $n=20$, $w=0.05$) and the EM$+$Wald interval attains its coverage only by being far too wide (pinned at $1.0$). The single exception is $n=50$, $w=0.20$, where the exact coverage dips to $0.85$: with more signal genes the misspecification of the Gaussian-alternative model for the real, non-Gaussian effect distribution begins to bias the estimate -- a genuine limit, at which the exact method is nonetheless still the best-calibrated of the three. The exact computation is again deterministic and $214\times$ faster than the Gibbs sampler.

\begin{figure}[H]
	\centering
	\includegraphics[width=\textwidth]{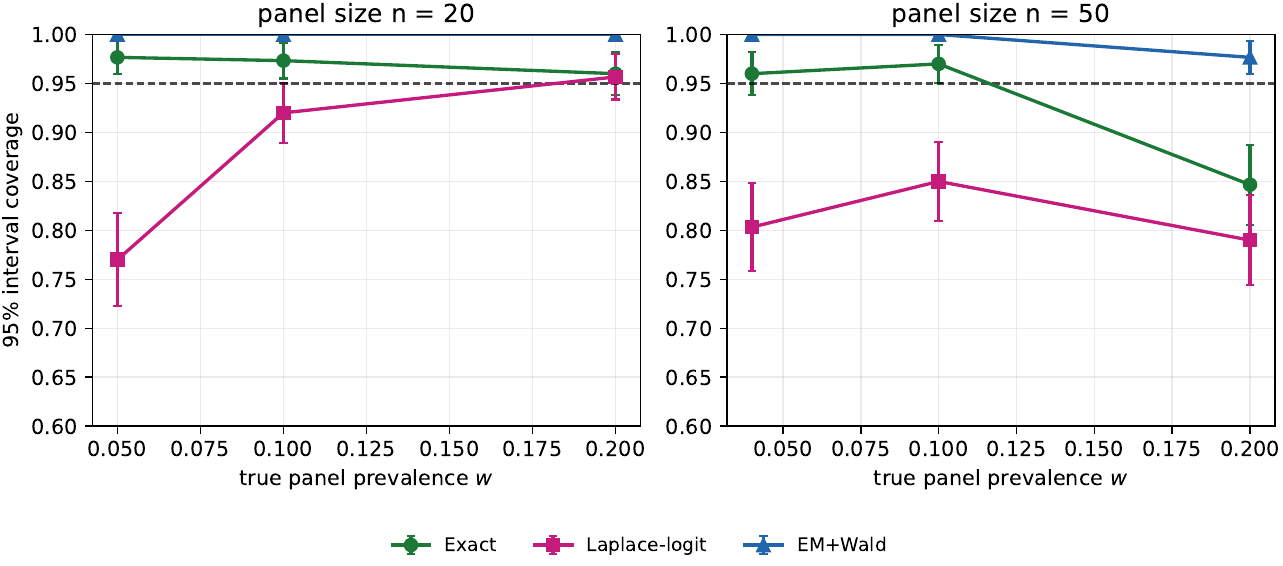}
	\caption{Genomics plasmode from the Golub leukemia data with known ground truth ($R=300$ gene panels per cell). Frequentist coverage of the $95\%$ interval for the true panel prevalence $w$, at panel sizes $n=20$ and $50$. The exact interval is calibrated ($0.95$--$0.98$ across five of the six cells, dipping to $0.85$ in the high-signal $n{=}50$, $w{=}0.20$ corner); logit-Laplace under-covers in the rare-signal corner; EM$+$Wald over-covers with far-too-wide intervals.}
	\label{fig:app_genomics}
\end{figure}

\subsection{Real data: the two-groups model on prostate microarray $z$-values}

We apply the method to the $6033$ gene-level $z$-values of the well-known prostate-cancer microarray study \citep{Singh2002}, a standard testbed for large-scale two-groups inference \citep{Efron2008}. The null component is the theoretical null $g = \mathcal{N}(0,1)$; the non-null component $f = \mathcal{N}(0, \sigma_1^2)$ is a dispersed, symmetric alternative whose scale is chosen by maximising the \emph{exact} marginal evidence (empirical Bayes), giving $\hat\sigma_1 = 1.63$. Here the weight prior is the flat $\mathpzc{Beta}(1,1)$: a single proportion carries no information with which to identify a hyperprior, unlike the multi-stratum meta-analysis and pathway settings above, where the shared prior on the weights is itself fitted by empirical Bayes. The exact posterior of the non-null proportion $w$ has mean $0.178$ with a closed-form $95\%$ interval $[0.148, 0.211]$ -- an interval a Gibbs sampler reproduces to within $0.001$ but that EM cannot provide at all -- computed in milliseconds. The implied $\hat\pi_0 = 0.82$ is specific to this symmetric single-alternative model and is lower than the $\approx\!0.93$ returned by Storey's estimator on the same data; the purpose here is not to re-estimate $\pi_0$ but to show that the entire posterior of the proportion, with calibrated uncertainty, together with the per-gene local FDRs (all $6033$ at once, $59$ genes below $0.2$), issues from a single deterministic pass. \autoref{fig:prostate} shows the fitted mixture, the per-gene local FDR, and the exact-versus-Gibbs posterior of $w$.

\begin{figure}[H]
	\centering
	\includegraphics[width=\textwidth]{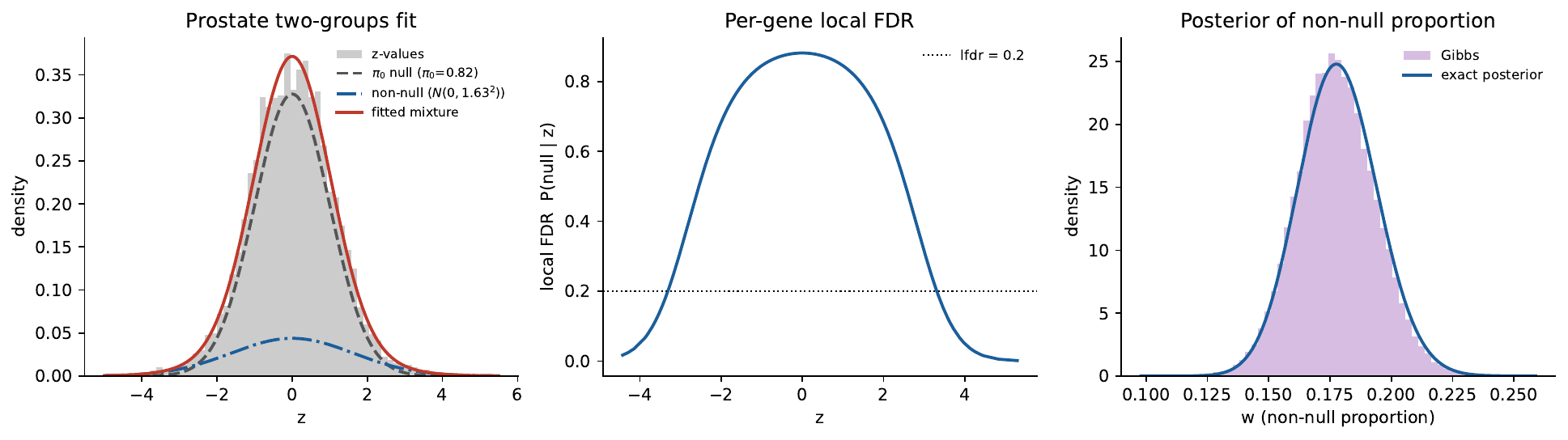}
	\caption{Prostate $z$-values. Left: fitted two-groups mixture (evidence-optimal $\hat\sigma_1 = 1.63$). Middle: exact per-gene local FDR $P(\text{null}\mid z)$. Right: exact posterior of the non-null proportion $w$ against Gibbs samples.}
	\label{fig:prostate}
\end{figure}

To place these numbers against established practice we ran Efron's \texttt{locfdr} \citep{Efron2007} and Strimmer's \texttt{fdrtool} \citep{Strimmer2008} on the same $z$-values (\autoref{fig:locfdr}). On the quantity that drives decisions -- the per-gene local FDR -- our exact method and \texttt{locfdr} under its theoretical null agree almost perfectly: the two local-FDR curves have Spearman rank correlation $0.983$, and their discovery sets at a local-FDR threshold of $0.2$ overlap with Jaccard index $0.915$ ($59$ genes versus $54$). The estimates of the global null proportion, by contrast, spread from $0.82$ to $0.998$ across methods. This spread is expected rather than alarming: $\pi_0$ is not identifiable from the $z$-values alone without an assumption on the alternative near the origin \citep{Efron2008,GenoveseWasserman2004}. Our alternative is a symmetric Gaussian $\mathcal N(0,\sigma_1^2)$ \emph{centred at zero}, so it overlaps the null and softly attributes part of the central bulk to the non-null component, raising the non-null weight to $w=0.178$ and so lowering $\pi_0$ to $0.82$. The other methods instead invoke Efron's ``zero assumption'' -- no non-null density in a neighbourhood of the origin -- which assigns the whole centre to the null and so reports a higher $\pi_0$: Storey's estimator \citep{Storey2002} and \texttt{locfdr}'s theoretical null give $\approx\!0.93$, and \texttt{locfdr}'s empirical null and \texttt{fdrtool}, which fit a \emph{wider} null, give $\approx\!0.99$. The lower $\pi_0$ is thus a different accounting of the centre, not extra signal, and it does not propagate into more discoveries: the model expects $\approx\!1076$ non-null genes, but these are diffuse, high-local-FDR genes that never cross a decision threshold; only the $59$ tail genes below local FDR $0.2$ do (the $25$ below $0.1$ being exactly those \texttt{locfdr} flags), which is why the discovery sets agree despite the gap in $\pi_0$. The spread is a modelling choice shared by every method, not an error of any one; what distinguishes the exact method is that it returns the full posterior of $\pi_0$ with a calibrated $95\%$ interval ($[0.789,0.852]$), whereas \texttt{locfdr} offers only a delta-method standard error and \texttt{fdrtool}/Storey a point estimate. We stress that on a single dataset of $6033$ genes every method has ample data, so the small-sample calibration advantage of \autoref{sec:calibration} is not what separates them here; the exact method's contribution on this benchmark is that its gene-level decisions match the standard tool while its global summary comes with deterministic, quantified uncertainty.

\begin{figure}[H]
	\centering
	\includegraphics[width=\textwidth]{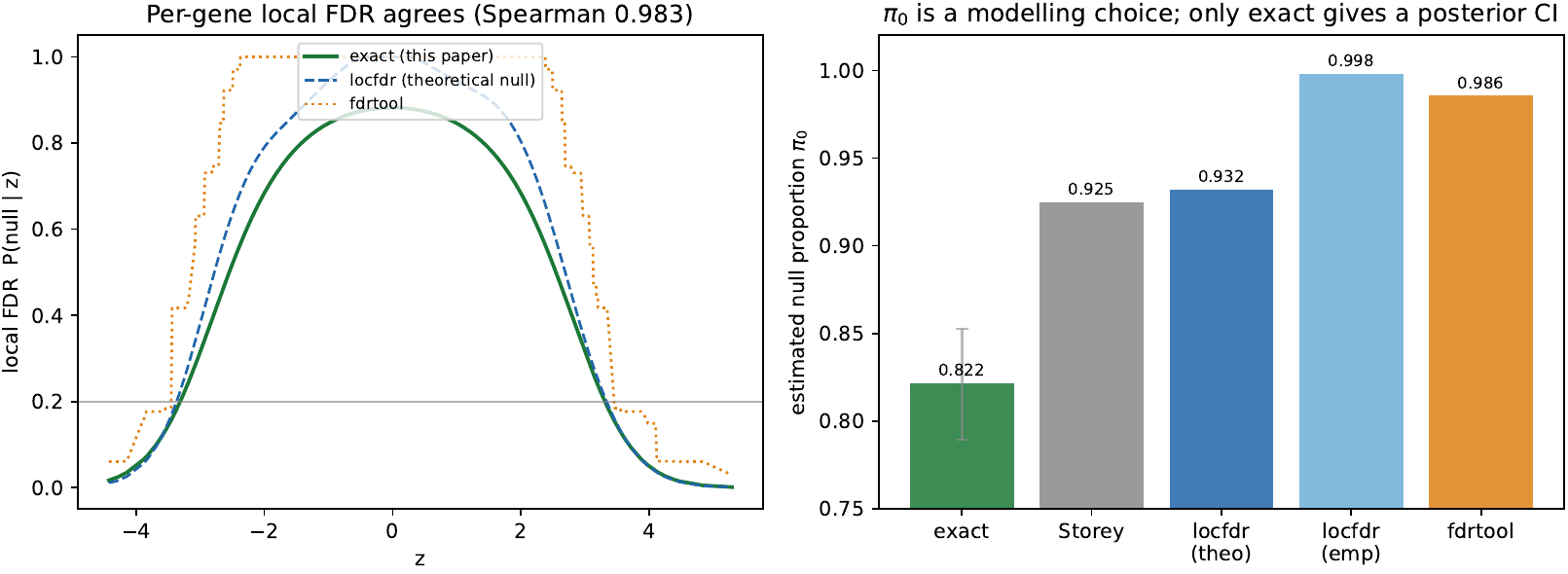}
	\caption{Exact two-groups model versus \texttt{locfdr} and \texttt{fdrtool} on the prostate $z$-values. Left: the per-gene local FDR of the exact method coincides with \texttt{locfdr}'s under a theoretical null (Spearman $0.983$); \texttt{fdrtool}'s spline estimate is more jagged. Right: estimated null proportion $\pi_0$ across methods -- it ranges from $0.82$ to $0.998$ depending on the null/alternative model, and only the exact method attaches a posterior credible interval (error bar).}
	\label{fig:locfdr}
\end{figure}

\subsection{Hierarchy: partial pooling across studies}

The model of \autoref{fig:model} places a shared prior over many strata. We simulate a meta-analysis of $M = 150$ studies, each on only $8$ to $119$ observations -- the small-stratum regime where borrowing strength across studies actually matters -- with study-specific non-null proportions $w_m \sim \mathpzc{Beta}(2,6)$, and fit the shared prior by maximising the \emph{total} exact evidence, a smooth deterministic objective solved in $163$ evaluations in about $6$~s with no inner MCMC or EM loop. The fitted prior recovers the truth (mean $0.257$ versus $0.25$), and the resulting partially pooled posterior means reduce the root-mean-square error over the unpooled per-study maximum-likelihood estimates by $29\%$ overall -- and by $35\%$ for the smallest third of the studies against $12\%$ for the largest, shrinkage buying the most where the per-study data are scarcest -- with $98\%$ credible-interval coverage of the true $w_m$ (\autoref{fig:hierarchical}). This comparison mixes two effects, the informative fitted prior and the pooling of information across studies, and the exact percentage varies with the simulation; what matters here is that empirical-Bayes pooling is a single deterministic evidence maximisation, not a nested sampling problem.

\begin{figure}[H]
	\centering
	\includegraphics[width=\textwidth]{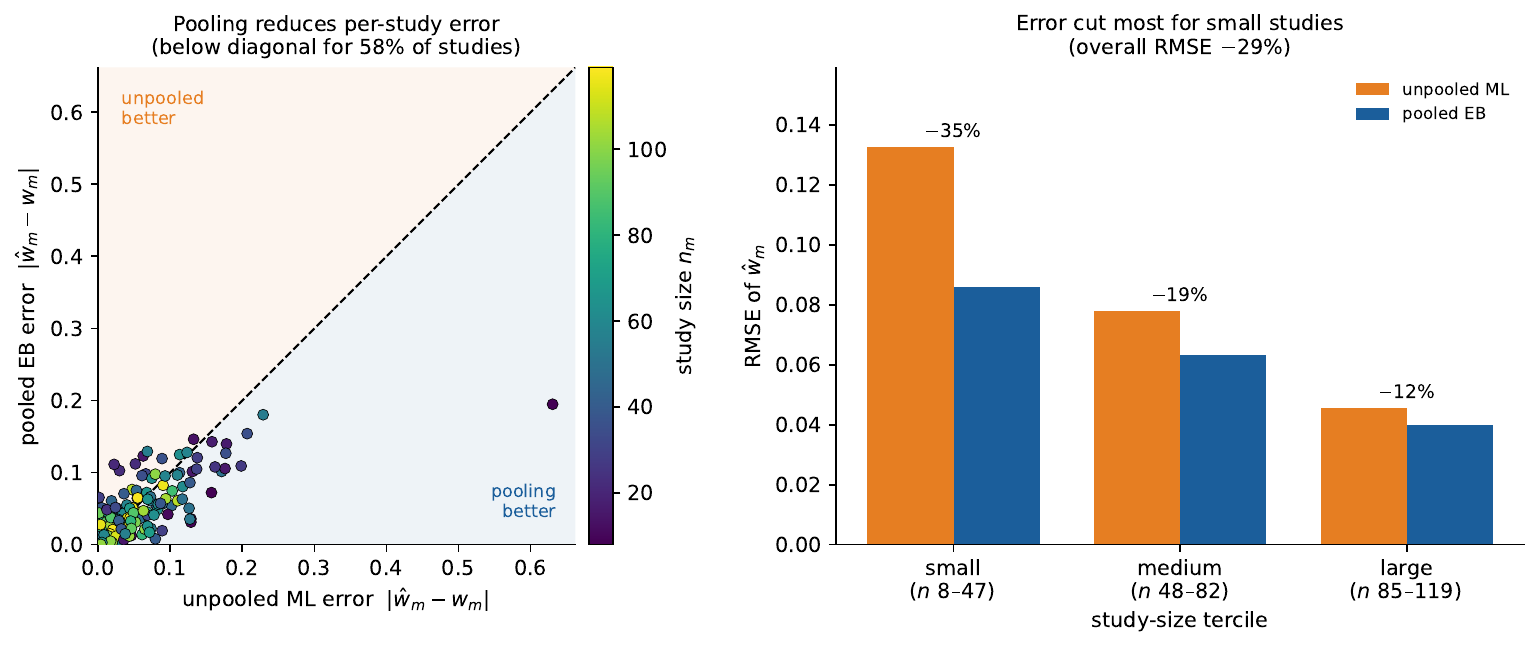}
	\caption{Empirical-Bayes partial pooling across $150$ small studies. Left: per-study absolute error of the pooled empirical-Bayes mean against the unpooled maximum-likelihood estimate, coloured by study size -- points below the diagonal (here $58\%$ of studies, and every study with a large unpooled error) are ones pooling improves. Right: root-mean-square error by study-size tercile -- pooling cuts the error by $35\%$ for the smallest third of studies and $29\%$ overall, the gain shrinking as the per-study sample grows.}
	\label{fig:hierarchical}
\end{figure}

\subsection{The empirical-Bayes objective is deterministic}\label{sec:ebdet}

The evidence surface that drives the empirical-Bayes fits above is not only cheap but \emph{noise-free}, which matters when it is optimised. \autoref{fig:ebdet} fits the alternative scale $\sigma_1$ by maximising the marginal evidence on a single two-groups dataset ($n=300$, $w=0.2$). Exact marginalisation returns a smooth, unimodal objective and a single reproducible maximiser $\hat\sigma_1 = 1.876$ over the whole $60$-point grid in $75$~ms. An importance-sampling Monte-Carlo estimate of the \emph{same} evidence is unbiased but noisy near the flat optimum, so its argmax is itself a random variable: over $25$ seeds it scatters with standard deviation $0.8$ grid steps at $M=2000$ draws ($1.5$ grid steps at $M=200$), at roughly $290\times$ the cost. The gain is thus one of reproducibility and speed of hyperparameter selection, not of point-estimate bias -- a Monte-Carlo evidence would, with enough draws, find the same optimum on average.

\begin{figure}[H]
	\centering
	\includegraphics[width=\textwidth]{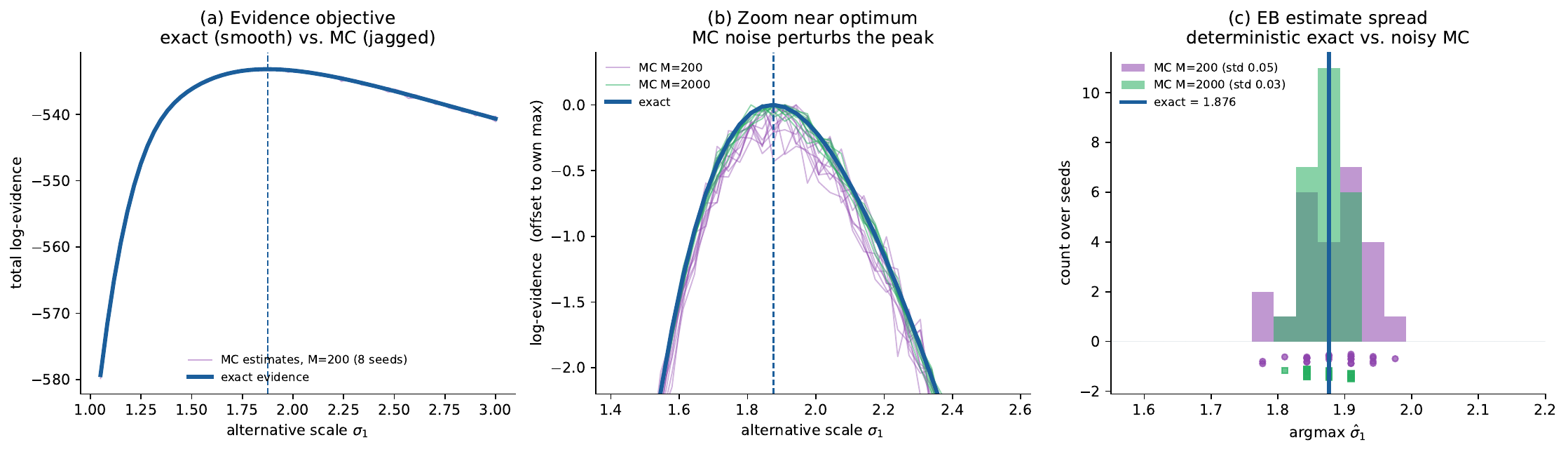}
	\caption{The exact evidence gives a deterministic empirical-Bayes objective. Left: the exact log-evidence as a function of $\sigma_1$ (smooth, single maximiser $\hat\sigma_1=1.876$). Right: importance-sampling Monte-Carlo estimates of the same evidence, offset to their own maxima; the estimated optimum scatters across seeds, so Monte-Carlo empirical Bayes carries a selection variance that exact marginalisation removes.}
	\label{fig:ebdet}
\end{figure}

\subsection{The three-group model on real data}

Finally we exercise the three-group model (down-regulated, null, up-regulated) with a Beta-Liouville prior on a block of the prostate data (\autoref{fig:kmixture}). The purpose here is to validate the $K$-component algorithm on real data, not to claim a model selection: the exact log-domain program \eqref{eq:kmix_final} agrees with Monte-Carlo integration over the simplex to $0.02\%$. The separable product $\prod_j e_j(c_j)$ of Remark~\ref{rem:correction}, which ignores the disjointness constraint, returns a materially different value -- on well-separated synthetic data the discrepancy exceeds half a log-unit -- confirming numerically that the joint dynamic program is needed for $K \ge 3$. Because the exact marginal is a genuine model evidence it \emph{could} drive Bayesian model comparison; we deliberately report no two- versus three-group Bayes factor, because a fair comparison must grant each model its own evidence-optimal hyperparameters and prior, and under that constraint the ranking is sensitive to those choices. A calibrated model-comparison study is left to future work.

\begin{figure}[H]
	\centering
	\includegraphics[width=\textwidth]{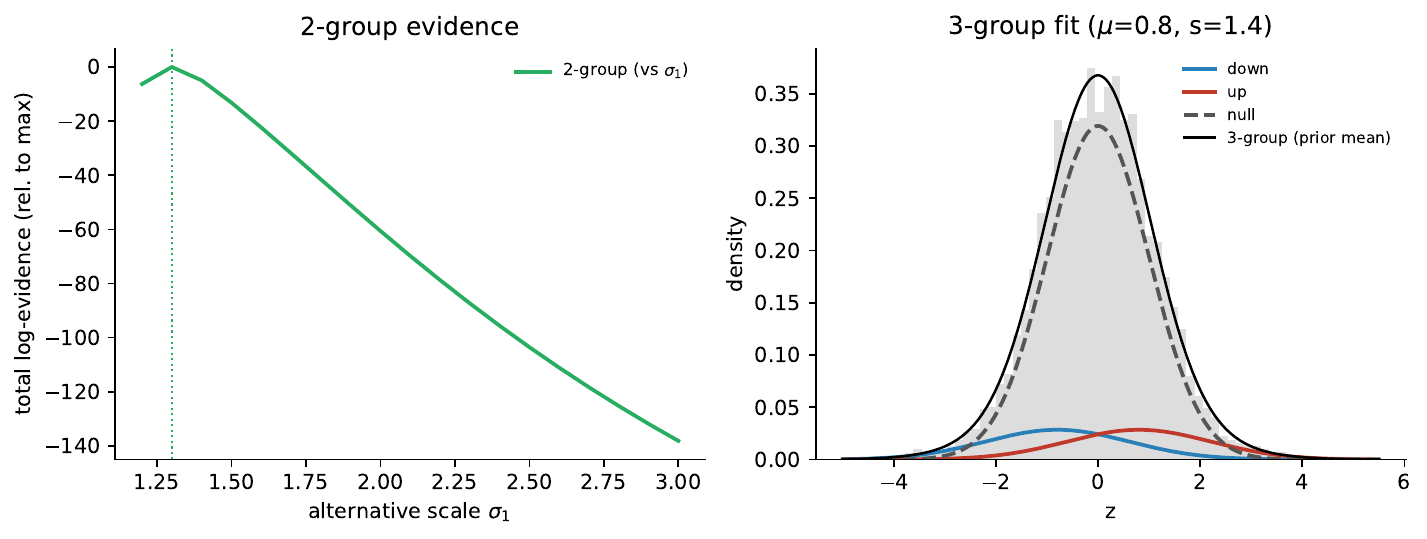}
	\caption{Three-group Beta-Liouville model on the prostate data. Left: total evidence of the two-group model as its alternative scale varies. Right: the evidence-fitted three-group density (down / null / up components).}
	\label{fig:kmixture}
\end{figure}

\section{Conclusion}\label{sec:conclusion}

We have developed exact, deterministic algorithms for marginalising the mixing weight of a hierarchical two- and $K$-component mixture under Beta and Beta-Liouville priors. For the two-component model the marginal likelihood is a Beta-weighted sum of elementary symmetric polynomials computable by an $O(n^2)$ log-domain dynamic program (with an $O(n\log^2 n)$ greedy-FFT scheme available as a complexity result, though numerically usable only for small $n$), and, as a corollary, the exact posterior of the weight is a finite mixture of Beta distributions -- giving closed-form posterior means, variances, credible intervals, and per-observation posterior class probabilities (local FDR) without sampling. This closed form is more than convenient: the posterior over the latent signal count is log-concave and unimodal by Newton's inequalities (\autoref{thm:logconcave}), its moments are exact ratios of the marginal likelihood at shifted hyperparameters (\autoref{prop:moments}), and it responds monotonically to the data (\autoref{prop:monotone}) -- guarantees the boundary-collapsing approximations it replaces do not carry. On the prostate-cancer data the method reproduces MCMC posteriors to three or four significant figures at a fraction of the cost for small-to-moderate $n$, and delivers calibrated credible intervals for the mixing proportion in the small-sample, rare-signal regime where cheap Gaussian/Laplace approximations are not (\autoref{sec:calibration}), through a smooth, deterministic empirical-Bayes objective that scales across the strata of a hierarchical model. This is where the method helps most, in the applications it is built for (\autoref{sec:app_meta}--\ref{sec:app_genomics}). In a real multilevel meta-analysis EM collapses to boundary point estimates for most small strata. In a pathway-level readout of leukemia expression the exact posterior returns a calibrated dysregulation prevalence per Hallmark pathway that enrichment tests and EM do not. And on a leukemia-derived gene-panel benchmark with known ground truth the exact interval alone is calibrated. Where instead the data are abundant (the $6033$-gene prostate benchmark) the method matches standard tools such as \texttt{locfdr} on the gene ranking and merely adds a posterior interval for the null proportion.

For $K \ge 3$ the coefficient in the multivariate expansion is a disjoint-support symmetric sum, which does not factor into a product of univariate elementary symmetric polynomials; the exact marginal is then obtained by a joint count dynamic program of cost $O(n^{K-1})$ in storage -- very practical for the small $K$ that arise in applications, particularly within a hierarchy of moderate strata. Natural extensions include estimating the component parameters $\theta$ jointly with the weights by maximising the exact evidence, and exploiting the exact per-item posteriors for calibrated large-scale decision rules.

\subsection*{Reproducibility}\label{sec:reproducibility}

A reference Python implementation of every algorithm, benchmark, and experiment in this paper -- together with the validation suite against brute-force, quadrature, arbitrary-precision, and Monte-Carlo references -- accompanies this manuscript.

\subsection*{Acknowledgements}

\newpage
\appendix

\section{Beta-Liouville distribution}

\subsection{Definition: stochastic representation}
Let there be 2 independent r.vs.: $\mathbf{d} \sim \mathpzc{Dir}_{K-2}(\mathbf{\boldsymbol{\hat{\alpha}}})$ (where $\boldsymbol{\hat{\alpha}} = (\alpha_1, \dots, \alpha_{K-1})$) and $\beta \sim \mathpzc{Beta}(\gamma, \alpha_{K})$. Then, as introduced by \cite{Gupta1996}, r.v. $\bold{w}$ with components

\begin{equation}\label{eq:stochastic_representation}
w_i = \beta d_i,~i = 1, \dots, K-2, ~~ w_{K-1} = \beta (1 - \sum_{i=1}^{K-2} d_i)
\end{equation}
follows the Beta-Liouville distribution.
$$
\bold{w} \sim \mathpzc{BL}_{K-1}(\boldsymbol{\alpha}, \gamma), ~\boldsymbol{\alpha} = (\alpha_1, \dots, \alpha_{K-1})
$$
Note that $\sum_{i=1}^{K-1} w_i < 1$ and to complete the simplex we have to introduce $w_K = 1 - \sum_{i=1}^{K-1} w_{i}$ explicitly. See that $w_K = 1 - \beta$. It follows that for $K=2$ $\mathpzc{BL}$ is just a $\mathpzc{Beta}$ distribution.

\subsection{Probability density function}\label{app:bl_pdf}
As $\bold{d}$ and $\beta$ are independent, their joint PDF is simply product for the Dirichlet PDF and Beta PDF:

$$
g(\bold{d}, \beta) = \underbrace{\frac{1}{B(\boldsymbol{\hat{\alpha}})} \left(1-\sum_{i=1}^{K-2}d_i\right)^{\alpha_{K-1} - 1} \prod_{i=1}^{K-2} d_i^{\alpha_i - 1}}_{\mathpzc{Dir_{K-2}}} \underbrace{\frac{1}{B(\gamma,\alpha_K)} \beta^{\gamma - 1} (1 - \beta)^{\alpha_K - 1}}_{\mathpzc{Beta}},
$$
where $B(\gamma, \alpha_K)$ is the Beta function and $B(\boldsymbol{\hat{\alpha}})$ is the multivariate Beta function.
See that number of r.v.s is unchanged. The inverse of the transformation \autoref{eq:stochastic_representation} is:

$$
\beta = \sum_{i=1}^{K-1} w_i, \hspace{0.5em} d_i = \frac{w_i}{\beta} =  \frac{w_i}{\sum_{i=1}^{K-1} w_i}, 
$$

The determinant of Jacobian of the transformation:
\begin{align*}
|J_{(\bold{d}, \beta) \rightarrow \bold{w}}| = \left| \begin{bmatrix}
	\frac{\partial w_1}{\partial d_1} & \frac{\partial w_1}{\partial d_2} & \cdots & \frac{\partial w_1}{\partial d_{K-2}} & \frac{\partial w_1}{\partial \beta} \\
	\frac{\partial w_2}{\partial d_1} & \frac{\partial w_2}{\partial d_2} & \cdots & \frac{\partial w_2}{\partial d_{K-2}} & \frac{\partial w_2}{\partial \beta} \\
	\vdots & \vdots & \ddots & \vdots & \vdots \\
	\frac{\partial w_{K-1}}{\partial d_1} & \frac{\partial w_{K-1}}{\partial d_2} & \cdots & \frac{\partial w_{K-1}}{\partial d_{K-2}} & \frac{\partial w_{K-1}}{\partial \beta} 
\end{bmatrix}\right| = & \left| \begin{bmatrix}
\beta & 0 & 0 & \cdots & 0 & d_1 \\
0 & \beta & 0 & \cdots & 0 & d_2 \\
0 & 0 & \beta & \cdots & 0 & d_3 \\
\vdots & \vdots & \vdots & \ddots & \vdots & \vdots \\
0 & 0 & 0 & \cdots & \beta & d_{K-2} \\
- \beta & - \beta & - \beta & \cdots & - \beta & 1 - \sum_{i=1}^{K-1} d_i
\end{bmatrix} \right| = \\
= &
\left| \begin{bmatrix}
	\beta & 0 & 0 & \cdots & 0 & d_1 \\
	0 & \beta & 0 & \cdots & 0 & d_2 \\
	0 & 0 & \beta & \cdots & 0 & d_3 \\
	\vdots & \vdots & \vdots & \ddots & \vdots & \vdots \\
	0 & 0 & 0 & \cdots & \beta & d_{K-2} \\
	0 & 0 & 0 & \cdots & 0 & 1
\end{bmatrix} \right|  = \beta^{K-2}.
\end{align*}

Hence, 

\begin{align}
	f(\bold{w}) = g\left(\frac{\bold{w}}{\beta}, \sum_{i=1}^{K-1}w_i\right) |J_{(\bold{d}, \beta) \rightarrow \bold{w}}| =  \frac{1}{B(\boldsymbol{{\hat{\alpha}}}) B(\alpha_{K}, \gamma)}  \left(\sum_{i=1}^{K-1} w_i \right) ^{\gamma - \sum_{i=1}^{K-1} \alpha_i} \left(\underbrace{1 - \sum_{i=1}^{K-1} w_i}_{w_{K}} \right) ^{\alpha_{K} - 1} \prod_{i=1}^{K-1} w_i^{\alpha_i-1} = \\ =  \frac{1}{B(\boldsymbol{{\hat{\alpha}}}) B(\alpha_{K}, \gamma)}  \left(\sum_{i=1}^{K-1} w_i \right) ^{\gamma - \sum_{i=1}^{K-1} \alpha_i} \prod_{i=1}^{K} w_i^{\alpha_i-1}.
\end{align}

Note that for $\gamma = \sum_{i=1}^{K-1} \alpha_i$ $\mathpzc{BL}$ agrees with $\mathpzc{Dir}$ distribution. 

\subsection{Moments}\label{app:bl_moments}

We want to find a mixed moment
$\EX\left[\prod_{i=1}^K w_i^{c_i}\right]$, where $\bold{c} \in \Re^{K}$. To this end, let's use the stochastic representation of $\mathpzc{BL}$ from \autoref{eq:stochastic_representation}:

\begin{align*}
\EX\left[\prod_{i=1}^K w_i^{c_i}\right] = \EX\left [w_K^{c_K} w_{K-1}^{c_{K-1}} \prod_{i=1}^{K-2} w_i^{c_i} \right] = 
\EX\left [\left(1 - \beta \right)^{c_K} \left(\beta \left(1 - \sum_{i=1}^{K-2} d_i\right)\right)^{c_{K-1}} \prod_{i=1}^{K-2} (\beta d_i)^{c_i} \right] = \\
=
\EX\left [\beta^{\sum_{i=1}^{K-1} c_i}\left(1 - \beta \right)^{c_K}  \prod_{i=1}^{K-1} d_i^{c_i} \right] = \EX\left[\beta^{\sum_{i=1}^{K-1} c_i}\left(1 - \beta \right)^{c_K}  \right] \EX\left[\prod_{i=1}^{K-1} d_i^{c_i} \right].
\end{align*}

This a product of 2 independent r.v.s, namely $\beta \sim \mathpzc{Beta}(\gamma, \alpha_K) = \mathpzc{Dir}(\gamma, \alpha_K)$ and $d_i \sim \mathpzc{Dir}(\boldsymbol{\hat{\alpha}})$. The mixed distribution of the $\mathpzc{Dirichlet}$ r.v. is well-known:

$$
\EX\left[\prod_{i=1}^{K-1} d_i^{c_i} \right] = \frac{B(\boldsymbol{\hat{\alpha}} + \mathbf{\hat{c}})}{B(\boldsymbol{\hat{\alpha}})}, \hspace{0.5em} \EX\left[\beta^{\sum_{i=1}^{K-1} c_i}\left(1 - \beta \right)^{c_K}\right] = \frac{B(\gamma + \sum_{i=1}^{K-1} c_i, \alpha_K + c_K)}{B(\gamma, \alpha_K)},
$$
hence

\begin{equation}\label{eq:bl_mixed_moments}
	\EX\left[\prod_{i=1}^K w_i^{c_i}\right] = \frac{B(\boldsymbol{\hat{\alpha}} + \mathbf{\hat{c}})}{B(\boldsymbol{\hat{\alpha}})} \frac{B(\gamma + \sum_{i=1}^{K-1} c_i, \alpha_K + c_K)}{B(\gamma, \alpha_K)}.
\end{equation}

\subsection{Covariance}

Given the \autoref{eq:bl_mixed_moments}, it is trivial to compute covariance between $w_i$ and $w_j$:
\begin{equation*}
	cov(w_i, w_j) = \EX[w_i w_j] - \EX[w_i]\EX[w_j] 
\end{equation*}

The exact formula will differ for varying $i$ and $j$:
\begin{itemize}
	\item If both $i$ and $j$ less thatn $K$:
	\begin{equation}
		cov(w_i, w_j) = \frac{\alpha_i \alpha_j \gamma \left( (\sum_{k=1}^{K-1} \alpha_k) \alpha_K - \gamma (\gamma + \alpha_K + 1) \right)}{(\sum_{k=1}^{K-1} \alpha_k)^2 (\gamma + \alpha_K)^2 (\gamma + \alpha_K + 1)(\sum_{k=1}^{K-1} \alpha_k + 1)}
	\end{equation}
	\item If $i < K$ and $j = K$:
	\begin{equation}
	cov(w_i, w_K) =	\frac{-\alpha_i \gamma \alpha_K}{(\sum_{k=1}^{K-1} \alpha_k)(\gamma + \alpha_K)^2 (\gamma + \alpha_K + 1)}
	\end{equation}
\end{itemize}

\subsection{Interpretation}
Unlike the $\mathpzc{Dir}$ distribution ($\gamma = \sum_{i=1}^{K-1} \alpha_i$), $\mathpzc{BL}$ introduces a dependence-modifying term $\left(\sum w_i\right)^{\gamma - \sum \alpha_i}$, controlled by $\gamma$. This allows modeling of both positive and negative associations for $w_i, w_j$, where $i < K$ and $j < K$. Note that covariances between the last component and other components is always negative for both $\mathpzc{BL}$ and $\mathpzc{Dir}$.

\subsubsection*{Beta-Liouville as a stick-breaking process}

Alternatively, $\mathpzc{BL}$ under a certain parametrization can be viewed as a stick-breaking process. For a unit stick, define $\beta_i \sim \text{Beta}(a_i, b_i)$. The stick-breaking weights are:
\[
\pi_i = \beta_i \prod_{j=1}^{i-1} (1 - \beta_j), \quad \sum_{i=1}^K \pi_i = 1.
\]

Let's construct $\mathpzc{BL}$ as:
\begin{enumerate}
	\item Draw global scale: $Y \sim \text{Beta}(a, b)$;
	\item Allocate $Y$ via Dirichlet: $\mathbf{V} \sim \text{Dirichlet}(\alpha_1, \dots, \alpha_{K-1})$;
	\item Assign weights:
	$$
	w_i = Y V_i \quad (i < K), \quad w_K = 1 - Y \sum_{i=1}^{K-1} V_i.
	$$
\end{enumerate}

Because $\mathbf{V}$ is a full Dirichlet vector on the $(K-1)$-simplex, $\sum_{i=1}^{K-1} V_i = 1$ and hence $w_K = 1 - Y$. Setting $a = \gamma$ and $b = \alpha_K$, so that $Y \sim \mathpzc{Beta}(\gamma, \alpha_K)$, the construction reproduces exactly the stochastic representation of \autoref{eq:stochastic_representation} (with $Y = \beta$ and $\mathbf{V} = \mathbf{d}$) and therefore attains $\mathpzc{BL}$ form. (In particular $Y$ plays the role of the canonical scale $\beta$; this is not a distinct stick-breaking recursion but a restatement of the stochastic representation.)

\newpage
\section{Proofs of the structural properties}\label{app:properties_proofs}

\begin{proof}[Proof of \autoref{lem:newton} (Newton's inequalities)]
	Since every $t_i > 0$, the polynomial $p(z) = \prod_{i=1}^n (1 + t_i z) = \sum_{m=0}^n e_m(\mathbf{t})\, z^m$ has only real (negative) roots. Two operations preserve real-rootedness of a polynomial with only real roots: differentiation, by Rolle's theorem, and coefficient reversal $p(z) \mapsto z^{\deg p}\, p(1/z)$. A composition of $m-1$ differentiations, one reversal, and $n-m-1$ further differentiations carries $p$, up to a positive scalar, to the quadratic
	\begin{equation*}
		E_{m-1} + 2\,E_m\, z + E_{m+1}\, z^2, \qquad E_k = e_k(\mathbf{t})/\tbinom{n}{k},
	\end{equation*}
	(the binomial denominators arise precisely from the repeated differentiation of $z^k$). This quadratic still has only real roots, so its discriminant is nonnegative, $(2E_m)^2 \ge 4\,E_{m-1}E_{m+1}$, which is \eqref{eq:newton}. Equality at every $m$ forces a double root throughout, i.e.\ all $t_i$ equal; see \citet[Ch.~2]{hardy1952inequalities} for the classical account.
\end{proof}

\begin{proof}[Log-concavity of the Beta--Binomial factor in \autoref{thm:logconcave}]
	Log-concavity of $\{b_m\}$ is equivalent to the ratio $r(m) := b_{m+1}/b_m$ of \eqref{eq:bb_ratio} being non-increasing in $m$. Let $\varphi(x) := \log\!\big(x/(x-1)\big)$, which is decreasing on $(1,\infty)$ because $x/(x-1) = 1 + (x-1)^{-1}$ is. From \eqref{eq:bb_ratio},
	\begin{equation*}
		\log r(m) - \log r(m+1) = \underbrace{\big[\varphi(n-m) - \varphi(n-m-1+\beta)\big]}_{\text{needs } \beta \ge 1} + \underbrace{\big[\varphi(m+2) - \varphi(m+1+\alpha)\big]}_{\text{needs } \alpha \ge 1}.
	\end{equation*}
	When $\beta \ge 1$ we have $n-m-1+\beta \ge n-m$, so the first bracket is $\ge 0$ because $\varphi$ is decreasing; when $\alpha \ge 1$ we have $m+1+\alpha \ge m+2$, so the second bracket is likewise $\ge 0$. Hence $r(m) \ge r(m+1)$, and $\{b_m\}$ is log-concave. Each bracket also shows the roles of the two hypotheses are separate: violating $\alpha \ge 1$ or $\beta \ge 1$ can make the corresponding bracket negative, and log-concavity then fails.
\end{proof}

\section{Proposition evidences}
This appendix contains some easy-to-run scripts for those in doubt of some of combinatorial identities derived in the main part of the paper.

\subsection*{Proposition on coronation and deposition}
To see that the proposition holds true for some values of $n, d, i$, run the program:

\begin{lstlisting}[language=Python,style=code]
from sympy.functions.combinatorial.numbers import stirling, binomial

n = 100
for i in range(n + 1 + 1):
	for d in range(n + 1 - i  + 1):
		s1 = 0
		s2 = 0
		for j in range(d, n + 1 - i + 1):
			s1 += stirling(j + i  - d, i, kind=2) * \
			      stirling(n + 1, j + i, kind=1) *\
			      binomial(j + i, d)
			s2 += stirling(n + 1 - i, j, kind=1) *\
			      binomial(j, d) * \
			      i **(j - d)
		s2 *= binomial(n + 1, i)
		if s1 != s2:
			raise Exception("Oh no! The conjecture is false!")
\end{lstlisting}

\subsection*{Proposition on the values of $c_i$ coefficients}
To see that proposition holds true for some values of $n, i$, run the program:
\begin{lstlisting}[language=Python,style=code]
import sympy as sp
from sympy.functions.combinatorial.numbers import stirling, binomial

def calc_lhs(beta, n, i):
    total = 0
    for j in range(i, n + 2):  
        sign = (-1) ** (i + n + 1)
        s1 = stirling(j, i, kind=2)
        inner_sum = 0
        for k in range(j, n + 2):  
            s2 = stirling(n + 1, k, kind=1)
            b = binomial(k, j)
            term = s2 * b * (beta ** (k - j))
            inner_sum += term
        total += sign * s1 * inner_sum
    return total

def calc_rhs(beta, n, i):
    exponent = n + 1 - i
    sign = (-1) ** exponent
    rf = sp.RisingFactorial(beta + i, exponent)
    binom = binomial(n + 1, i)
    return sign * rf * binom


beta = sp.symbols('beta')
for n in range(1, 20):
    for i in range(0, n):
        lhs = calc_lhs(beta, n, i)
        rhs = calc_rhs(beta, n, i)
        diff = rhs - lhs
        if diff.simplify() != 0:
            raise Exception('On, no! The conjecture is false!')
\end{lstlisting}

\section{Polynomial multiplication pseudocode}\label{app:polymult_code}
\begin{algorithm}
	\caption{PolyMult: Multiply two polynomials}
	\begin{algorithmic}[1]
		\Procedure{PolyMult}{$A, B, m$}
		\State \textbf{Input:} Polynomials $A$ and $B$; threshold degree $m$
		\State \textbf{Output:} Polynomial $C(z) = A(z) \cdot B(z)$
		\State $d_A \gets \deg(A)$
		\State $d_B \gets \deg(B)$
		\State $d_{\min} \gets \min\{d_A, d_B\}$
		\If{$d_{\min} > m$}
		\State \Return \Call{FFTMultiply}{$A, B$}
		\Else
		\State \Return \Call{OrdinaryMultiply}{$A, B$}
		\EndIf
		\EndProcedure
	\end{algorithmic}
	\begin{algorithmic}[1]
		\Procedure{FFTMultiply}{$A, B$}
		\State \textbf{Input:} Polynomials $A$ and $B$
		\State \textbf{Output:} Polynomial $C(z) = A(z) \cdot B(z)$
		\State $n_A \gets \deg(A) + 1$
		\State $n_B \gets \deg(B) + 1$
		\State $n \gets \text{smallest power of }2 \text{ such that } n \geq n_A + n_B - 1$
		\State \textbf{Pad} $A$ and $B$ with zeros so that their lengths equal $n$
		\State $A \gets \textsc{FFT}(A, n)$ \Comment{Compute the FFT of $A$}
		\State $B \gets \textsc{FFT}(B, n)$ \Comment{Compute the FFT of $B$}
		\For{$i \gets 0$ to $n-1$}
		\State $C[i] \gets A[i] \times B[i]$
		\EndFor
		\State $C \gets \textsc{InverseFFT}(C, n)$ \Comment{Compute the inverse FFT to get coefficients}
		\State \textbf{return} $C[0 \ldots (n_A+n_B-2)]$ \Comment{Truncate to the correct degree}
		\EndProcedure
	\end{algorithmic}
	\begin{algorithmic}[1]
		\Procedure{OrdinaryMultiply}{$A, B$}
		\State \textbf{Input:} Polynomials $A$ and $B$
		\State \textbf{Output:} Polynomial $C(z) = A(z) \cdot B(z)$
		\State $n_A \gets \deg(A)$
		\State $n_B \gets \deg(B)$
		\State Initialize $C$ as an array of zeros of length $(n_A + n_B + 1)$
		\For{$i \gets 0$ to $n_A$}
		\For{$j \gets 0$ to $n_B$}
		\State $C[i+j] \gets C[i+j] + A[i] \times B[j]$
		\EndFor
		\EndFor
		\State \textbf{return} $C$
		\EndProcedure
	\end{algorithmic}
\end{algorithm}

\newpage
\section{The greedy-FFT variant: complexity and numerical failure}\label{app:sect_fft_asstmptotic}

The greedy-FFT pairing computes the elementary symmetric polynomials with the best asymptotic complexity of any backend here, $O(n\log^2 n)$; we prove this below. It is nonetheless \emph{not} a usable way to evaluate the marginal in double precision, and we recommend the $O(n^2)$ log-domain DP throughout. This appendix records both facts.

\subsection*{Complexity}
\begin{proposition}
	Given $n$ linear polynomials, the greedy FFT pairing algorithm computes their product in $O(n \log^2 n)$ time.
\end{proposition}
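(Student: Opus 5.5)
The plan is to analyse the greedy pairing as a Huffman-type merge process in which every polynomial has degree equal to the number of linear factors it contains. I will show that the multiset of degrees evolves like a balanced binary tree, so the merges can be grouped into $O(\log n)$ rounds. Each round will cost $O(n\log n)$ in total via FFT multiplication.

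First I will record the cost model. \textsc{FFTMultiply} on polynomials of degrees $d_A,d_B$ pads to a power of two $N<2(d_A+d_B+1)$, so it costs $O((d_A+d_B)\log(d_A+d_B))$. \textsc{OrdinaryMultiply} below the fixed threshold $m$ costs $O(m(d_A+d_B))$, which is $O(d_A+d_B)$ for constant $m$. Either way, one merge producing a polynomial of degree $D$ costs $O(D\log D)\le O(D\log n)$.

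Second, the structural step. Because extract-min always takes the two lowest-degree polynomials, I will argue the following invariant: at any moment all degrees in the heap lie within a factor of $2$ of each other, i.e. $\max\le 2\min$. Initially all degrees equal $1$. Merging the two smallest, $a\le b$, produces $a+b$. We have $a+b\le 2b\le 2\cdot(\text{new min})$ provided $b$ is at most the new minimum, and $a+b\ge$ every remaining degree divided by $2$, since remaining degrees are at most $2a\le a+b$. Hence the invariant persists. The consequence is that the process proceeds in phases: a phase is a maximal stretch during which the minimum degree stays in $[2^k,2^{k+1})$. Every polynomial present at the start of a phase is merged at most once in it, so the degrees created in one phase sum to at most $n$.

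Third, I will assemble the bound. In each phase the merges produce polynomials whose degrees sum to at most $n$, so the phase costs $\sum O(D\log n)=O(n\log n)$. The minimum degree at least doubles between phases and never exceeds $n$, so there are at most $\lceil\log_2 n\rceil+1$ phases. The total is $O(n\log^2 n)$, and the heap operations add only $O(n\log n)$ over the $n-1$ merges.

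The main obstacle is making the phase argument airtight when $n$ is not a power of two. There, odd leftovers mean a polynomial of degree $2^{k+1}$ can coexist with ones still in $[2^k,2^{k+1})$, and one must check that no input polynomial is charged in more than $O(1)$ merges per phase. If the factor-$2$ invariant proves delicate, I will fall back on the charging argument directly. Each linear factor is charged $O(\log n)$ per merge it participates in. Every merge containing a factor at least doubles, up to a constant, the degree of that factor's polynomial, because by the invariant its partner is at least half its size. So each factor is involved in $O(\log n)$ merges, which gives $n\cdot O(\log n)\cdot O(\log n)$.
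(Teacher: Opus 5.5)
Your proof is correct, and it is more self-contained than the paper's; the route differs mainly in how the merges are grouped and in supplying a lemma the paper omits. The paper groups the internal nodes of the merge tree by level. Nodes on one level have disjoint leaf sets, so their degrees sum to at most $n$, and the paper multiplies by a depth of $O(\log n)$. That depth bound is only asserted (``the greedy strategy \dots keeps the tree balanced'').

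Your invariant $\max\le 2\min$ is exactly the missing justification. When a polynomial of degree $d$ is merged, its partner has degree at least $d/2$, so the degree of the polynomial containing any fixed leaf grows by a factor of at least $3/2$ per merge. Every leaf therefore has depth at most $\log_{3/2} n$. Your fallback charging argument is then the paper's level-sum bound in the form $\sum_v d_v=\sum_{\text{leaves}}\mathrm{depth}\le n\log_{3/2}n$, now with the depth actually proven. You also account for the ordinary-multiplication threshold and the heap overhead, which the paper's proof ignores.

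Your primary phase argument groups merges by time rather than by tree level, and it has a constant-factor slip. The sentence ``so the degrees created in one phase sum to at most $n$'' is false as stated. For $n=3$, the single phase with minimum degree $1$ creates degrees $2$ and $3$, a total of $5>3$. The cause is that a polynomial created in a phase can be re-merged in the same phase as the larger operand.

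The repair is short. Within phase $k$ the smaller operand $a$ has degree $<2^{k+1}$, while everything created in the phase has degree $\ge 2^{k+1}$. So $a$ was present at the start of the phase, and each such polynomial is consumed at most once. By the invariant $b\le 2a$, so each created degree is at most $3a$, and the phase total is at most $3n$. This still gives $O(n\log n)$ per phase.
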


\begin{proof}
	Suppose we are given $n$ polynomials
	$$
	P_i(z) = 1 + t_i z, \quad i=1,\dots,n.
	$$
	The goal is to compute
	$$
	P(z) = \prod_{i=1}^n (1+t_i z)
	$$
	by repeatedly merging polynomials using FFT-based multiplication when beneficial. We build a binary merge tree where each leaf is one of the linear factors and each internal node corresponds to the product of its two children.
	
	At a generic merge step, assume two polynomials $A(z)$ and $B(z)$ of degrees $d_A$ and $d_B$ are multiplied. The FFT-based multiplication computes the product in
	$$
	O\bigl((d_A+d_B) \log (d_A+d_B)\bigr)
	$$
	time.
	
	By merging the smallest-degree polynomials first (the \emph{greedy} strategy), we ensure that the degrees of the polynomials in intermediate steps grow as slowly as possible.
	
	Let $d_v$ denote the degree of the polynomial computed at an internal node $v$ of the merge tree; for its two children $A$ and $B$ we have $d_v = d_A + d_B$. The greedy strategy merges the two lowest-degree polynomials at each step, which keeps the tree balanced with depth $L = O(\log n)$. Group the internal nodes by tree level. The polynomials residing at a fixed level correspond to disjoint groups of the original $n$ leaves, so the sum of their degrees at that level is at most $n$. Summing over the $L = O(\log n)$ levels therefore gives
	$$
	\sum_{v \in \text{internal nodes}} d_v = O(n \log n).
	$$
	(The often-quoted figure $2n-1$ is the \emph{number} of nodes and the final degree accounting, \emph{not} the sum of intermediate degrees; each of the $\log_2 n$ levels contributes total degree up to $n$, and one verifies numerically that $\sum_v d_v = n\log_2 n$ exactly for $n$ a power of two.)

	Thus, the total time $T(n)$ is bounded by
	$$
	T(n) = \sum_{\text{internal nodes } v} O\Bigl( d_v \log (d_v) \Bigr).
	$$
	Since the maximum degree encountered is at most $n$, we have $\log(d_v) \le \log n$ for all internal nodes. Consequently,
	$$
	T(n) \le \log n \sum_{v} O(d_v) = O\bigl(\log n \cdot n \log n\bigr) = O(n \log^2 n).
	$$
\end{proof}

\subsection*{Why it is not numerically usable}

The FFT necessarily works in the linear domain, where the elementary symmetric polynomials span a combinatorial range ($e_m \sim \binom{n}{m}$, amplified further for well-separated components). The error of an FFT convolution scales with the \emph{largest} coefficient, so the small $e_m$ -- which dominate $\log Z$ when the mixing weight is small -- lose all significance, and round-off can even drive them negative. \autoref{fig:accuracy} scores the three double-precision backends against a $60$-digit exact-arithmetic reference. The log-domain DP holds $\sim\!10^{-12}$ relative error for all $n$ and all separations. The plain-domain DP stays accurate until it cleanly overflows to \texttt{NaN} near $n\approx10^3$. The FFT, however, returns \emph{finite but grossly wrong} values well before overflow -- already off by $\sim\!200$ nats at $n=200$, and by $\sim\!100$ nats even at zero component separation. Because the corruption is silent (a plausible finite number, not an obvious \texttt{NaN}), the FFT backend is dangerous to use for the marginal; our implementation now surfaces the corrupted coefficients as \texttt{NaN} rather than sanitising them. We therefore keep the greedy FFT for its complexity interest only and default to the log-domain DP everywhere.

\begin{figure}[H]
	\centering
	\includegraphics[width=\textwidth]{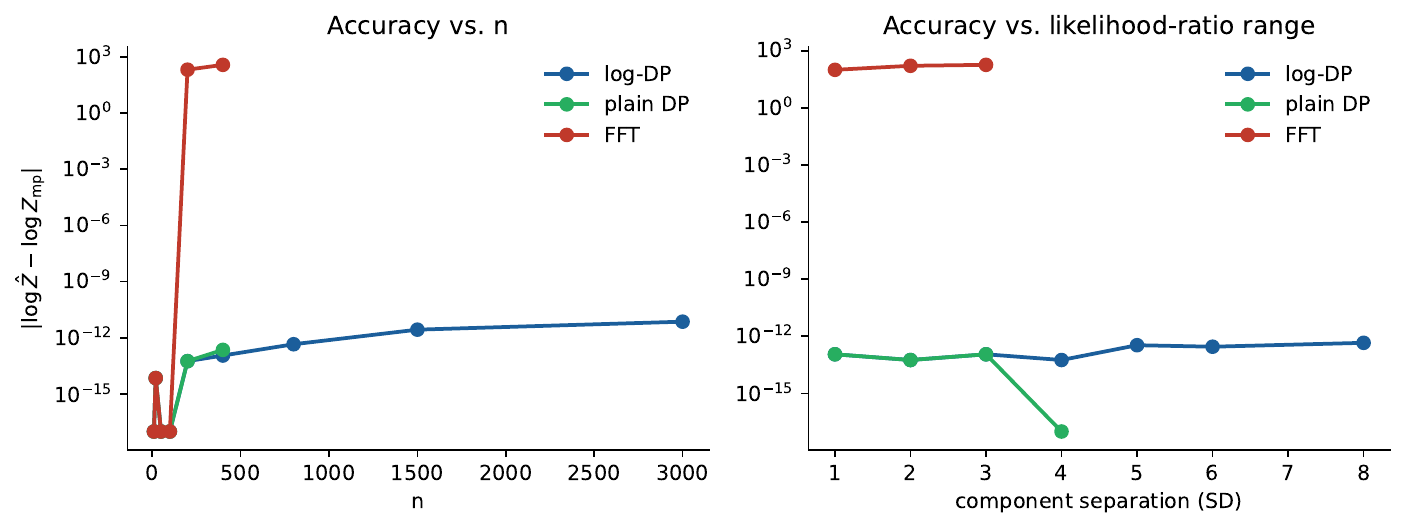}
	\caption{Relative error of $\log Z$ against a $60$-digit exact-arithmetic reference. The log-domain DP holds $\sim\!10^{-12}$ accuracy for all $n$ (left) and all component separations (right). The plain-domain DP stays accurate until it cleanly overflows to \texttt{NaN}; the FFT returns \emph{finite but grossly wrong} values well before overflow, because linear-domain convolution error scales with the largest elementary symmetric polynomial and annihilates the small ones.}
	\label{fig:accuracy}
\end{figure}

\newpage
\pagestyle{fancy}
\fancyhf{}
\rhead{}
\lhead{\leftmark}
\rfoot{\thepage}
\bibliography{refs}
\pagestyle{fancy}
\fancyhf{}
\lhead{\leftmark}
\rfoot{\thepage}
\newpage
\rhead{\leftmark}

\end{document}